\theoremstyle{plain}
\newtheorem{theorem}{Theorem}[section]
\newtheorem{lemma}[theorem]{Lemma}
\newtheorem{proposition}{Proposition}
\theoremstyle{remark}
\newtheorem{definition}[theorem]{Definition}
\begin{document}

\begin{frontmatter}
\title{Markov Renewal is All You Need!}
\runtitle{Markov Renewal Proportional Hazards Model}

\begin{aug}
\author[E]{\fnms{Elvis  Han}~\snm{Cui}\thanks{\textbf{Corresponding author.}}\ead[label=e1]{elviscuihan@g.ucla.edu}},

\address[E]{Department of Neurology,  University of California, Irvine\printead[presep={, }]{e1}}

\end{aug}

\begin{abstract}
Transition probability estimation plays a critical role in multi-state modeling, especially in clinical research. This paper investigates the application of semi-Markov and Markov renewal frameworks to the EBMT dataset, focusing on six clinical states encountered during hematopoietic stem cell transplantation. By comparing Aalen-Johansen (AJ) and Dabrowska-Sun-Horowitz (DSH) estimators, we demonstrate that semi-Markov models, which incorporate sojourn times, provide a more nuanced and temporally sensitive depiction of patient trajectories compared to memoryless Markov models. The DSH estimator consistently yields smoother probability curves, particularly for transitions involving prolonged states. We use empirical process theory and Burkholder-Davis-Gundy inequality to show weak convergence of the estimator. Future work includes extending the framework to accommodate advanced covariate structures and non-Markovian dynamics.
\end{abstract}

\begin{keyword}[class=MSC]
\kwd{Markov renewal processes}
\kwd{Semi-Markov processes}
\kwd{Dabrowska-Sun-Horowitz estimator}
\kwd{Transition probabilities}
\kwd{Burkholder-Davis-Gundy inequality}
\end{keyword}

\end{frontmatter}

\section{Review of Development of Semi-Markov Models}

In survival studies, modeling a patient's experience as a stochastic process with finite states has become a cornerstone of statistical analysis, particularly in clinical research \citep{andersen2002multi, buhler2023multistate, therneau2024using, morsomme2025assessing}. These so-called multi-state models, which rely on transition probabilities, provide a flexible framework for understanding disease progression, treatment efficacy, and survival trajectories \citep{hougaard1999multi, vasquez2024multistate}. From a probabilistic standpoint, Markov, semi-Markov, and Markov renewal processes constitute the foundational frameworks for multi-state modeling, with each introducing unique perspectives on temporal dynamics and memory. The foundational work on Markov renewal and semi-Markov processes began with \citet{pyke1961markov, pyke1961markov2}, who first introduced these concepts and rigorously formalized their properties. Pyke’s contributions established the mathematical framework for Markov renewal theory, laying the groundwork for its subsequent applications in survival analysis and clinical modeling. Notably, his work introduced essential tools to model systems where transitions depend on both the current state and the time spent in that state (sojourn time). Building on Pyke’s work, \citet{taga1963limiting} explored limiting distributions in Markov renewal processes, extending their theoretical applicability to a broader class of stochastic systems. Taga’s findings were pivotal in understanding the long-term behavior of such processes, particularly in systems characterized by recurrent events. Around the same time, \citet{ccinlar1969semi} developed a systematic study of semi-Markov processes on arbitrary spaces, emphasizing their generalization beyond Markov processes by allowing sojourn times to follow non-exponential distributions. This generalization made semi-Markov processes more suitable for modeling real-world systems where memory plays a crucial role. A particularly influential contribution came from \citet{cinlar1969markov}, who provided a comprehensive overview of Markov renewal processes. This work, later elaborated in Chapter 10 of \citet{cinlar2013introduction}, offered a detailed account of both theoretical properties and practical applications, solidifying the role of Markov renewal theory as a versatile modeling framework. Cinlar’s insights into the semi-Markov kernel and transition probabilities set the stage for future developments in estimation and inference. 

The introduction of semi-Markov processes to clinical trials was pioneered by \citet{weiss1965semi}. Their work demonstrated the utility of these models in capturing patient trajectories where the duration of time spent in each state significantly influenced transition probabilities. This innovation addressed a critical gap in traditional Markov models, which assumed memoryless transitions. Similarly, \citet{moore1968estimation} tackled the challenging problem of estimating transition distributions in Markov renewal processes, providing practical methods for parameter estimation and laying the foundation for future computational advances. In a groundbreaking dissertation, \citet{aalen1975statistical} explored semi-Markov models under the progressive assumption, where states are not revisited once left. Aalen’s contributions to counting processes and martingale theory marked a significant turning point, introducing tools that became integral to survival analysis. His work provided the theoretical underpinning for semi-Markov models, enabling their application to practical problems such as bone marrow transplant studies.

One of the most significant advancements in the field came from \citet{gill1980nonparametric}, who integrated counting processes, martingale theory, and stochastic integrals to study estimation in Markov renewal models with right-censored observations. This work not only formalized the estimation framework but also introduced powerful tools for handling incomplete data, a common issue in clinical research. Gill’s approach has since become a cornerstone for both parametric and non-parametric analysis in survival studies. Building on Gill’s foundation, \citet{gill1981testing} considers renewal testing problem under random censorship and \citet{voelkel1984nonparametric} extended martingale theory to progressive semi-Markov models, offering a nonparametric framework for inference. This extension was particularly useful for applications involving progressive illnesses, where patients transition through distinct disease states without returning to previous states. The semi-parametric Cox regression model for semi-Markov processes, introduced by \citet{dabrowska1994cox} and based on the dissertation of \citet{sun1992markov}, represented another major milestone. This model enabled researchers to incorporate covariates into semi-Markov models, significantly enhancing their flexibility and applicability to real-world data. While \citet{dabrowska1994cox} and \citep{dabrowska1995estimation, dabrowska1996nonparametric} provided a comprehensive framework for estimating transition probabilities, Dr. Dabrowska later expressed reservations about the Cox model’s assumptions in semi-Markov settings. These concerns led to the exploration of transformation models and advanced asymptotic results in her subsequent work \citep{dabrowska2006estimation,dabrowska2006estimation2,dabrowska2009estimation, dabrowska2012estimation, dabrowska2014multivariate}. These studies not only addressed limitations in existing models but also highlighted the importance of robust statistical methods in complex survival analyses. However, the lack of publicly available computational tools for implementing these models remains a barrier, which this paper aims to address.

The practical inadequacy of Markov models in certain clinical scenarios has been well-documented. For instance, \citet{andersen2000multi} demonstrated cases where the memoryless assumption failed to capture the nuanced dynamics of patient trajectories, necessitating the adoption of semi-Markov approaches. Similarly, \citet{shu2007asymptotic} established asymptotic results for a semi-Markov illness-death model, providing a rigorous theoretical foundation for analyzing progressive diseases. These contributions underscore the importance of semi-Markov models in accurately representing clinical pathways. Significant advancements have also been made in addressing practical challenges. \citet{satten1999fitting} developed methods for handling interval-censored data, while \citet{datta2002estimation} tackled the issue of dependent censoring. The introduction of nonparametric estimation techniques for non-homogeneous semi-Markov models by \citet{lucas2006nonparametric} further expanded the applicability of these models to diverse datasets. More recently, \citet{titman2010semi} introduced a novel framework using phase-type sojourn distributions, leveraging hidden Markov techniques for efficient computation. This approach has been particularly valuable for applications requiring detailed modeling of sojourn times. Comparative studies by \citet{yang2011parametric} and \citet{asanjarani2022estimation} provided critical insights into the strengths and limitations of different modeling approaches. Meanwhile, \citet{spitoni2012estimation} offered a systematic overview of estimation methods, highlighting both non-parametric and semi-parametric techniques. 

Applications of semi-Markov models in clinical research have demonstrated their versatility and impact. For example, \citet{aralis2016modeling} applied these models to dementia progression, while \citet{aralis2019stochastic} developed EM-type algorithms for estimating transition probabilities. The extension of landmark estimation frameworks to non-Markov models, including semi-Markov processes, by \citet{hoff2019landmark, maltzahn2020hybrid} has further enhanced their utility in dynamic prediction. The conditional Aalen-Johansen estimator introduced by \citet{bladt2023conditional} represents the latest advancement in this evolving field. This estimator offers a robust method for analyzing finite-state jump processes, addressing key limitations in traditional approaches. Comprehensive tutorials and software tools have also played a critical role in democratizing access to these methodologies \citep{meira2009multi}. Resources such as the \textit{mstate} package \citep{putter2007tutorial, putter2011tutorial} and monographs by \citet{cook2018multistate} and \citet{andersen2023models} provide invaluable guidance for both researchers and practitioners.

To formalize these processes, we assume observations for each individual form a Markov renewal process with a finite state space, $\{1, 2, \dots, r\}$ \citep[see][Chapter 10]{cinlar2013introduction}. Specifically, we observe a process \((X, T) = \{(X_n, T_n) : n \geq 0\}\), where \(0 = T_0 < T_1 < T_2 < \dots\) represent the times of transitions into states \(X_0, X_1, \dots, X_n \in \{1, 2, \dots, r\}\). For instance, in a bone marrow transplant (BMT) example, \(r = 6\), and \(X_n\) can take values from \{\text{TX, REC, AE, RECAE, REL, DEATH}\}. The sojourn time in state \(X_n\) is denoted by \(W_n = T_n - T_{n-1}\). Additionally, we observe a covariate matrix \(\mathbf{Z} = \{\mathbf{Z}_{ij} : i, j = 1, 2, \dots, r\}\), where each \(\mathbf{Z}_{ij}\) is itself a vector, enabling individualized modeling. Through these developments, semi-Markov models have bridged methodological rigor with clinical impact, offering tools to unravel the complexities of patient trajectories and inform precision medicine. By addressing their theoretical and computational challenges, researchers can continue to expand their applicability to diverse and impactful areas of clinical and survival research.

\section{Notations and Definitions}

In this section, we define the basic quantities used throughout the paper. Readers seeking a more comprehensive overview of inhomogeneous Markov processes and counting processes may refer to Chapter II in \citet{andersen2012statistical} and \citet{cui2022tutorial, cui2024tutorial}, while detailed discussions of Markov renewal processes can be found in Chapter 10 of \citet{cinlar2013introduction}.

\subsection{Basic Definitions and Quantities}

\begin{definition}[Markov Renewal Processes]\label{def:markov_renewal}
A \textit{Markov renewal process} is a stochastic process $\{(X_n, T_n), n \geq 0\}$, characterized as follows:
\begin{itemize}
    \item $X_n$ denotes the state of the process after the $n$-th transition, where $X_n \in \mathcal{X}$, and $\mathcal{X}$ is a finite state space with size $r$.
    \item $T_n$ represents the time of the $n$-th transition, with $T_0 = 0$ and $T_n < T_{n+1}$.
    \item $W_n = T_n - T_{n-1}$ denotes the sojourn time between the $(n-1)$-th and $n$-th transitions.
\end{itemize}
The process satisfies the Markov property, meaning that the conditional probability of the next state and transition time depends only on the current state:
\[
\mathbb{P}(X_{n+1} = j,\ T_{n+1} - T_n \leq t \mid X_0, T_0, \dots, X_n, T_n) = \mathbb{P}(X_{n+1} = j,\ T_{n+1} - T_n \leq t \mid X_n = i).
\]
This defines a joint distribution of the next state and the time until the next transition, governed by a \textbf{Markov renewal kernel} (or semi-Markov kernel).
\end{definition}

\begin{definition}[Semi-Markov Processes]
A \textit{homogeneous semi-Markov process} is a special case of a Markov renewal process where the stochastic process $\{X(t), t \geq 0\}$, tracking the state of the system over time, is defined as:
\[
X(t) = X_n, \quad T_n \leq t < T_{n+1}.
\]
\end{definition}

In a semi-Markov process, the sojourn time in each state can follow an arbitrary distribution, making it more general than a homogeneous continuous-time Markov process, where the sojourn times are exponentially distributed. Unlike Markov processes, the transitions between states are governed by the \textbf{semi-Markov kernel}, rather than transition intensities or probabilities directly:
\[
Q_{ij}(x) = \mathbb{P}(X_{n+1} = j, T_{n+1} - T_n \leq x \mid X_n = i),
\]
and we denote the semi-Markov kernel matrix as $\mathbf{Q} = [Q_{ij}]$. This kernel defines the probability of transitioning from state $i$ to state $j$ within time $x$. Importantly, we assume no self-transitions, i.e., $Q_{ii}(x) = 0$ for all $i$ and $x$.

\begin{definition}[Distribution and Survival Probabilities in State $i$]\label{def:G_H}
We define the following quantities, which are fundamental to subsequent developments:
\begin{align}
H_i(x) &= \sum_{k=1}^r Q_{ik}(x) = \mathbb{P}(W_n \leq x \mid X_{n-1} = i), \\
G_i(x) &= 1 - H_i(x)=\mathbb{P}(W_n>x|X_{n-1}=i).
\end{align}
\end{definition}

\begin{definition}[Transition Probabilities]
The \textit{transition probability matrix} $\mathbf{P}$ of a semi-Markov process $X$ is given by:
\begin{align}
\mathbf{P}(s, t) &= [P_{ij}(s, t)], \\
P_{ij}(s, t) &= \mathbb{P}(X(t) = j \mid X(s) = i).
\end{align}
\end{definition}

\textbf{Remark}: The semi-group property of Markov processes does not hold here because transitions depend not only on the current state but also on the duration spent in that state. For simplicity, we write $\mathbf{P}(t)$ and $P_{ij}(t)$ when $s = 0$, i.e., when the process starts at $0$ in state $i$.

\begin{definition}[Counting Process Formulation]\label{def:risk}
Let $(X, T)$ be a Markov renewal process as defined in Definition~\ref{def:markov_renewal}. We adopt the random censorship framework in \citet{gill1980nonparametric} and \citet{dabrowska1994cox}, where observation times are determined by a predictable 0-1 process $K(t) = \sum_{n=1}^\infty \mathbb{I}(T_n < t \leq C_n)$, with $T_n$ representing the $n$-th jump time and $C_n \in [T_n, T_{n+1}]$ denoting censoring variables. The observed number of $i \to j$ transitions before sojourn time $x$ is:
\[
N_{ij}(x) = \sum_{n=1}^\infty \mathbb{I}(X_n = j, X_{n-1} = i, W_n \leq x, K(T_n) = 1).
\]
The risk process associated with state $i$ is:
\[
Y_i(x) = \sum_{n=1}^\infty \mathbb{I}(X_{n-1} = i, W_n \geq x, K(x + T_{n-1}) = 1),
\]
which counts the number of observed sojourn times in state $i$ that are at least $x$.
\end{definition}

\begin{definition}[Transition-Specific Intensities]
Given $(X, T)$ and $W$, the \textit{transition-specific intensity} (or hazard) is defined as:
\[
\alpha_{ij}(x) = \lim_{\Delta x \downarrow 0} \frac{1}{\Delta x} \mathbb{P}(X_n = j, W_n \leq x + \Delta x \mid X_{n-1} = i, W_n \geq x).
\]
The matrix of transition intensities $\mathbf{A} = [A_{ij}]$ is defined as:
\[
A_{ij}(t) = \int_0^t \alpha_{ij}(u) du,
\]
with $A_{ii}(t) = 0$ for all $i$ and $t$. For simplicity, we assume the existence of $\alpha_{ij}$ such that $A_{ij}$ is absolutely continuous.
\end{definition}

\textbf{Remarks}: In Markov processes (homogeneous or inhomogeneous), the diagonal elements of $\mathbf{A}$ satisfy $A_{ii}(t) = -\sum_{j \neq i} A_{ij}(t)$. The term $\lambda_{ij}(t) = Y_i(t) \alpha_{ij}(t)$ is recognized as the intensity process of $N_{ij}(t)$, where $t$ refers to calendar time. Importantly, the likelihood function of a semi-Markov process closely resembles that of a Markov process by Jacod's formula (see Pages 680--681 in \citet{andersen2012statistical}). If the semi-Markov process is progressive (i.e., states are not revisited), we can replace the condition $W_n \geq t$ with the calendar time $T_n \leq t$, enabling the application of Andersen-Gill's counting process martingale framework. This transformation, referred to as Aalen's random time transformation, is widely discussed in the literature \citep{aalen1978nonparametric}.

\subsection{Transition Probabilities}

\begin{lemma} Let $\mathbf{B}$ be a diagonal matrix with elements $B_{i}=\sum_{j\not=i}A_{ij}$. The relation between $\mathbf{A}$ and $\mathbf{G}=[G_i]$ is given by the following product integral \citep{gill1990survey}:
    \begin{align}\label{eq:survival}
        \mathbf{G}(x)&=\Prodi_{u\in(0,x]}\left( \mathbf{I}-\mathbf{B}(du) \right)
    \end{align}
    where $\mathbf{I}$ is an $r\times r$ indentity matrix.
\end{lemma}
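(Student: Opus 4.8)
The plan is to exploit the diagonal structure of $\mathbf{B}$ to reduce the matrix product integral to $r$ independent scalar product integrals, and then to recognize each $B_i$ as the cumulative hazard of the sojourn distribution governed by $G_i$. Since $\mathbf{B}(u) = \mathrm{diag}(B_1(u), \dots, B_r(u))$ is diagonal for every $u$, and diagonal matrices commute and multiply entrywise, the integrator acts independently on each coordinate, so $\Prodi_{u\in(0,x]}(\mathbf{I} - \mathbf{B}(du))$ is itself diagonal with $i$-th entry equal to the scalar product integral $\Prodi_{u\in(0,x]}(1 - dB_i(u))$. It therefore suffices to establish, for each fixed state $i$, the scalar identity $G_i(x) = \Prodi_{u\in(0,x]}(1 - dB_i(u))$, and then to reassemble the diagonal.

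The first substantive step is to identify $B_i$ explicitly as the cumulative hazard of the sojourn time. Writing $H_i = \sum_{k} Q_{ik}$ for the sojourn-time distribution in state $i$ and $G_i = 1 - H_i$ for its survival function (Definition~\ref{def:G_H}), the defining conditional-probability limit for the transition-specific intensity yields $dA_{ij}(x) = dQ_{ij}(x)/G_i(x^-)$, since $\mathbb{P}(W_n \geq x \mid X_{n-1}=i) = G_i(x^-)$. Summing over all destinations $j \neq i$ and using $Q_{ii} \equiv 0$ gives
\begin{align*}
B_i(x) = \sum_{j \neq i} A_{ij}(x) = \int_{(0,x]} \frac{\sum_{j \neq i} dQ_{ij}(u)}{G_i(u^-)} = \int_{(0,x]} \frac{dH_i(u)}{G_i(u^-)} = \int_{(0,x]} \frac{-dG_i(u)}{G_i(u^-)},
\end{align*}
which is exactly the cumulative hazard associated with the survival function $G_i$.

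The remaining step is the Volterra/Duhamel inversion relating a survival function to its cumulative hazard, for which I would appeal to the product-integral calculus of \citet{gill1990survey}. Concretely, multiplying the last display through by $G_i(u^-)$ and using $G_i(0)=1$ (sojourn times are strictly positive) shows that $G_i$ solves the Volterra equation $G_i(x) = 1 - \int_{(0,x]} G_i(u^-)\, dB_i(u)$; invoking the uniqueness of the product-integral solution then gives $G_i(x) = \Prodi_{u\in(0,x]}(1 - dB_i(u))$. Under the standing absolute-continuity assumption on $A_{ij}$, the hazard $B_i$ is continuous, the product integral collapses to $\exp(-B_i(x))$, and the identity reduces to the familiar $G_i(x) = \exp\left(-\int_0^x \sum_{j\neq i}\alpha_{ij}(u)\, du\right)$; retaining the product-integral form keeps the statement valid for sojourn distributions carrying atoms.

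I expect the only genuine subtlety to be the passage from the matrix-valued transition intensities to the scalar sojourn-time hazard, namely justifying $dA_{ij} = dQ_{ij}/G_i(\cdot^-)$ and that summing the off-diagonal $\alpha_{ij}$ reproduces the total hazard of $W_n$ that drives $G_i$. Once this bookkeeping is in place, the diagonal reduction is immediate and the conclusion is a direct application of the product-integral representation of survival functions.
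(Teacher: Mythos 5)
Your proposal is correct and follows essentially the same route as the paper: reduce the diagonal matrix product integral to the scalar identity $G_i(x)=\Prodi_{u\in(0,x]}(1-B_i(du))$, recognize $B_i=\sum_{j\neq i}A_{ij}$ as the cumulative hazard of the sojourn time in state $i$, and note the exponential simplification in the absolutely continuous case. The only difference is one of detail: the paper simply asserts that $B_i$ is the sojourn-time cumulative hazard, whereas you justify it via $A_{ij}(du)=Q_{ij}(du)/G_i(u-)$ and the Volterra equation, which is a welcome (and correct) filling-in of the step the paper leaves implicit.
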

\begin{proof}
    It is enough to show that
    $$G_i(x)=1-H_i(x)=\mathbb{P}\left(W_n> x|X_{n-1}=i\right)=\Prodi_{u\in(0,x]}\left(1-\sum_{j\not=i}A_{ij}(du)\right).$$
    But $\sum_{j\not=i}A_{ij}(u)$ is the cumulative hazard of staying at state $i$. In the case that $A_{ij}$ are absolutely continuous, we have
    $$G_i(x)=\exp(-\sum_{j\not=i}A_{ij}(x)).$$
\end{proof}
\begin{lemma}
    The semi-Markov kernel matrix $\mathbf{Q}$ is equal to
    \begin{align}
        \mathbf{Q}(x)&=\int_0^x\mathbf{G}(u-)\mathbf{A}(du).
    \end{align}
\end{lemma}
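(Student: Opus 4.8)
The plan is to reduce the matrix identity to a scalar, entrywise statement and then prove that statement by a direct Lebesgue--Stieltjes factorization of the increment of $Q_{ij}$. The first observation is that, because $\mathbf{B}$ is diagonal, the product integral $\mathbf{G}(x)=\Prodi_{u\in(0,x]}(\mathbf{I}-\mathbf{B}(du))$ of \eqref{eq:survival} is itself diagonal with $(i,i)$ entry $G_i(x)$. Hence the $(i,j)$ entry of the matrix product $\mathbf{G}(u-)\mathbf{A}(du)$ collapses to the single product $G_i(u-)\,A_{ij}(du)$, and the asserted identity is equivalent to the scalar relation
\[
Q_{ij}(x)=\int_0^x G_i(u-)\,A_{ij}(du)=\int_0^x G_i(u-)\,\alpha_{ij}(u)\,du
\]
for each ordered pair $i\neq j$, the case $i=j$ being trivial since $Q_{ii}\equiv 0$ and $A_{ii}\equiv 0$.

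First I would work at the level of the Stieltjes increment of $Q_{ij}$. By definition $Q_{ij}(x)=\mathbb{P}(X_n=j,\ W_n\leq x\mid X_{n-1}=i)$, so its increment over $[u,u+du)$ is the joint sub-density $\mathbb{P}(X_n=j,\ W_n\in[u,u+du)\mid X_{n-1}=i)$. The key step is to factor this increment by conditioning on the at-risk event $\{W_n\geq u\}$: since $\{W_n\in[u,u+du)\}\subseteq\{W_n\geq u\}$, the multiplication rule gives exactly
\[
\mathbb{P}(X_n=j,\ W_n\in[u,u+du)\mid X_{n-1}=i)=\mathbb{P}(X_n=j,\ W_n\in[u,u+du)\mid X_{n-1}=i,\ W_n\geq u)\cdot\mathbb{P}(W_n\geq u\mid X_{n-1}=i).
\]
I would then identify the two factors. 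Dividing the first factor by $du$ and letting $du\downarrow 0$ reproduces precisely the definition of the transition-specific intensity, so the first factor equals $\alpha_{ij}(u)\,du=A_{ij}(du)$ in the limit. The second factor is $\mathbb{P}(W_n\geq u\mid X_{n-1}=i)=G_i(u-)$, the \emph{left-hand} limit of $G_i$ from Definition~\ref{def:G_H}; this is exactly where the $u-$ in the statement originates. Combining them yields $dQ_{ij}(u)=G_i(u-)\,\alpha_{ij}(u)\,du$, and integrating over $(0,x]$ gives the scalar identity, which reassembles into the matrix form.

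The hard part will be making this increment factorization rigorous rather than merely formal. The clean route is to partition $[0,x]$, write each conditional sub-probability as a genuine finite difference of $Q_{ij}$ on a subinterval, apply the multiplication rule exactly on each piece, and then pass to the limit using the standing absolute-continuity assumption on $A_{ij}$: this guarantees that $\alpha_{ij}$ exists and that $G_i$ carries no atoms, so $G_i(u-)=G_i(u)$ and any right/left-endpoint ambiguity in the Riemann--Stieltjes sums vanishes in the limit. (In the general, possibly atomic case one simply keeps the left limit $G_i(u-)$ throughout, which is why the statement is phrased with $u-$.)

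As a consistency check I would sum the scalar identity over $j$. Using $\sum_{j}A_{ij}=B_i$ together with the one-dimensional Volterra relation $dG_i(u)=-G_i(u-)\,B_i(du)$ that underlies \eqref{eq:survival}, the right-hand side telescopes:
\[
\sum_{j}\int_0^x G_i(u-)\,A_{ij}(du)=\int_0^x G_i(u-)\,B_i(du)=-\int_0^x dG_i(u)=1-G_i(x)=H_i(x),
\]
which matches $\sum_j Q_{ij}(x)=H_i(x)$ from Definition~\ref{def:G_H}. This confirms that the destination-wise split of the hazard into the $\alpha_{ij}$ is correctly accounted for, and closes the argument.
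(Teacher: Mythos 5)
Your proposal is correct and follows essentially the same route as the paper's own proof: both factor the increment $\mathbb{P}(X_n=j,\ W_n\in[u,u+du]\mid X_{n-1}=i)$ by the multiplication rule into $G_i(u-)=\mathbb{P}(W_n\ge u\mid X_{n-1}=i)$ times the conditional intensity increment $A_{ij}(du)$, and integrate over $(0,x]$. Your additional remarks on the diagonal structure of $\mathbf{G}$, the rigor of the limiting Riemann--Stieltjes argument, and the consistency check $\sum_j Q_{ij}=H_i$ are sound but not needed beyond what the paper records.
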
 
\begin{proof}
    We have
    \begin{align*}
        \int_0^xG_i(u-)A_{ij}(du)&=\int_0^x\mathbb{P}(W_n\ge u|X_{n-1}=i)\mathbb{P}(X_n=j,W_n\in[u,u+du]|X_{n-1}=i,W_n\ge u)\\
        &=\mathbb{P}(W_n\le x,X_n=j|X_{n-1}=j)\\
        &=Q_{ij}(x).
    \end{align*}
\end{proof}
This lemma suggests a plug-in estimator for $\mathbf{Q}$ \citep{gill1980nonparametric}:
$$\widehat{\mathbf{Q}}(x)=\int_0^x\widehat{\mathbf{G}}(u-)\widehat{\mathbf{A}}(du)$$
where $\widehat{G}(u),\ \widehat{A}(u)$ are Kaplan-Meier and Nelson-Aalen estimators, respectively (See section~\ref{sec:est}). However, transition probabilities are significantly more challenging to handle compared to the Markov case, where the elegant Aalen-Johansen estimator provides a straightforward solution. For semi-Markov and Markov renewal processes, we have the following backward equation:
$$\mathbf{P}(t)=\mathbf{G}(t)+\int_0^t\mathbf{Q}(du)\mathbf{P}(t-u).$$
Here we are using the calendar time notation $t$ instead of the sojourn time notation $x$ because the system starts at time $0$ so that the first sojourn time aligns with the calendar time. The equation can be written more compactly in a renewal form:
\begin{align}\label{eq:markov_renewal}
    \mathbf{P}(t)=\mathbf{G}(t)+\mathbf{Q}*\mathbf{P}(t)
\end{align}
where $*$ denotes the matrix convolution operator \citep{cinlar2013introduction}.
\begin{definition}[Markov renewal function]
    The \textit{Markov renewal function} of a Markov renewal process $(X,T)$ is the matrix $\mathbf{R}=[R_{ij}]$ where
    \begin{align}
        R_{ij}(t)&=\mathbb{E}\left(\sum_{n=0}^\infty \mathbb{I}(X_n=j,T_n\le t)\Big|X_0=i\right)\nonumber\\
        &=\sum_{n=0}^\infty \mathbf{Q}^{(n)}(t)
    \end{align}
    where $\mathbf{Q}^{(0)}=\mathbf{I}$ and 
    $$\mathbf{Q}^{(n)}(t)=\int_0^t\mathbf{Q}(du)\mathbf{Q}^{(n-1)}(t-u)=\mathbf{Q}*\mathbf{Q}^{(n-1)}(t).$$
\end{definition}
\begin{theorem}[Transition probabilities \citep{spitoni2012estimation, cinlar2013introduction}]\label{thm:trans_prob}
    The unique solution to the Markov renewal equation \ref{eq:markov_renewal} is
    \begin{align}
        \mathbf{P}(t)=\mathbf{R}*\mathbf{G}(t)=\int_0^t\mathbf{R}(du)\mathbf{G}(t-u).
    \end{align}
\end{theorem}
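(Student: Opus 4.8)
The plan is to treat the Markov renewal equation \eqref{eq:markov_renewal} as a Volterra-type equation whose kernel is $\mathbf{Q}$, and to solve it through the Neumann series generated by $\mathbf{Q}$, which is precisely the Markov renewal function $\mathbf{R}$. Accordingly I would split the argument into an existence part (checking that $\mathbf{R}*\mathbf{G}$ really is a solution) and a uniqueness part (showing that it is the only one). The cornerstone of both parts is the fixed-point identity satisfied by $\mathbf{R}$ itself: starting from $\mathbf{R}=\sum_{n=0}^\infty\mathbf{Q}^{(n)}$ with $\mathbf{Q}^{(0)}=\mathbf{I}$ and $\mathbf{Q}^{(n)}=\mathbf{Q}*\mathbf{Q}^{(n-1)}$, convolving on the left by $\mathbf{Q}$ and re-indexing gives
\begin{align*}
\mathbf{Q}*\mathbf{R}=\sum_{n=0}^\infty\mathbf{Q}*\mathbf{Q}^{(n)}=\sum_{n=1}^\infty\mathbf{Q}^{(n)}=\mathbf{R}-\mathbf{I},
\end{align*}
so that $\mathbf{R}=\mathbf{I}+\mathbf{Q}*\mathbf{R}$. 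This manipulation is legitimate once the series is known to converge for each fixed $t$, which is exactly the finiteness of $R_{ij}(t)$, the expected number of visits to $j$ by calendar time $t$.

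For existence, I would substitute the candidate $\mathbf{P}=\mathbf{R}*\mathbf{G}$ into the right-hand side of \eqref{eq:markov_renewal}. Using associativity of the (non-commutative) matrix convolution together with the identity just derived,
\begin{align*}
\mathbf{G}+\mathbf{Q}*\mathbf{P}=\mathbf{G}+\mathbf{Q}*(\mathbf{R}*\mathbf{G})=\mathbf{G}+(\mathbf{Q}*\mathbf{R})*\mathbf{G}=\mathbf{G}+(\mathbf{R}-\mathbf{I})*\mathbf{G}=\mathbf{R}*\mathbf{G}=\mathbf{P},
\end{align*}
where I have used that $\mathbf{Q}^{(0)}=\mathbf{I}$ acts as the convolution unit (a point mass at the origin), so that $\mathbf{I}*\mathbf{G}=\mathbf{G}$. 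This verifies that $\mathbf{R}*\mathbf{G}$ solves the Markov renewal equation.

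For uniqueness, I would take two solutions $\mathbf{P}_1,\mathbf{P}_2$ and set $\mathbf{D}=\mathbf{P}_1-\mathbf{P}_2$. The inhomogeneous term $\mathbf{G}$ cancels, leaving the homogeneous relation $\mathbf{D}=\mathbf{Q}*\mathbf{D}$, and iterating gives $\mathbf{D}=\mathbf{Q}^{(n)}*\mathbf{D}$ for every $n\geq 1$. Since transition probabilities are bounded, $\mathbf{D}$ is bounded on $[0,t]$, so each entry of $\mathbf{Q}^{(n)}*\mathbf{D}$ is controlled by $\sup|D_{kj}|\cdot\sum_k Q^{(n)}_{ik}(t)$. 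Because the renewal series $\sum_n\sum_k Q^{(n)}_{ik}(t)=\sum_k R_{ik}(t)$ converges, its general term $\sum_k Q^{(n)}_{ik}(t)$ tends to $0$ as $n\to\infty$, forcing $\mathbf{D}\equiv 0$ on any bounded horizon.

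I expect the main obstacle to be the analytic foundation underlying both halves: the convergence of the Neumann series $\sum_n\mathbf{Q}^{(n)}(t)$ and the attendant decay $\mathbf{Q}^{(n)}(t)\to 0$ on bounded intervals. This is where the probabilistic structure is indispensable. One needs a uniform control such as $\sup_i H_i(h)<1$ for some $h>0$ (equivalently, sojourn times that are almost surely positive with no mass at the origin) to guarantee that only finitely many transitions occur in $[0,t]$ in expectation, hence that $\mathbf{R}(t)$ is finite and higher-order convolutions are eventually negligible. Once this control is secured, the associativity of matrix convolution and the cancellation in the uniqueness step are routine.
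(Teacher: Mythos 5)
Your proof is correct and is essentially the argument the paper defers to: the paper omits its own proof and simply cites page 324 of \citet{cinlar2013introduction}, where existence is verified exactly as you do via the fixed-point identity $\mathbf{R}=\mathbf{I}+\mathbf{Q}*\mathbf{R}$ and associativity of the convolution, and uniqueness follows from $\mathbf{D}=\mathbf{Q}^{(n)}*\mathbf{D}$ together with the decay of $\mathbf{Q}^{(n)}(t)$ for bounded solutions. Your closing remark correctly isolates the one hypothesis doing real analytic work---regularity of the process (e.g.\ $\sup_i H_i(h)<1$ for some $h>0$, so that $\mathbf{R}(t)<\infty$ and the Neumann series converges)---which the paper leaves implicit.
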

The proof is omitted and can be found in page 324 in \citet{cinlar2013introduction}. The intuition behind $\mathbf{Q}^{(n)}$ is that it defines $n$-step transition semi-Markov kernel. We provide a summary of the concepts defined so far in Table~\ref{tab:markov_vs_semimarkov}.

\section{Estimation in Markov Renewal Cox Regression Model}\label{sec:est} 
\subsection{Likelihood Function}
Let $\mathbf{Z}=[Z_{ij}]$ be an $r\times r$ transition specific covariate matrix so that each element $Z_{ij}$ is a time-independent vector associated with the  transition $i\rightarrow j$. We assume the transition intensity (or transition hazard) follows a Cox regression model \citep{dabrowska1994cox, dabrowska1995estimation}:
\begin{align}\label{eq:regression}
    \alpha_{ij}(x)=\alpha_{0ij}(x)e^{\beta^TZ_{ij}}
\end{align}
where $x=t-T_n$ is the sojourn time spent in state $i$ and $\alpha_{0ij}$ is the baseline hazard as in the usual Cox regression models. The goal of this section is to provide estimators for the regression coefficient $\beta$ and transition probabilities $\mathbf{P}$. We start with the likelihood function.
\begin{theorem}[Jacod's formula for likelihood functions \citep{sun1992markov, andersen2012statistical}] The likelihood for observing a Markov renewal processes $(X,T)$ is
\begin{align}
        \prod_{n \geq 1} 
\alpha_{X_{n-1}, X_n}(W_n)
\times 
\exp\left[
-\sum_{n \geq 1} \sum_k 
\int_{0}^{W_n} 
\alpha_{X_{n-1}, k}(u) \, du 
\right].
    \end{align}
    In terms of calendar time $T$, we write
    \begin{align*}
\prod_{n \geq 1} \prod_{i\not=j}
\alpha_{ij}(T_n-T_{n-1})^{\mathbb{I}(X_{n-1}=i,X_n=j)}
\times 
\exp\left[
-\sum_{n\ge 1}\sum_j 
\int_{T_{n-1}}^{T_n} 
\alpha_{ij}(t-T_{n-1}) \, dt 
\right].
    \end{align*}
\end{theorem}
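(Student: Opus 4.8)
The plan is to derive the likelihood from the one-step conditional structure of the process and then recognize it as a special case of Jacod's formula for the intensity of a counting process. First I would invoke the Markov renewal property of Definition~\ref{def:markov_renewal}: conditioning on the entire history $(X_0,T_0,\dots,X_{n-1},T_{n-1})$ collapses to conditioning on $X_{n-1}$ alone, so the joint law of the observed path $\{(X_n,T_n)\}$ factorizes into a product of one-step transition kernels. Each factor is governed by the semi-Markov kernel $Q_{ij}$ of the preceding state, and the full likelihood is the product over $n\ge 1$ of the sub-densities $q_{X_{n-1},X_n}(W_n)$, where $q_{ij}(x)=\partial_x Q_{ij}(x)$ is taken with respect to the product of counting measure on $\mathcal{X}$ and Lebesgue measure on the sojourn axis.

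The second step is to express each sub-density through the transition-specific hazard. From the definition of $\alpha_{ij}$ together with $G_i(x)=\Prob(W_n>x\mid X_{n-1}=i)$, a short computation gives the hazard–survival decomposition $q_{ij}(x)=\alpha_{ij}(x)\,G_i(x)$. Invoking the survival lemma \eqref{eq:survival} in the absolutely continuous case, $G_i(x)=\exp\bigl(-\sum_{k}\int_0^x\alpha_{ik}(u)\,du\bigr)$ with $\alpha_{ii}\equiv 0$, so that each factor becomes $\alpha_{X_{n-1},X_n}(W_n)\exp\bigl(-\sum_k\int_0^{W_n}\alpha_{X_{n-1},k}(u)\,du\bigr)$.

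Third, I would assemble the product over $n$, pulling the hazard factors out front and collecting the survival exponentials into a single exponential of a sum, which yields the sojourn-time formula directly. The calendar-time form then follows by writing $W_n=T_n-T_{n-1}$, encoding the realized jump through the indicator exponents $\mathbb{I}(X_{n-1}=i,X_n=j)$, and substituting $u=t-T_{n-1}$ so that $\int_0^{W_n}\alpha_{X_{n-1},k}(u)\,du=\int_{T_{n-1}}^{T_n}\alpha_{ij}(t-T_{n-1})\,dt$ over the at-risk window. Equivalently, one recognizes $\lambda_{ij}(t)=Y_i(t)\,\alpha_{ij}(t-T_{n-1})$ from the Remarks as the intensity process of $N_{ij}$ and reads off the same expression as the product-over-jumps times compensator form of Jacod's general formula, as in pages 680--681 of \citet{andersen2012statistical}.

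The main obstacle is the rigorous justification of the factorization and of the dominating-measure identity $q_{ij}=\alpha_{ij}G_i$: one must verify that conditioning on the full $\sigma$-field generated by the past genuinely reduces to the current state, that the relevant sub-densities exist (guaranteed by the absolute-continuity assumption on $A_{ij}$), and that predictability of $Y_i$ makes $\lambda_{ij}$ a bona fide intensity so that Jacod's formula applies. A secondary technical point is the treatment of the final, possibly right-censored sojourn under the random-censorship scheme of Definition~\ref{def:risk}: its contribution carries only the survival factor $G_{X_N}$ and no hazard multiplier, which leaves the exponential structure intact and explains why the displayed product ranges over completed transitions.
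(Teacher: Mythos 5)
Your derivation is correct, and it is essentially the canonical argument: the paper itself offers no proof of this theorem (it is stated with citations to \citet{sun1992markov} and pages 680--681 of \citet{andersen2012statistical}), so your write-up supplies exactly the standard route the paper implicitly relies on --- factorize the path law via the Markov renewal property into one-step kernels, use the hazard--survival identity $q_{ij}(x)=\alpha_{ij}(x)G_i(x)$ with $G_i(x)=\exp\bigl(-\sum_k\int_0^x\alpha_{ik}(u)\,du\bigr)$ from the paper's product-integral lemma, and reindex to calendar time. Your closing remark on the censored final sojourn contributing only the survival factor is a sensible supplement, consistent with the random-censorship framework of Definition~\ref{def:risk}, even though the displayed formula ignores censoring.
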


\textbf{Remark}:     When there are $M$ subjects or $M$ independent $(X,T)$'s, then we just simply multiply them together. In the sequel, we assume that $M$ independent copies of $(X,T,Z)$ is observed and we use $Y_{i},N_{ij},$ etc. to denote the summation of these processes, dropping the dependence on $m$. For example, $N_{ij}(x)=\sum_mN_{mij}(x)$ counts the total number of transition $i\rightarrow j$ registered by all subjects. Finally, we define $N_i(x)=\sum_{j=1}^rN_{ij}(x)$ to be the total number of transitions from $i$.

\subsection{Non- and Semi-parametric Estimation}
By a heuristic argument (Chapter 2 in \citet{andersen2012statistical}), one can estimate the cumulative intensity $A_{ij}$ by
\begin{align}
    \widehat{A}_{ij}(x)=\int_0^x\frac{J_i(u)dN_{ij}(u)}{Y_{i}(u)}
\end{align}
where $Y_{i}(x)$ is the risk process in Definition~\ref{def:risk} and $J_i(u)=\mathbb{I}(Y_{i}(u)>0)$. If we assume the regression model~\ref{eq:regression}, then a Breslow-type estimator (assuming that $\beta$ is known) for the baseline hazard is suggested by \citet{sun1992markov}:
\begin{align}
    \widehat{A}_{0ij}(x)=\int_0^x\frac{\mathbb{I}(S_{ij}^{(0)}(u,\beta)>0)}{mS_{ij}^{(0)}(u,\beta)}N_{ij}(du)
\end{align}
where
$$S_{ij}^{(0)}(x,\beta)=m^{-1}\sum_mY_{mi}(x)\exp\left( \beta^TZ_{mij} \right).$$
\citet{gill1980nonparametric} suggested the following non-parametric estimator for the semi-Markov kernel:
$$\widehat{\mathbf{Q}}(x)=\int_0^x\widehat{\mathbf{G}}(u-)\widehat{\mathbf{A}}(du)$$
where the diagonal elements of $\mathbf{\widehat{A}}$ are $0$ and diagonal elements of $\widehat{\mathbf{G}}$ are
$$\widehat{G}_i(x)=\Prodi_{u\in(0,x]}\left(1-\widehat{B}_i(du)\right),\  \widehat{B}_i(u)=\sum_{j\not=i}\widehat{A}_{ij}(u).$$
\textbf{Remark}: The estimators $ \widehat{G}_i,\widehat{A}_{ij}$ correspond to Kaplan-Meier and Nelson-Aalen in the classical case. When there is only one possible progressive state starting from $i$, i.e., $i\rightarrow j$ and $j$ is absorbing, then the $i$-th row of $\widehat{\mathbf{Q}}$ corresponds to the estimation of the cause-specific distribution function in the language of competing risks data.

In the semi-parametric setting (Model~\ref{eq:regression}), \citet{sun1992markov} and \citet{dabrowska1994cox}  derived the estimating equation for $\beta$ by plug-in the Breslow-type estimator $\widehat{A}_{0ij}$ back into the likelihood function and then taking logarithm. The resulting profile log-likelihood \citep{johansen1983extension} is equal to $C(\infty,\beta)$ where
\begin{align}    C(\tau,\beta)&=\sum\limits_{m}\sum\limits_{i\not=j}\int_0^\tau\left[ \beta^TZ_{mij} - \log\left( mS_{ij}^{(0)}(x,\beta) \right) \right]N_{mij}(dx)
\end{align}
where we unavoidably use the subscript $m$ to indicate the $m$-th subject. The estimator for $\beta$ is derived by solving the corresponding estimating equation. Asymptotics for both $\beta$ and $\mathbf{A}$ are derived in \citet{sun1992markov, dabrowska1994cox, spitoni2012estimation, shu2007asymptotic} using  lengthy arguments. The semi-Markov kernel $\mathbf{Q}$ is estimated similarly in the non-parametric case. Explicitly, we have
$$\widehat{Q}_{ij}(t)=\int_0^t\widehat{G}_i(u-)\widehat{A}_{ij}(du)$$
where
$$\widehat{A}_{ij}(du)=\widehat{A}_{0ij}(du)\sum_me^{\widehat{\beta}^TZ_{mij}}.$$

\subsection{Estimation of Transition Probabilities}

By theorem~\ref{thm:trans_prob}, the estimator for transition probability matrix $\mathbf{P}$ is
\begin{align}\label{eq:est_markov_renewal}
    \widehat{\mathbf{P}}(t)&=\widehat{\mathbf{R}}*\widehat{\mathbf{G}}(t)=\int_0^t\widehat{\mathbf{R}}(du)\widehat{\mathbf{G}}(t-u)
\end{align}
where
$$\widehat{\mathbf{R}}=\sum_{p=0}^\infty\widehat{\mathbf{Q}}^{(p)},\ \widehat{\mathbf{Q}}^{(p)}(t)=\int_0^t\widehat{\mathbf{Q}}(du)\widehat{\mathbf{Q}}^{(p-1)}(t-u).$$
It is computationally intractable if the process is not progressive, i.e., $\mathbf{P}$ has non-zero lower-triangular elements. But the convolution only involves finite  terms if the process is progressive, see section 3.3.2 in \citet{dabrowska1994cox}.

Finally, table~\ref{tab:markov_vs_semimarkov} provides a detailed comparison of semi-Markov and Markov processes, highlighting their key differences and similarities across various dimensions. The table covers essential aspects such as notations, time scales, intensity matrices, transition probabilities, counting processes, and likelihood formulations. While semi-Markov processes operate primarily on the sojourn time scale, Markov processes are defined on the calendar time scale. Despite this fundamental distinction, it is possible to analyze semi-Markov processes in calendar time by appropriately transforming the time variables. This comparison is intended to clarify the theoretical underpinnings and computational considerations for modeling and estimation in these two process frameworks.

\begin{table}[ht]
    \centering
    \renewcommand{\arraystretch}{1.5} 
    \begin{threeparttable}
    \begin{tabular}{@{}p{3.5cm} p{6.5cm} p{6.5cm}@{}}
    \toprule
       & \textbf{Semi-Markov processes}  & \textbf{Markov processes}   \\ 
    \midrule
    \textbf{Notations} & $(X,T)$ and $W$ for Markov renewal representation; $\{X(t):t\ge 0\}$ for semi-Markov. The sequence $(X_n,W_n)$ forms a homogeneous Markov chain.& Same representation but the future $X(t)$ only depends on the current value $X(s)$, not the past $X(u),u<s$. \\
       \textbf{Time scale}\textsuperscript{1} & 
       \(x\): The sojourn time between jumps. & 
       \(t\): The calendar time. \\ 

       \textbf{Intensity matrix}\textsuperscript{2} & 
       \(\mathbf{A}(x) = [A_{ij}(x)],\ A_{ii} = 0\) (for progressive processes). & 
       \(\mathbf{A}(t) = [A_{ij}(t)],\ A_{ii} = -\sum_{j \neq i} A_{ij}\). \\ 
       \textbf{Sojourn time at state $i$} &$\Prodi_{u\in(0,x]}\left(1-\sum_{j\not=i}A_{ij}(du)\right)$ &$\Prodi_{u\in(s,t]}\left(1+A_{ii}(du)\right)$  where $s$ is the entry time at state $i$. It reduces to the homogeneous Markov case if $A_{ij}(u)=q_{ij}u,q_{ij}\ge0$. \\
        \textbf{One step transition} &$\mathbf{Q}$: the semi-Markov kernel. &No direct semi-Markov kernel analog. \\
       \textbf{Transition probability}&$\mathbf{P}(t)=\mathbf{R}*\mathbf{G}(t)$ where $\mathbf{R}$ is the Markov renewal function and $\mathbf{G}$ is the diagonal matrix defined in \ref{eq:survival}. & $\mathbf{P}(s,t)=\Prodi_{u\in(s,t]}\left(\mathbf{I}+\mathbf{A}(du)\right)$\\

       \textbf{Counting processes}& {\scriptsize $N_{ij}(x)=\sum_{n=1}^\infty\mathbb{I}(T_n-T_{n-1}\le x,X_{n}=j,X_{n-1}=i)$} &{\scriptsize $N_{ij}(t)=\sum_{n=1}^\infty\mathbb{I}(T_n\le t,X_{n}=j,X_{n-1}=i)$}\\
       \textbf{Counting process}
       
       \textbf{martingales}& No filtration can make $N_{ij}(x)-Y_i(x)A_{ij}(x)$ a martingale where $Y_i(x)=$number of durations in $i$ with sojourn time at least $x$. & $M_{ij}(t)=N_{ij}(t)-Y_i(t)A_{ij}(t)$ is a martingale where $Y_i(t)=\mathbb{I}(X(t-)=i)$. \\
       \textbf{Likelihood}\textsuperscript{3} & 
       ${\scriptsize \mathcal{L} =\prodi_x\prod_{j\not=i}dA_{ij}^{dN_{ij}}(1-dB_i)^{Y_i-\sum_jdN_{ij}}}$ where $B_i(x)=\sum_{j\not=i}A_{ij}(x)$.  & 
       {$\mathcal{L} = \prodi_t\prod_{j\not=i}dA_{ij}^{dN_{ij}}(1+dA_{ii})^{Y_i-\sum_jdN_{ij}}$}\\
       \textbf{Implementation in $R$}& Not available, need to scratch from hand using \textit{survival} and \textit{timereg} packages \citep{scheike2011analyzing, therneau2015package}. & \textit{mstate}, \textit{msm} packages \citep{jackson2011multi, de2011mstate}.\\
    \bottomrule
    \end{tabular}
    \begin{tablenotes}
        \footnotesize
        \item[1)] Of course one can work with calendar time within the semi-Markov framework. But transition intensities and state survival probabilities are derived in terms of sojourn times.
        \item[2)] The intensity matrix \(\mathbf{A}\) specifies the transition rates. In Markov processes, it is also known as the generator.
        \item[3)] The difference here is that the likelihood for semi-Markov is defined in sojourn time scale while Markov is in calendar time scale. But one can still work with calendar time scale for the semi-Markov case with $x=t-T_n$.
        \item
    \end{tablenotes}
    \caption{Comparison of Semi-Markov and Markov Processes}
    \label{tab:markov_vs_semimarkov}
    \end{threeparttable}
\end{table}

\subsection{Prediction Probabilities}
To compute the transition probability of a semi-Markov process starting from an arbitrary calendar time $s$ to $t$, often termed as prediction probabilities, we rely on the following mathematical formula.

\begin{lemma}
Define $\mathbf{P}(s,t)=[P_{ij}(s,t)]$ to be the transition probability matrix of a semi-Markov process starting at time $s$, i.e.,
    \begin{align*}
        P_{ij}(s,t)=\mathbb{P}\left(X(t)=j|X(s)=i,\mathscr{F}_s\right)
    \end{align*}
    where $\mathscr{F}_s$ is the history up to time $s$. In the non-parametric setting it could be $\sigma(X(u):u\le s)\vee \mathscr{F}_0$, the self-exciting filtration; in the semi-parametric setting, it is enlarged by covariates. The history $\mathscr{F}_s$ then contains information on the times and states visited $(X_0,T_1),\cdots,(X_n,T_n)$. The relation between $\mathbf{P}(s,t)$ the transition probability matrix $\mathbf{P}(t)$ starting at $0$ is
    \begin{align}
        P_{ij}(s,t)&=\delta_{ij}G_i(s,t)+\sum_{k\not=i}^r\int_s^tH_{ik}(s,du)P_{kj}(t-u)\\
        G_{i}(s,t)&=\frac{G_i(t-T_n)}{G_i(s-T_n)}=\Prodi_{s-T_n}^{t-T_n}(1-B_i(du))
    \end{align}
    where
    $$H_{ik}(s,du)=\frac{Q_{ik}(d(u-T_{n}))}{G_{i}(s-T_{n})}$$
    and $Q_{ik}$ is the semi-Markov kernel of $X(s)$, $B_i$ and $G_{i}$ are defined in \ref{def:G_H}, the survival function at state $i$. In matrix notation, we have
    \begin{align}
        \mathbf{P}(s,t)&=\mathbf{G}(s,t)+\int_s^t\mathbf{H}(s,du)\mathbf{P}(t-u)
    \end{align}
    where $\mathbf{G}(s,t)$ is a diagonal matrix and $\mathbf{H}$ has $0$ on the diagonal  and other entries are given by $H_{ik}$.

\end{lemma}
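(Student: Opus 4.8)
The plan is to establish the identity by a first-passage (renewal) decomposition conditioned on the \emph{elapsed} sojourn time in state $i$ at the calendar instant $s$. Because $X(s)=i$ and $\mathscr{F}_s$ records the full trajectory up to $s$, there is a (random) renewal epoch $T_n\le s<T_{n+1}$ at which the process last entered state $i$; write $a:=s-T_n$ for the age of the current sojourn and let $W$ denote its total length. The structural fact that drives everything, inherited from the Markov renewal property in Definition~\ref{def:markov_renewal}, is that given $\mathscr{F}_s$ and $X(s)=i$ the conditional law of the future $\{X(u):u\ge s\}$ depends on the past only through the pair $(i,a)$: since $(X_n,W_n)$ is a homogeneous Markov chain, once the current state and the age of its sojourn are known, nothing else in $\mathscr{F}_s$ is informative. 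I would record this reduction first, as it is exactly what licenses writing $P_{ij}(s,t)$ as a function of $(i,a)$, and note that $T_n$, $X_n$ and $a$ are $\mathscr{F}_s$-measurable.

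Next I would split on whether the process jumps out of $i$ before $t$. On the event of \emph{no} jump in $(s,t]$ the process is still in $i$ at $t$, contributing only when $j=i$; its conditional probability is the residual-survival ratio
\begin{align*}
\mathbb{P}(W>t-T_n\mid W>s-T_n)=\frac{G_i(t-T_n)}{G_i(s-T_n)}=:G_i(s,t),
\end{align*}
and by the multiplicativity of the product integral applied to the survival representation $G_i(x)=\Prodi_{u\in(0,x]}(1-B_i(du))$ of \eqref{eq:survival}, this ratio collapses to $\Prodi_{s-T_n}^{t-T_n}(1-B_i(du))$, giving the stated formula for $G_i(s,t)$. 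On the complementary event, let $u\in(s,t]$ be the calendar time of the first jump and $k\neq i$ its destination; normalizing the kernel increment $Q_{ik}(d(u-T_n))$ by the conditioning event $\{W>a\}$ yields the conditional sub-density $H_{ik}(s,du)=Q_{ik}(d(u-T_n))/G_i(s-T_n)$, exactly as in the statement.

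The decisive step is regeneration at the first jump. The instant $T_{n+1}=u$ \emph{is} a renewal epoch, so applying the Markov renewal property at $u$ shows that, conditionally on $X(u)=k$, the process restarts as a fresh Markov renewal process with its sojourn clock reset to zero; hence the probability of occupying $j$ at the horizon is the ``start-at-zero'' transition probability $P_{kj}(t-u)$ over the remaining time $t-u$. Combining the two cases by the law of total probability gives
\begin{align*}
P_{ij}(s,t)=\delta_{ij}\,G_i(s,t)+\sum_{k\neq i}\int_s^t H_{ik}(s,du)\,P_{kj}(t-u),
\end{align*}
and stacking over $(i,j)$ produces the matrix form $\mathbf{P}(s,t)=\mathbf{G}(s,t)+\int_s^t\mathbf{H}(s,du)\,\mathbf{P}(t-u)$ with $\mathbf{G}(s,t)$ diagonal and $\mathbf{H}$ off-diagonal. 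As a sanity check, setting $s=0$ forces $T_n=0$, whence $G_i(0,t)=G_i(t)$ and $H_{ik}(0,du)=Q_{ik}(du)$, so the identity collapses to the backward renewal equation~\eqref{eq:markov_renewal}.

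The main obstacle I anticipate is the regeneration argument itself, precisely because $s$ is \emph{not} a renewal epoch: the semi-Markov sojourn is not memoryless, so one cannot condition at $s$ directly. The careful point is to route the argument through the first post-$s$ jump time $T_{n+1}$, which \emph{is} a renewal epoch, and to verify that the forward recurrence (residual sojourn) enters only through the normalizing factor $G_i(s-T_n)^{-1}$ while everything beyond $T_{n+1}$ is governed by an independent fresh copy of the process. Making the $\mathscr{F}_s$-measurability of $T_n$, $X_n$ and the age $s-T_n$ explicit, and justifying the interchange of conditioning with integration over the first-jump time $u$, is the part that requires care; the remaining computations are bookkeeping on the kernel $\mathbf{Q}$ and the survival matrix $\mathbf{G}$.
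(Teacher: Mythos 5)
Your proposal is correct and follows essentially the same route as the paper's proof: a two-case decomposition at the first post-$s$ jump, with the no-jump case giving the residual survival ratio $G_i(t-T_n)/G_i(s-T_n)$ and the jump case giving the kernel increment normalized by $G_i(s-T_n)$ followed by regeneration to $P_{kj}(t-u)$. Your explicit discussion of why regeneration must be routed through $T_{n+1}$ rather than $s$, and the sanity check at $s=0$ recovering the backward renewal equation, are welcome additions but do not change the argument.
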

\begin{proof}
    We have
    \begin{align*}
        \mathbb{P}\left( X(t)=j|X(s)=i,\mathscr{F}_s \right)=&\ \text{Case 1} + \text{Case 2}
    \end{align*}
    where
    \begin{align*}
        \text{Case 1 }&=\mathbb{P}(X(u)=j,\forall u\in(s,t]|X(s)=i,\mathscr{F}_s)\\
        &=\delta_{ij}\mathbb{P}(T_{n+1}\ge t|X_n=i,T_{n+1}\ge s,T_n)\\
        &=\delta_{ij}\Prodi_{s-T_n}^{t-T_n}(1-B_i(du))\\
        &=\delta_{ij}G_i(s,t)
    \end{align*}
    and
    \begin{align*}
         \text{Case 2}=&\int_s^t\sum_{k\not=i}\mathbb{P}(X(t)=j|T_{n+1}=u,X_{n+1}=k,X_n=i,\mathscr{F}_s)\times\\
        &\ \ \ \ \ \ \mathbb{P}(T_{n+1}\in[u,u+du],X_{n+1}=k|X_n=i,\mathscr{F}_s)\\
        &=\sum_{k\not=i}\int_s^tP_{kj}(t-u)\frac{Q_{ik}(d(u-T_n))}{G_i(s-T_n)}
    \end{align*}
    since
    \begin{align*}
        \mathbb{P}(T_{n+1}\le u,X_{n+1}=k|X_n=i,\mathscr{F}_s)&=\mathbb{P}\left(W_{n+1}\le u-T_n,X_{n+1}=k|W_{n+1}\ge s-T_n,X_n=i\right)\\
        &=\frac{Q_{ik}(u-T_n)-Q_{ik}(s-T_n)}{1-\sum_kQ_{ik}(s-T_n)}
    \end{align*}
\end{proof}
\textbf{Remark}: A different approach for tackling Case 2 is given in Section 2.3 of \citet{sun1992markov}. Section 3.2.1 in \citet{dabrowska1994cox} gives a somewhat similar formula but more difficult to implement. The lemma is non-trivial in the sense that $P_{ij}(s,t)\not= P_{ij}(t)/P_{ij}(s)$ in contrast to the Markov case. Using the lemma, we estimate $\mathbf{P}(s,t)$ by
$$\widehat{\mathbf{P}}(s,t)=\widehat{\mathbf{G}}(s,t)+\int_s^t\widehat{\mathbf{H}}(s,du)\mathbf{\widehat{P}}(t-u)$$
where $\mathbf{\widehat{G}}(s,t)$ is obtained by plug-in estimates of $\widehat{G}_i$, $\mathbf{\widehat{H}}$ is obtained by plug-in $\widehat{\mathbf{Q}}$ and $\widehat{\mathbf{P}}$ is given by Formula~\ref{eq:est_markov_renewal}. We refer to $\mathbf{\widehat{P}}$ as the Dabrowska-Sun-Horowitz (DSH) estimator.

\section{Inference in Markov Renewal Cox Regression Model}\label{sec:inference}
\subsection{Some Asymptotics}
The inference for the regression coefficient $\widehat{\beta}$ and the cumulative hazard are done via Law of Large Numbers (LLN) and empirical process methods \citep{shorack2009empirical} instead of martingale arguments. The results can be used to construct confidence intervals as well as confidence bands via Reeds-Gill's theorem (the functional delta method) \citep{gill1989non} and Efron's bootstrapping.

\begin{definition}[Some additional notations]
    Recall that    $$C(\tau,\beta)=\sum\limits_{m}\sum\limits_{i\not=j}\int_0^\tau\left[ \beta^TZ_{mij} - \log\left( mS_{ij}^{(0)}(x,\beta) \right) \right]N_{mij}(dx),$$
    and for $i,j\le r$, we denote $S_{ij}^{(1)}$ and $S_{ij}^{(2)}$ the vector and respectively the matrix of the first and second partial derivatives of $S_{ij}^{(0)}$ with respect to $\beta$. We set 
    $E_{ij}(x, \beta) = \frac{S^{(1)}_{ij}(x, \beta)}{S^{(0)}_{ij}(x, \beta)} \quad \text{and} \quad V_{ij}(x, \beta) = \frac{S^{(2)}_{ij}(x, \beta)}{S^{(0)}_{ij}(x, \beta)} - \big(E_{ij}(x, \beta)\big)^{\otimes 2}$, and then the corresponding estimating equation (or score equation) is
    $$\widehat{U}(\tau,\beta)=\sum_m\sum_{i\not=j}\int_0^\tau \left( Z_{mij}(x)-E_{ij}(x,\beta) \right)N_{mij}(dx).$$
    Finally, similar to \citet{andersen1982cox}, we set
    \begin{align}
        s_{ij}^{(p)}(x,\beta)&=\mathbb{E} S_{ij}^{(p)}(x,\beta),\ p=0,1,2,\nonumber\\
        e_{ij}(x,\beta)&=\frac{s^{(1)}_{ij}(x,\beta)}{s^{(0)}_{ij}(x,\beta)},\nonumber\\
        v_{ij}(x,\beta)&=\frac{s_{ij}^{(2)}(x,\beta)}{s_{ij}^{(0)}(x,\beta)}-e_{ij}(x,\beta)^{\otimes 2},\nonumber\\
        \Sigma(x,\beta)&=\sum_{i,j\le r}\int_0^xv_{ij}(u,\beta)s_{ij}^{(0)}(u,\beta)\alpha_{0,ij}(u)du.
    \end{align}

\end{definition}

\begin{theorem}[Weak convergence of $\widehat{\beta}$ and $\widehat{A}$ \citep{dabrowska1994cox}]\label{thm:weak_conv}
 Under the regularity conditions in Appendix~\ref{sec:regularity}, for the Markov renewal Cox regression model~\ref{eq:regression}, we have:
\[
\sqrt{m}(\widehat{\beta} - \beta_0) \xrightarrow{d} \Sigma^{-1}(\tau, \beta_0) U(\tau, \beta_0),
\]
and
\[
\sqrt{m} \big(\widehat{A}_{0ij}(x, \widehat{\beta}) - A_{0ij}(x, \beta_0)\big) \xrightarrow{d} \Psi_{ij}(x, \beta_0) + \eta_{ij}(x, \beta_0)^\top \Sigma^{-1}(\tau, \beta_0) U(\tau, \beta_0),
\]
where $\beta_0$ is the true parameter, \(U(x, \beta_0)\) and \(\Psi_{ij}(x, \beta_0)\) are mean-zero Gaussian processes with independent components and covariances:
\[
\text{Cov}[U(x, \beta_0), U(y, \beta_0)] = \Sigma(x \wedge y, \beta_0),
\quad
\text{Cov}[\Psi_{ij}(x, \beta_0), \Psi_{ij}(y, \beta_0)] = \gamma_{ij}(x \wedge y, \beta_0).
\]
Here,
\[
\eta_{ij}(x, \beta) = - \int_{0}^{x} e_{ij}(u, \beta) \alpha_{0,ij}(u) \, du,
\quad
\gamma_{ij}(x, \beta_0) = \int_{0}^{x} \big[s^{(0)}_{ij}(u, \beta_0)\big]^{-1} \alpha_{0,ij}(u) \, du.
\]
The weak convergence is in \(D([0, \tau])^{(r^2-r) \times (1 + d)}\) with the Skorohod topology, assuming the covariate vector \(\mathbf{Z} = \{Z_{ij}\}\) is of dimension \(d\).
\end{theorem}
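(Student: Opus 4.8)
The plan is to follow the M-estimation route of \citet{andersen1982cox} for ordinary Cox regression, but to replace every martingale-central-limit step with a law-of-large-numbers or i.i.d.\ central-limit step, since (as recorded in Table~\ref{tab:markov_vs_semimarkov}) no filtration compensates the \emph{aggregate} counting process in sojourn time. Throughout I treat the $M$ subjects as i.i.d.\ replicates of $(X,T,Z)$, index the relevant processes by sojourn time $x\in[0,\tau]$, and work under the regularity conditions of Appendix~\ref{sec:regularity}. The first task is a uniform law of large numbers: $m^{-1}S_{ij}^{(p)}(x,\beta)\to s_{ij}^{(p)}(x,\beta)$ uniformly over $[0,\tau]\times\mathcal B$ for $p=0,1,2$ and a compact neighborhood $\mathcal B$ of $\beta_0$. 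Because each $S_{ij}^{(p)}$ is an average of the i.i.d.\ terms $Y_{mi}(x)Z_{mij}^{\otimes p}e^{\beta^\top Z_{mij}}$, this follows from Glivenko--Cantelli arguments for the induced (bounded-variation) function class \citep{shorack2009empirical}. Uniform convergence of $m^{-1}C(\tau,\beta)$ to a concave limit with unique maximizer $\beta_0$ (identifiability) then gives $\widehat\beta\xrightarrow{p}\beta_0$ by the argmax theorem.

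Next I would linearize the score. Since $\widehat U(\tau,\widehat\beta)=0$, a first-order Taylor expansion about $\beta_0$ yields $\sqrt m(\widehat\beta-\beta_0)=[-m^{-1}\partial_\beta\widehat U(\tau,\beta^*)]^{-1}\,m^{-1/2}\widehat U(\tau,\beta_0)$ for an intermediate $\beta^*$. The Hessian equals $\sum_{i\neq j}\int_0^\tau V_{ij}(x,\beta^*)\,m^{-1}N_{ij}(dx)$, which by the uniform law of large numbers converges in probability to $\Sigma(\tau,\beta_0)$, since $m^{-1}N_{ij}(dx)\to s^{(0)}_{ij}\alpha_{0,ij}\,dx$ and $V_{ij}\to v_{ij}$. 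It remains to establish $m^{-1/2}\widehat U(\tau,\beta_0)\xrightarrow{d}U(\tau,\beta_0)$. Replacing $E_{ij}$ by its uniform limit $e_{ij}$ with an $o_p(1)$ error, the centered score becomes $m^{-1/2}\sum_m\sum_{i\neq j}\int_0^\tau(Z_{mij}-e_{ij})\,N_{mij}(dx)+o_p(1)$, a normalized sum of i.i.d.\ subject contributions. A multivariate Lindeberg central limit theorem then gives a mean-zero Gaussian limit, and a direct variance computation identifies the covariance as $\Sigma(x\wedge y,\beta_0)$, as claimed.

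For the cumulative baseline hazard I would write $\sqrt m(\widehat A_{0ij}(x,\widehat\beta)-A_{0ij}(x,\beta_0))$ and add and subtract the fixed-$\beta_0$ version. The fixed-$\beta_0$ piece is once more a normalized i.i.d.\ sum, $\int_0^x[s^{(0)}_{ij}]^{-1}\,d\{m^{-1/2}\sum_m(N_{mij}-\text{compensator})\}$, whose functional limit is the Gaussian process $\Psi_{ij}$ with covariance $\gamma_{ij}$. A functional Taylor expansion in $\beta$ produces the drift $\partial_\beta A_{0ij}\,(\widehat\beta-\beta_0)$; evaluating $\partial_\beta A_{0ij}=\eta_{ij}$ and substituting the linearization obtained above gives the correction $\eta_{ij}^\top\Sigma^{-1}U$. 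Joint convergence of $(m^{-1/2}\widehat U,\Psi_{ij})$ and the continuous-mapping theorem deliver the stated decomposition. To upgrade this to weak convergence in the product Skorohod space $D([0,\tau])^{(r^2-r)\times(1+d)}$ I must verify tightness, and this is where the Burkholder--Davis--Gundy inequality enters: for a single progressive subject, Aalen's random time transformation (the Remark following the transition-intensity definition) restores a compensated counting-process martingale $M_{mij}$ in that subject's own sojourn-time filtration, and BDG bounds the $2p$-th moments of the increments of $\int[s^{(0)}_{ij}]^{-1}dM_{mij}$ by moments of its predictable quadratic variation, giving $\mathbb{E}|W_m(t)-W_m(s)|^{2p}\lesssim|t-s|^{p}$ with $p>1$. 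This verifies the Kolmogorov--Chentsov / Billingsley tightness criterion, and independence across subjects passes the bound to the normalized sum.

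The main obstacle is precisely this last combination of martingale-recovery and tightness. Because the aggregate process is genuinely non-Markovian in sojourn time, Rebolledo's martingale central limit theorem is unavailable, so the entire weak-convergence apparatus must be rebuilt from a \emph{per-subject} martingale structure---valid only under the progressive assumption via the random time transformation---to which BDG applies, while the cross-subject independence supplies the central limit theorem. Keeping these two scales separate (subject-level sojourn-time martingales versus the across-subject empirical-process limit), controlling the boundary behavior as $s^{(0)}_{ij}(x,\beta_0)$ degenerates near $\tau$, and ensuring the covariate class remains Donsker uniformly in $\beta$ near $\beta_0$, constitute the technically demanding core of the argument.
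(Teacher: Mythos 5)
Your overall architecture matches the paper's: consistency via a concave limit of the profile log-likelihood plus an argmax argument, Taylor linearization of the score, replacement of Rebolledo's CLT by an i.i.d.\ central limit theorem across subjects, the decomposition of $\sqrt{m}(\widehat A_{0ij}(x,\widehat\beta)-A_{0ij}(x,\beta_0))$ into a fixed-$\beta_0$ Gaussian term plus a drift $\eta_{ij}^\top\sqrt{m}(\widehat\beta-\beta_0)$, and an appeal to Burkholder--Davis--Gundy for the moment bounds. However, there are two genuine gaps. The more serious one is your tightness verification. You propose a single-increment Kolmogorov--Chentsov bound $\mathbb{E}|W_m(t)-W_m(s)|^{2p}\lesssim|t-s|^{p}$ obtained from BDG. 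This cannot work for counting-process-driven increments: for a compensated jump process the increment over $(s,t]$ equals $1+O(|t-s|)$ on the event of a jump, which has probability of order $|t-s|$, so every even moment of the increment is of order $|t-s|^{1}$ no matter how high the power --- and BDG reproduces this, since $\mathbb{E}\bigl([\widetilde M]_t-[\widetilde M]_s\bigr)^{p}\asymp \mathbb{E}\bigl(N(t)-N(s)\bigr)\asymp|t-s|$. An exponent strictly greater than $1$ is unattainable from a single increment, which is precisely why the paper instead verifies Billingsley's two-increment criterion $\mathbb{E}\bigl(|M_{ij}(x)-M_{ij}(u)|^2\,|M_{ij}(y)-M_{ij}(x)|^2\bigr)\le C\bigl(A_{0ij}(y)-A_{0ij}(u)\bigr)^2$: expanding the normalized sum over quadruples of subjects, killing the cross term by the uncorrelated-increments property (formula (28) of Gill), and controlling the diagonal term by Cauchy--Schwarz, the comparison $M^4_{ij}(x)\le\widetilde M^4_{ij}(\tau)$ that transfers to the calendar-time martingale, and only then BDG. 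Without the product-of-adjacent-increments form your tightness argument fails.

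The second gap is in your i.i.d.\ representation of the score. Writing $m^{-1/2}\widehat U(\tau,\beta_0)=m^{-1/2}\sum_m\sum_{i\neq j}\int_0^\tau(Z_{mij}-e_{ij})\,dN_{mij}+o_p(1)$ is incorrect: the identity $\sum_m\int(Z_{mij}-E_{ij})\,dN_{mij}=\sum_m\int(Z_{mij}-E_{ij})\,dM_{mij}$ holds because $S^{(1)}_{ij}-E_{ij}S^{(0)}_{ij}\equiv 0$, but once you replace $E_{ij}$ by $e_{ij}$ inside the $dN$ form the compensator contribution $m^{-1/2}\sum_m\int(Z_{mij}-e_{ij})\,d\Lambda_{mij}=\sqrt{m}\int\bigl(S^{(1)}_{ij}/m-e_{ij}S^{(0)}_{ij}/m\bigr)\alpha_{0ij}\,du$ is $O_p(1)$, not $o_p(1)$, so your summands have the wrong variance. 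The paper's representation lemma avoids this by first passing to the per-subject compensated processes $M_{mij}(du,\beta_0)=N_{mij}(du)-Y_{mi}(u)e^{\beta_0^\top Z_{mij}}A_{0ij}(du)$ and writing $m^{-1/2}\widehat U=\sum_{i\neq j}M^{(1)}_{ij}-\sum_{i\neq j}\int E_{ij}\,dM^{(0)}_{ij}$; the substitution $E_{ij}\to e_{ij}$ is then harmless and the diagonal variance computation yields $\Sigma$.
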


\begin{proof}
    See Appendix~\ref{sec:proof_weak_conv}. Here we use $\widehat{A}_{0ij}(x,\beta)$ instead of $\widehat{A}_{0ij}(x)$ to emphasize the dependence on $\beta$, which is essential in the proof.
\end{proof}

\begin{theorem}[Consistency and Weak Convergence of $\widehat{\mathbf{Q}},\widehat{\mathbf{R}}$ and $\widehat{\mathbf{P}}$ \citep{dabrowska1995estimation, spitoni2012estimation}]

Let \( \tau \) be a point such that \( E Y_i(\tau) > 0, i = 1, \dots, r \) and suppose that \( E Y_i(0)^3 < \infty \) and the semi-Markov kernel \( \mathbf{Q} \) is continuous.  

\begin{itemize}
    \item[(i)] The one step transition process
    \begin{align}
        \sqrt{m} \big(\mathbf{\widehat{Q}} - \mathbf{Q}\big)
    \end{align}
    converges weakly in \( D[0, \tau]^{r^2} \) to a mean-zero Gaussian process \( \mathbf{\Phi} \), say. The covariance of $\mathbf{\Phi}$ is given in Appendix B-8 in \citet{spitoni2012estimation}.
    
    \item[(ii)] We have \( \widehat{\mathbf{R}}(t) \xrightarrow{p} \mathbf{R}(t) \)  uniformly in \( t \in [0, \tau] \). In addition,
    \[
    \sqrt{m} (\widehat{\mathbf{R}} - \mathbf{R}) 
    \]
    converges weakly in \( D[0, \tau]^{r^2} \) to a mean-zero Gaussian process
    \begin{align}
        \sum_p\sum_{l=1}^p\mathbf{Q}^{(p-l)}*\mathbf{\Phi}*\mathbf{Q}^{(l-1)}.
    \end{align}
    
    \item[(iii)] We have \( \mathbf{\widehat{P}}(t) \xrightarrow{p} \mathbf{P}(t) \) uniformly in \( t \in [0, \tau] \). In addition,
    $$\sqrt{m} (\widehat{\mathbf{P}} - \mathbf{P})$$
    converges weakly in $D[0,\tau]^{r^2}$ to a mean-zero Gaussian process
    $$\sum_p\sum_{l=1}^p\mathbf{Q}^{(p-l)}*\mathbf{\Phi}*\mathbf{Q}^{(l-1)}*\mathbf{G}-\mathbf{R}*\text{Diag}\left(\mathbf{\Phi}\mathbf{1} \right)$$
    where $\mathbf{1}$ is an $r$-dimensional vector of $1$'s.
\end{itemize}
\end{theorem}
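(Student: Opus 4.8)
The plan is to prove all three parts by a single application of the functional delta method (the Reeds--Gill theorem cited in Section~\ref{sec:inference}), with Theorem~\ref{thm:weak_conv} supplying the input. The key observation is that $\mathbf{Q}$, $\mathbf{R}$, and $\mathbf{P}$ are all obtained from the cumulative intensity $\mathbf{A}$ --- equivalently from $(\widehat\beta,\widehat A_{0ij})$, whose joint weak limit Theorem~\ref{thm:weak_conv} provides, since $\widehat{\mathbf A}$ is a smooth function of $(\widehat\beta,\widehat A_{0ij})$ --- by composing maps that are Hadamard differentiable on $D[0,\tau]$ under the supremum norm. Writing each estimator as $\phi(\widehat{\mathbf A})$ for the appropriate $\phi$ and chaining derivatives, the delta method yields weak convergence of $\sqrt m(\phi(\widehat{\mathbf A})-\phi(\mathbf A))$ to $\dot\phi\cdot\mathbb{G}$, where $\mathbb{G}$ is the Gaussian limit of $\sqrt m(\widehat{\mathbf A}-\mathbf A)$. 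Because each $\dot\phi$ is a bounded linear map, the limit is again Gaussian, and the explicit derivatives reproduce the three stated covariance structures. The uniform consistency claims in (ii) and (iii) then follow from the continuous mapping theorem applied to the same functionals.

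For part (i), recall from \ref{eq:survival} and the kernel lemma that $\mathbf{Q}(x)=\int_0^x\mathbf{G}(u-)\,\mathbf{A}(du)$ with $\mathbf{G}=\Prodi_{(0,x]}(\mathbf I-\mathbf B(du))$. First I would invoke Hadamard differentiability of the product-integral map $\mathbf A\mapsto\mathbf G$ (the Duhamel/Gill--Johansen result) and compose it, via the chain rule, with the bilinear integration map $(\mathbf G,\mathbf A)\mapsto\int\mathbf G(u-)\,\mathbf A(du)$, which is Hadamard differentiable because both arguments have locally bounded variation on $[0,\tau]$. Evaluating the resulting derivative at the Gaussian input produces the mean-zero Gaussian process $\mathbf\Phi$, whose covariance is exactly the expression of Appendix B-8 in \citet{spitoni2012estimation}; this gives $\sqrt m(\widehat{\mathbf Q}-\mathbf Q)\xrightarrow{d}\mathbf\Phi$ in $D[0,\tau]^{r^2}$.

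For part (ii), I would treat $\mathbf R$ as the value of the renewal map $\Psi:\mathbf Q\mapsto\sum_{p\ge0}\mathbf Q^{(p)}$, the solution of $\mathbf R=\mathbf I+\mathbf Q*\mathbf R$. Differentiating the Neumann series term by term with the product rule for convolution gives the candidate derivative $\dot\Psi_{\mathbf Q}(\mathbf H)=\sum_p\sum_{l=1}^p\mathbf Q^{(p-l)}*\mathbf H*\mathbf Q^{(l-1)}$, and substituting $\mathbf H=\mathbf\Phi$ yields the stated limit. For part (iii), write $\mathbf P=\mathbf R*\mathbf G$ (Theorem~\ref{thm:trans_prob}) and use $\mathbf G=\mathbf I-\text{Diag}(\mathbf Q\mathbf 1)$, so that a perturbation $\mathbf\Phi$ of $\mathbf Q$ perturbs $\mathbf G$ by $-\text{Diag}(\mathbf\Phi\mathbf 1)$; the product rule for the bilinear convolution $(\mathbf R,\mathbf G)\mapsto\mathbf R*\mathbf G$ then gives $\dot\Psi_{\mathbf Q}(\mathbf\Phi)*\mathbf G-\mathbf R*\text{Diag}(\mathbf\Phi\mathbf 1)$, which is precisely the claimed Gaussian limit. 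Consistency in both parts is immediate once $\Psi$ and the convolution map are shown continuous.

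The hard part will be making the infinite-series arguments of (ii)--(iii) rigorous, that is, upgrading the formal term-by-term differentiation of $\Psi$ to genuine Hadamard differentiability into $D[0,\tau]^{r^2}$. This requires uniform control of the tails $\sum_{p\ge P}\mathbf Q^{(p)}$ and $\sum_{p\ge P}\sum_{l=1}^p\mathbf Q^{(p-l)}*\mathbf H*\mathbf Q^{(l-1)}$, which I would obtain from the factorial decay of iterated convolutions of a continuous kernel on a bounded interval --- the Volterra-type bound $\|\mathbf Q^{(p)}\|_{[0,\tau]}\le C^p\tau^p/p!$, valid precisely because continuity of $\mathbf Q$ rules out atoms. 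The second difficulty is bounding the second-order remainder of the delta-method expansion uniformly on $[0,\tau]$; here the moment hypothesis $\E Y_i(0)^3<\infty$ together with a Burkholder--Davis--Gundy maximal inequality for the underlying counting-process sums controls the relevant supremum norms and forces the remainder to be $o_p(m^{-1/2})$, completing the three weak-convergence statements.
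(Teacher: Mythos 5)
Your proposal follows essentially the same route as the paper: the paper's proof is precisely the sketch of verifying that the mappings $\mathbf{A}\mapsto\mathbf{Q},\mathbf{R},\mathbf{P}$ are Hadamard differentiable on the class of cadlag matrix-valued functions of bounded variation on $[0,\tau]$ and then applying the functional delta method of \citet{gill1989non} to $\sqrt{m}(\widehat{\mathbf{A}}-\mathbf{A})$. Your write-up simply supplies more of the chain-rule detail (product integral, bilinear convolution, Neumann series) than the paper chooses to record.
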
 
\begin{proof}
    To prove this theorem, one verifies that the mappings $\mathbf{A}\rightarrow\mathbf{Q},\mathbf{R},\mathbf{P}$ are Hadamard differentiable on the class of cadlag matrices valued functions of bounded variation on $[0,\tau]$. Then one uses the functional delta-method applied to $\sqrt{m}\left(\mathbf{\widehat{A}}-\mathbf{A}\right)$ \citep{gill1989non}.
\end{proof}

\subsection{Bootstrapping}
\citet{dabrowska1995estimation} suggests a bootstrapping procedure for the construction of confidence bands and we summarize it below. We suppose now that the underlying Markov renewal process is hierarchical and the extension to non-hierarchical models is easy. The bootstrap sample 
\[
[(T_k^*, X_k^*) = (T_{k,n}^*, X_{k,n}^*)_{n \geq 0}, Z_k^*, \tilde{T}_k^* : k = 1, \dots, m]
\]
can then be generated as follows:

\begin{enumerate}
    \item \textbf{Sampling Covariates and Initial States}:  
    For each individual \( k \in \{1, \dots, m\} \), the covariate vector \( Z_k^* \) and the initial state \( X_{k,0}^* \) are sampled with replacement from the observed covariates \( Z_p \) and initial states \( X_{p,0} \), where \( p \in \{1, \dots, m\} \). This ensures that each sampled covariate and initial state represents a realistic starting condition based on the observed data.
    
    \item \textbf{Generating Censoring Times}:  
    For each individual \( k \), a censoring time \(\widetilde{T}_k^*\) is generated as an independent random variable with survival function, say \({\mathbf{S}}(t)\). This step models the censoring mechanism, which accounts for the incomplete observation of event times due to study design or external factors. \citet{dabrowska1995estimation} gives a more explicit expression for $\mathbf{S}(t)$ based on empirical data.
    
    \item \textbf{Constructing the Sequence of Transition Times and States}:  
    Given the triple \((X_{k,0}^*, Z_k^*, \widetilde{T}_k^*)\), we iteratively construct the sequence 
    \[
    (T_{k,0}^* = 0, X_{k,0}^*), \dots, (T_{k,n}^*, X_{k,n}^*),
    \]
    as follows:
    \begin{enumerate}
        \item[(i)] If the current state is \( X_{k,n-1}^* = i \in \{1, \dots, r\} \), the next transition time is determined by 
        \[
        T_{k,n}^* = T_{k,n-1}^* + W_{k,n}^*, 
        \]
        where \( W_{k,n}^* \) is a random variable drawn from the survival function
        \[
        G_{i}(x; Z_k^*) = \Prodi_{u\in(0,x]} \left(1 - \widehat{B}_{i}(du; Z_k^*)\right),
        \]
        and \(\widehat{B}_{i}(x; Z_k^*)\) is defined as
        \[
        \widehat{B}_{i}(x; Z_k^*) = \sum_{\ell\not=i}^r e^{\beta^\top Z_{k,i\ell}^*}\widehat{A}_{0i\ell}(x).
        \]
        
        \item[(ii)] If the next transition time \( T_{k,n}^* \) exceeds the censoring time \(\widetilde{T}_k^*\), the process is censored at \(\tilde{T}_k^*\). Otherwise, the next state \( X_{k,n}^* \) is sampled from a distribution where the probability of transitioning to state \( j \in \{1, \dots, r\} \), $j\not=i$ is proportional to
        \[
        \frac{A_{ij}(\Delta W_{k,n}^*; Z_k^*)}{\widehat{B}_{i}(\Delta W_{k,n}^*, Z_k^*)}.
        \]
        
        \item[(iii)] Steps (i) and (ii) are repeated until the process either reaches an absorbing state or is censored at \(\widetilde{T}_k^*\).
    \end{enumerate}
\end{enumerate}

Given the bootstrap sample 
\[
[(T_k^*, X_k^*), Z_k^*, \widetilde{T}_k^*: k = 1, \dots, m],
\]
we estimate the regression parameter \(\beta\) by solving the equation
\[
\widehat{U}^*(\tau, \beta) = 0,
\]
where
\[
\widehat{U}^*(\tau, \beta) = \sum_{k=1}^m \sum_{ij} \int_0^\tau \big[Z_{ijk}^*(x) - E_{ij}^*(x, \beta)\big] dN_{ijk}^*(x).
\]

Finally, the bootstrap analogue of the baseline cumulative hazard function estimator is defined as \(\widehat{A}_0^*(x) = [\widehat{A}_{0;ij}^*(x)]\), where
\[
\widehat{A}_{0;ij}^*(x) = \int_0^x \frac{\mathbb{I}(S_{ij}^{(0)}(u, \beta) > 0)}{m S_{ij}^{(0),*}(u, \beta)} \sum_{k=1}^m dN_{ijk}^*(u).
\]
All ``$*$"-indexed processes have the same meaning as the none $*$ ones but are computed from the bootstrapping data $\left[\left( T_k^*,X_k^* \right),Z_k^*,\widetilde{T}_k^*,k=1,\cdots,m\right]$. The following proposition characterizes the limiting behavior of the bootstrap estimator, analogous to Theorem~\ref{thm:weak_conv}.
\begin{proposition}
    Under some regularity conditions, the bootstrap is weakly consistent, i.e.,  
\[
\sqrt{m}(\hat{\beta}^* - \hat{\beta}) \mid \text{data} \xrightarrow{d} \Sigma^{-1}(\tau, \beta_0) U(\tau, \beta_0)
\]
and
\[
\sqrt{m}(\hat{A}_{ij}^*(x, \hat{\beta}^*) - \hat{A}_{ij}(x, \hat{\beta})) \mid \text{data} \xrightarrow{d} \Psi_{ij}(x, \beta_0) + \eta_{ij}(x, \beta_0)^\top \Sigma^{-1}(\tau, \beta_0) U(\tau, \beta_0)
\]
in probability, where \( U(x, \beta_0) \) and \( \Psi(x, \beta_0) \), \( x \in [0, \tau], \, \tau < \tau_0 \), are Gaussian processes as defined in Theorem~\ref{thm:weak_conv}.

\end{proposition}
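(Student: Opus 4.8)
The plan is to mirror the argument behind Theorem~\ref{thm:weak_conv}, but carried out \emph{conditionally on the observed data} $\mathcal{D}_m = [(T_k, X_k), Z_k, \widetilde{T}_k : k=1,\dots,m]$, and to show that each limit is unaffected by replacing the true law by the fitted one. Throughout, $P^*$ and $E^*$ denote probability and expectation under the bootstrap resampling given $\mathcal{D}_m$, and ``in probability'' refers to convergence of the conditional laws (in the bounded-Lipschitz metric) along almost every sequence of data. The structural fact driving everything is that, given $\mathcal{D}_m$, the bootstrap triples $[(T_k^*,X_k^*),Z_k^*,\widetilde{T}_k^*]$ are i.i.d.\ draws from the estimated model $\widehat{P}_m$ determined by $(\widehat{A}_0,\widehat\beta,\mathbf{S})$, and that $\widehat{P}_m$ converges to the true law $P_{\beta_0}$ by the consistency already supplied by Theorem~\ref{thm:weak_conv}.

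First I would establish a \textbf{conditional uniform law of large numbers}. Because the bootstrap observations are conditionally i.i.d.\ under $\widehat{P}_m$, a conditional Glivenko--Cantelli argument over the relevant (Donsker, hence Glivenko--Cantelli) function classes gives, on a compact neighborhood $\mathcal{B}$ of $\beta_0$,
$$\sup_{x\le\tau,\ \beta\in\mathcal{B}} \big| S_{ij}^{(0),*}(x,\beta) - s_{ij}^{(0)}(x,\beta)\big| \xrightarrow{P^*} 0,$$
and likewise for $p=1,2$, in probability. This yields $E_{ij}^*\to e_{ij}$ and $V_{ij}^*\to v_{ij}$ uniformly, so that the bootstrap observed information $-m^{-1}\partial_\beta \widehat{U}^*(\tau,\beta)$ converges conditionally to $\Sigma(\tau,\beta)$.

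Second I would prove the \textbf{conditional weak convergence of the bootstrap score}. Writing $m^{-1/2}\widehat{U}^*(\tau,\widehat\beta)$ as a sum of $m$ conditionally i.i.d., conditionally mean-zero terms, I would verify a conditional Lindeberg condition using $E\,Y_i(0)^3<\infty$ and apply a conditional functional central limit theorem (the bootstrap CLT for Donsker classes) to obtain $m^{-1/2}\widehat{U}^*(\tau,\widehat\beta)\xrightarrow{d} U(\tau,\beta_0)$ conditionally in probability, with the covariance $\Sigma(\cdot,\beta_0)$ supplied by the first step. A standard Z-estimator/argmax argument for the conditionally concave profile objective then gives $\widehat\beta^*\xrightarrow{P^*}\widehat\beta$, and a one-term Taylor expansion of $\widehat{U}^*(\tau,\widehat\beta^*)=0$ about $\widehat\beta$, inverted using the conditional information limit $\Sigma(\tau,\beta_0)$, produces
$$\sqrt{m}(\widehat\beta^*-\widehat\beta)=\Sigma^{-1}(\tau,\beta_0)\,m^{-1/2}\widehat{U}^*(\tau,\widehat\beta)+o_{P^*}(1)\xrightarrow{d}\Sigma^{-1}(\tau,\beta_0)\,U(\tau,\beta_0),$$
which is the first assertion. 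For the cumulative hazard I would decompose $\sqrt{m}\big(\widehat{A}_{0ij}^*(x,\widehat\beta^*)-\widehat{A}_{0ij}(x,\widehat\beta)\big)$, via the Breslow formula, into a process term at fixed $\beta=\widehat\beta$ (converging conditionally to $\Psi_{ij}(x,\beta_0)$ by the same conditional CLT) plus a term linear in $\sqrt{m}(\widehat\beta^*-\widehat\beta)$ whose coefficient, obtained by differentiating the Breslow functional in $\beta$ and invoking the first step, equals $\eta_{ij}(x,\beta_0)$; substituting the expansion above and handling $(U^*,\Psi^*)$ jointly delivers the stated joint limit.

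The hard part will be controlling the \emph{doubly random} nature of this model-based bootstrap: the resampling law $\widehat{P}_m$ is itself estimated, so the bootstrap trajectories are generated from $(\widehat{A}_0,\widehat\beta,\mathbf{S})$ rather than from a fixed distribution. I must show that this estimation error is first-order negligible --- in particular that the conditional limit is the Gaussian process evaluated at the \emph{true} $\beta_0$ and not at $\widehat\beta$. This requires uniform-in-$\beta$ (and uniform-over-the-function-class) control so that the $O_P(m^{-1/2})$ discrepancy between $\widehat{P}_m$ and $P_{\beta_0}$ enters only through the centering and cancels after subtracting $\widehat{A}_{0ij}(x,\widehat\beta)$; a contiguity argument together with the conditional multiplier CLT and the continuity of $\mathbf{Q}$ makes this precise.
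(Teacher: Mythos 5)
The paper does not actually supply a proof of this proposition: its ``proof'' is a one-line deferral to \citet{dabrowska1995estimation}. So there is no internal argument to compare against line by line; what can be judged is whether your outline would carry through for the bootstrap scheme the paper actually describes. Your skeleton --- conditional i.i.d.\ structure given the data, a conditional uniform LLN for $S^{(p),*}_{ij}$, a conditional CLT for the score, a Taylor expansion of $\widehat{U}^*(\tau,\cdot)$ about $\widehat\beta$, and a Breslow-type decomposition for the hazard --- is the right one and matches the structure of the cited argument.

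There is, however, one genuine mismatch. The bootstrap in this paper is \emph{model-based}: given the data, fresh trajectories $(T^*_{k,n},X^*_{k,n})$ are simulated forward from the fitted kernel $(\widehat{A}_{0},\widehat\beta,\mathbf{S})$, not obtained by reweighting or resampling the empirical measure. The ``conditional multiplier CLT'' and the ``bootstrap CLT for Donsker classes'' that you invoke in your final paragraph are theorems about exchangeably weighted or nonparametric bootstraps of a fixed empirical process; they do not apply to a triangular array of observations whose common law $\widehat{P}_m$ itself changes with $m$. The correct route --- and the one the deferral to \citet{dabrowska1995estimation} points at --- is to rerun the entire weak-convergence argument of Theorem~\ref{thm:weak_conv} under the sequence of fitted laws: finite-dimensional convergence via the Lindeberg CLT for triangular arrays (your Lindeberg verification is the right instinct here), and tightness via the same Billingsley mixed-moment bound, with the Burkholder--Davis--Gundy estimate of Lemma~\ref{lemma:bound} now computed under $\widehat{P}_m$ so that the bounding function becomes $\widehat{A}_{0ij}$ in place of $A_{0ij}$; one then needs $\widehat{A}_{0ij}(\tau)$ and $\sup_x S^{(0),*}_{ij}$ bounded in probability, and continuity of the covariance functionals in $(A_{0},\beta)$, to identify the limit as the Gaussian process at $\beta_0$. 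Your first two steps survive this substitution essentially unchanged, but as written the tightness step rests on machinery that is not available for this resampling scheme, and that is the gap that would need to be filled.
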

\begin{proof}
    See \citet{dabrowska1995estimation}.
\end{proof}

\section{Simulation Studies}

In this section, we present simulation studies based on Section~\ref{sec:est} and \ref{sec:inference} to evaluate the performance of the semi-Markov model and the DSH estimator.

\subsection{Simulation Design}
The simulation studies were conducted using the R functions defined in our custom simulation framework. We assume a five-state model (Figure~\ref{fig:AD}) to mimic the progression of Alzheimer's disease \citep{brookmeyer2019multistate}, namely, CN (cognitively normal), MCI (mild cognitive impairment), SCI (severe cognitive impairment), AD (Alzheimer's disease) and Death. 

\begin{figure}[ht]
    \centering
    \includegraphics[width=0.95\linewidth]{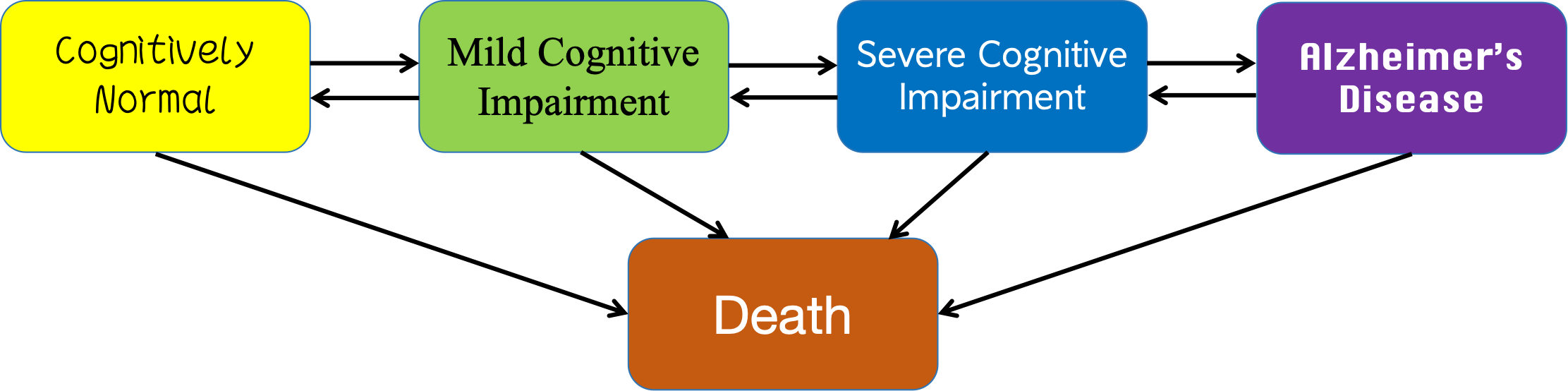}
    \caption{State transition diagram illustrating the possible transitions in the progression of Alzheimer's disease. Transient states include CN (cognitively normal), MCI (mild cognitive impairment), SCI (severe cognitive impairment), and AD (Alzheimer's disease), while absorbing state is death.}
    \label{fig:AD}
\end{figure}

Mimicking the parametric models commonly used \citep{aralis2016modeling}, we assume the baseline cumulative transition intensity matrix is of form
\begin{align*}
    \mathbf{A}_0(x)&=\left( \begin{matrix}
        0 & 0.2& 0& 0& 0.01\\
0.01& 0& 0.15& 0& 0.01&\\
 0& 0.01& 0& 0.15& 0.01\\
0& 0& 0.01& 0& 0.03\\
 0& 0& 0& 0& 0
    \end{matrix} \right)\times \frac{x^2}{2}.
\end{align*}
In other words, the baseline transition rate is of form $\alpha_{0ij}(x)=c_{ij}x$ where $c_{ij}$ is the $(ij)^{\text{th}}$ element of the above constant matrix. We note that this is not a homoegenuous Markov process since the transition rate now depends on the sojourn time $x$.   For each subject, the elements of transition-specific covariate matrix $\mathbf{Z}=(Z_{ij})$ are generated according to independent standard normal distributions. The regression coefficient $\beta$ is generated from $Uniform(-1,1)$ so that for transition $i\rightarrow j$, the cumulative transition rate is of form
$$A_{ij}(x)=A_{0ij}(x)e^{\beta^TZ_{ij}}.$$
The sojourn time $W_n$ then has survival function $S(W_n> x|X_n=i)=\exp\left(-\sum_{j}A_{ij}(x)\right)$ and by a simple distribution transformation trick, we can simulate $W_n$ from a uniform distribution. Next, given $W_n=x$ and $X_n=i$, we sample $X_{n+1}$ according to a multinomial distribution 
$$X_{n+1}\sim \mathcal{M}\left(1,\mathbf{p}\right),\ \mathbf{p}=\left(\frac{\alpha_{i1}(x)}{\sum_{j\not=i}\alpha_{ij}(x)},\cdots,\frac{\alpha_{i5}(x)}{\sum_{j\not=i}\alpha_{ij}(x)}\right)^T.$$
If $X_n=5$, i.e., death, then we stop the process; if $\sum_{n}W_n$ exceeds a pre-specified threshold, say $t$, then we stop the process by noting that the last sojourn time is right-censored. We summarize the data generating procedure in Algorithm~\ref{algs:simulation}.

\begin{algorithm}[H]
\caption{Data Generation Procedure for Semi-Markov Simulation}
\begin{algorithmic}[1]
\State Initialize \emph{M} subjects with starting state CN
\For{each subject $m = 1, ..., M$}
    \State Set current time $t = 0$, state $X_0 = 1$ (CN) and initial sojourn time $W_0=0$
    \State Generate covariate vector ${Z}_{mij} \sim N(0, I)$ for each pair of possible $(i,j)$
    \While{$t < T_{max}$ and $X_n \neq$ Death}
        \State Calculate cumulative transition intensity matrix $\mathbf{A}(x)$
        \State Generate sojourn time $W_n \sim S(W_n > x|X_n=i)$
        \State \hspace{1em} First generate $U\sim Uniform(0,1)$, then set $W_n=(\sum_jA_{ij})^{-1}(-\log(1-U))$
        \State Set censoring indicator $\Delta_n=1$
        \If{$t + W_n > T_{max}$}
            \State Right-censor the last sojourn time
            \State Set censoring indicator $\Delta_n=0$ and $X_n=X_{n-1}$
            \State Break
        \EndIf
        \State Draw next state $X_{n+1}$ from the  multinomial distribution $\mathcal{M}(1, \mathbf{p})$
        \State Update $t = t + W_n$
    \EndWhile
\EndFor
\State \Return $(X_n, W_n, \Delta_n)$ for $m=1, \cdots, M$
\end{algorithmic}\label{algs:simulation}
\end{algorithm}

\subsection{Simulation Results}
In our simulation studies, we set $T_{max}=50$ and $M=3000$ to ensure a sufficient large sample; the dimension of $Z_{mij}$ is set to $3$ so that the dimension of $\beta$ is 30 (\text{Dim}($Z_{ij}\times$ number of possible transitions)). For each simulated dataset, the multistate Cox model was fitted based on the \texttt{coxph} function for competing risks data. The function leverages the Breslow method for handling ties and computes transition probabilities using the proposed method (Section~\ref{sec:est}), which approximates higher-order convolution terms for enhanced predictive accuracy.

\begin{figure}[!ht]
    \centering
    \begin{minipage}[t]{0.45\textwidth}
        \centering
        \includegraphics[width=\linewidth]{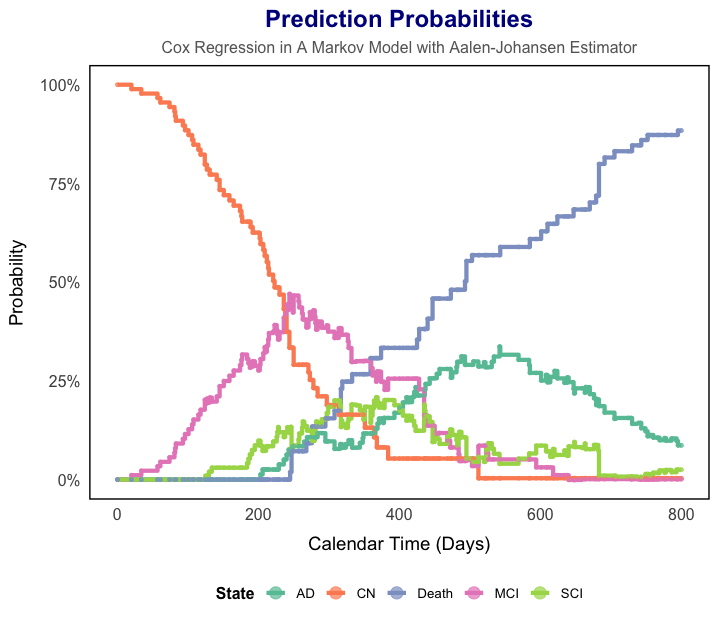}
        \caption{Prediction probabilities under a Markov model with the Aalen-Johansen estimator.}
        \label{fig:markov-aj-simu}
    \end{minipage}
    \hfill
    \begin{minipage}[t]{0.45\textwidth}
        \centering
        \includegraphics[width=\linewidth]{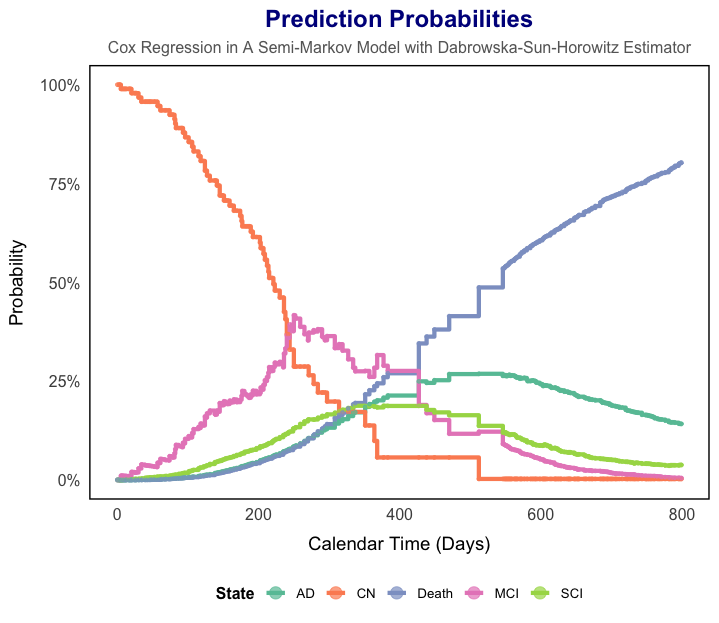}
        \caption{Prediction probabilities under a Semi-Markov model with the Dabrowska-Sun-Horowitz estimator.}
        \label{fig:semimarkov-dsh-simu}
    \end{minipage}
\end{figure}

Figures \ref{fig:markov-aj-simu} and \ref{fig:semimarkov-dsh-simu} present a comparison between the estimated transition probabilities using the AJ estimator under a Markov model and the DSH estimator under a semi-Markov model. Empirically, the DSH estimator appears smoother than the AJ estimator. While this observation lacks a formal theoretical justification, we conjecture that this smoothing effect may arise from the convolution process. Specifically, since the DSH estimator involves the renewal matrix, which is constructed as a summation of convolutions of different orders, the inherent property of convolutions to produce smoother functions might contribute to this observed effect.


\newpage
\section{The EBMT Example}
\subsection{Dataset}
We use the EBMT dataset from the \texttt{mstate} package \citep{de2011mstate, saccardi2023benchmarking} to illustrate the application of multi-state models in clinical research. This dataset includes data on 2,279 patients who underwent hematopoietic stem cell transplantation at the European Society for Blood and Marrow Transplantation (EBMT) between 1985 and 1998. The patients' clinical trajectories are captured across six key states: transplantation (TX), platelet recovery (PLT Recovery), adverse events (Adverse Event), concurrent recovery and adverse events (Recovered and Adverse Event), relapse while alive (Alive in Relapse), and relapse or death (Relapse/Death).

The figure (Figure~\ref{fig:transition-diagram}) illustrates the transitions between these states. TX represents the starting point for all patients, with possible transitions to PLT Recovery, Adverse Event, or directly to Relapse/Death. Similarly, patients in PLT Recovery or Adverse Event may transition to subsequent states such as Relapse/Death or Alive in Relapse. Alive in Relapse is a transient state, while Relapse/Death represents the absorbing state, indicating the end of the clinical trajectory.

\begin{table}[ht]
    \centering
    \caption{Summary of Transition Frequencies and Relative Frequencies}
    
    \label{tab:transition-data}
    \begin{tabular}{lllrr}
        \hline\hline
        \textbf{From State} & \textbf{To State} & \textbf{Frequency} & \textbf{Relative Frequency} \\ \hline
        TX                 & PLT              & 785               & 0.403 \\
        TX                 & AE               & 907               & 0.466 \\
        TX                 & Relapse          & 95                & 0.0488 \\
        TX                 & Death            & 160               & 0.0822 \\
        PLT                & Rec \& AE        & 227               & 0.601 \\
        PLT                & Relapse          & 112               & 0.296 \\
        PLT                & Death            & 39                & 0.103 \\
        AE                 & Rec \& AE        & 433               & 0.631 \\
        AE                 & Relapse          & 56                & 0.0816 \\
        AE                 & Death            & 197               & 0.287 \\
        Rec \& AE          & Relapse          & 107               & 0.439 \\
        Rec \& AE          & Death            & 137               & 0.561 \\ \hline\hline
    \end{tabular}
\end{table}

This dataset also includes several covariates, such as transplantation year, patient age, prophylaxis status, and donor-recipient gender matching, enabling covariate-adjusted analyses. Using a multi-state model, researchers can examine the dynamic interplay among recovery, adverse events, relapse, and death. These models provide insights into the factors influencing survival trajectories and patient outcomes following hematopoietic stem cell transplantation.

\begin{figure}[ht]
    \centering
    \includegraphics[width=0.5\linewidth]{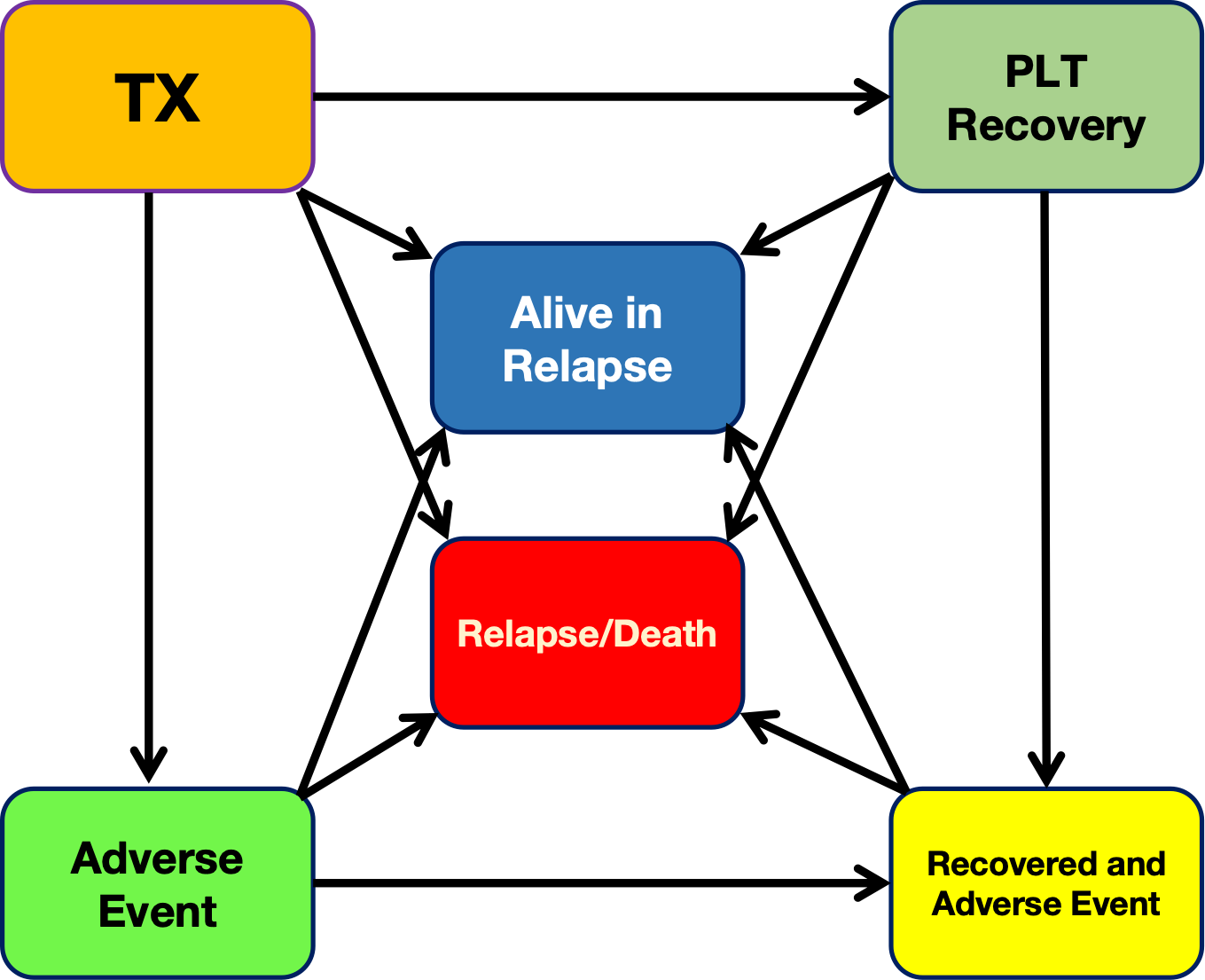}
    \caption{State transition diagram illustrating the possible transitions between clinical states in the EBMT dataset. Temporary states include TX (transplantation), PLT Recovery, Adverse Event, and Recovered and Adverse Event, while absorbing states are Alive in Relapse and Relapse/Death.}
    \label{fig:transition-diagram}
\end{figure}

\subsection{Pre-processing and Visualization}
The pre-processing steps involved preparing the EBMT dataset for multi-state analysis using the \texttt{mstate} package. First, the transition matrix was defined, specifying the allowed transitions between states, such as ``TX'' (transplantation) to ``PLT Recovery'' or ``Adverse Event.'' The dataset was then restructured using the \texttt{msprep} function, which maps the time-to-event data and corresponding event statuses to a multi-state format while retaining key covariates such as age group, prophylaxis status, donor-recipient match, and transplantation year. Dummy variables were created for the categorical variable \texttt{agecl} (age classes) to facilitate modeling, and these were appended to the dataset after removing the original \texttt{agecl} column. 

Additionally, the sojourn time (time spent in a state) was calculated as the difference between \texttt{Tstop} and \texttt{Tstart}. To standardize the time intervals, \texttt{Tstart} was reset to 0, and \texttt{Tstop} was updated to match the sojourn time. Furthermore, categorical variables such as \texttt{proph} (prophylaxis) and \texttt{match} (donor-recipient match) were transformed into binary numerical variables for analysis.  The final dataset, \texttt{ms\_data\_exp\_sojourn}, contains the cleaned and expanded covariates, ready for modeling and analysis. 

Figure~\ref{fig:sankey-diagram} provides an enhanced Sankey diagram illustrating the transitions between clinical states in the EBMT dataset. The diagram captures the flow of patients through six key states: ``TX'' (transplantation), ``PLT Recovery,'' ``Adverse Event,'' ``Recovered and Adverse Event,'' ``Alive in Relapse,'' and ``Relapse/Death.'' Each state represents a clinically significant phase in a patient's journey post-transplantation. Temporary states, such as TX, PLT Recovery, Adverse Event, and Recovered and Adverse Event, allow transitions to subsequent states, while Alive in Relapse and Relapse/Death are absorbing states, indicating the final clinical outcomes.

The width of each flow between states is proportional to the frequency of transitions, as summarized in Table~\ref{tab:transition-data}. For example, 46.6\% of patients transitioned from TX to Adverse Event, while 40.3\% transitioned to PLT Recovery. Only 4.88\% and 8.22\% of patients progressed directly from TX to Relapse and Death, respectively. Similarly, patients in PLT Recovery were most likely to transition to Recovered and Adverse Event (60.1\%), while 29.6\% transitioned to Relapse and 10.3\% to Death. The table also highlights the high likelihood of transition from Adverse Event to Recovered and Adverse Event (63.1\%) or directly to Death (28.7\%).

The visualization uses a color palette to distinguish between origin states, with flows smoothly curving to their respective destination states. Stratum boxes at each axis represent the states, with labels indicating the state names. The x-axis represents the sequence of transitions, while the y-axis represents the flow frequency. The Sankey diagram provides a comprehensive view of patient outcomes and the interplay between recovery, adverse events, relapse, and death.

\begin{figure}[ht]
    \centering
    \includegraphics[width=0.8\linewidth]{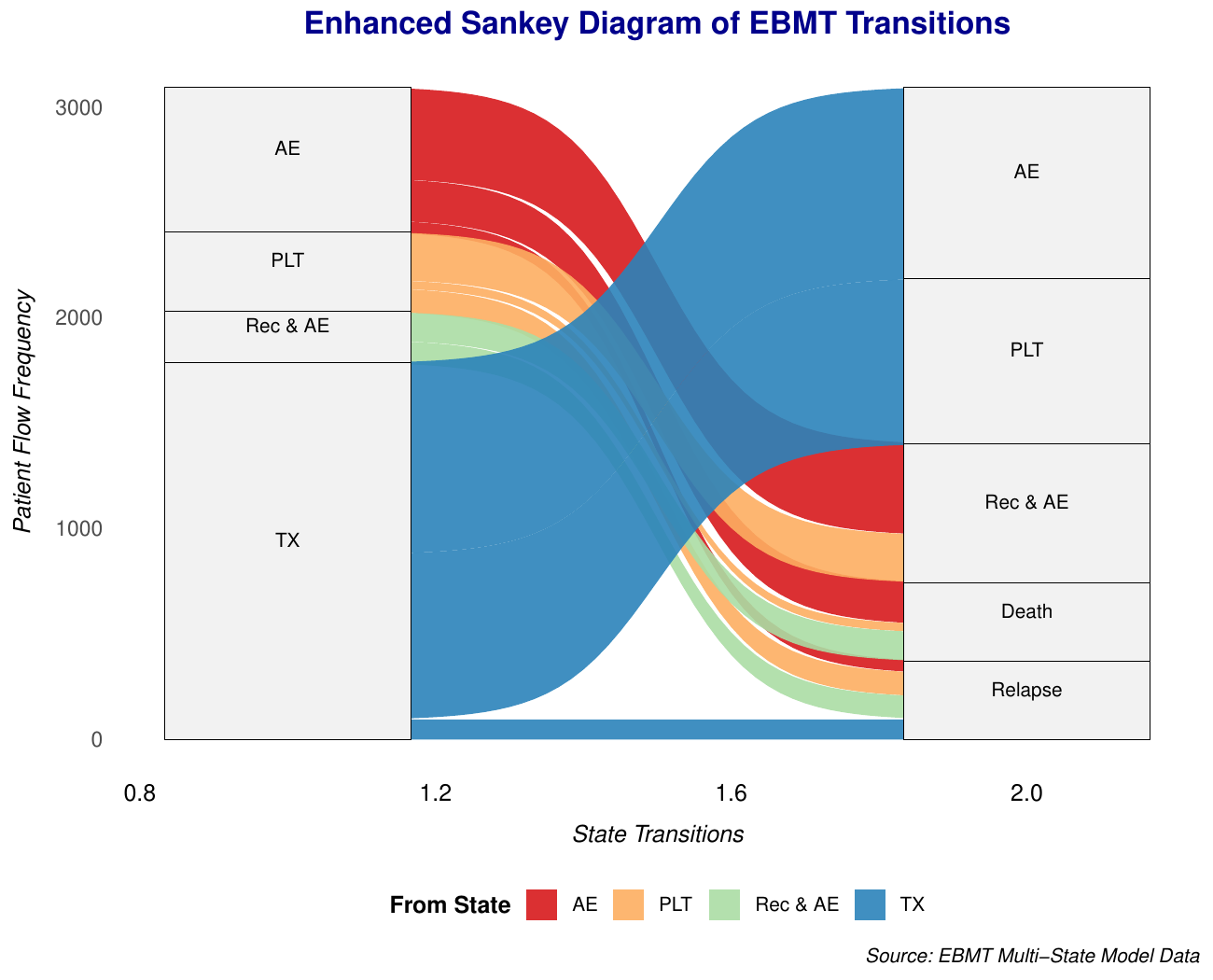}
    \caption{Enhanced Sankey diagram illustrating patient transitions across clinical states in the EBMT dataset. The flows represent transition frequencies between states, with temporary and absorbing states clearly indicated. The diagram provides insights into the dynamics of patient trajectories post-transplantation.}
    \label{fig:sankey-diagram}
\end{figure}

\subsection{Data Analysis}

To analyze the dynamic transition probabilities across clinical states in the EBMT dataset, we employed four different modeling approaches: 
\begin{enumerate}
    \item A Markov model with the Aalen-Johansen (AJ) estimator (Figure~\ref{fig:markov-aj}).
    \item A Markov model with the Dabrowska-Sun-Horowitz (DSH) estimator (Figure~\ref{fig:markov-dsh}).
    \item A Semi-Markov model with the Aalen-Johansen (AJ) estimator (Figure~\ref{fig:semi-markov-aj}).
    \item A Semi-Markov model with the Dabrowska-Sun-Horowitz (DSH) estimator (Figure~\ref{fig:semi-markov-dsh}).
\end{enumerate}
The estimated coefficients and standard deviations of the semi-Markov model with the DSH estimator is given in Table~\ref{tab:data_results}.

\begin{table}[ht]
\centering
\caption{Semi-Markov model with DSH estimator}
\begin{tabular}{lccccc}
\toprule
\textbf{Variable} & \textbf{Coef} & \textbf{exp(Coef)} & \textbf{SE(Coef)} & \textbf{z} & \textbf{p-value}  \\
\midrule
\texttt{agecl20-40}     & 0.108 & 1.114  & 0.0445 & 2.437 & 0.0148  \\
\texttt{agecl>40}       & 0.244 & 1.277  & 0.0515 & 4.758 & 1.96e-06  \\
\texttt{proph}          & -0.177 & 0.837  & 0.0422 & -4.212 & 2.53e-05  \\
\texttt{gender\_mismatch} & -0.0181 & 0.9821  & 0.0410 & -0.441 & 0.6591   \\
\bottomrule
\end{tabular}\label{tab:data_results}
\end{table}

\subsubsection{Key Observations}

\begin{enumerate}
    \item \textbf{Markov Model (AJ vs. DSH):}  
    In the Markov model, the Aalen-Johansen estimator (Figure~\ref{fig:markov-aj}) and the Dabrowska-Sun-Horowitz estimator (Figure~\ref{fig:markov-dsh}) showed similar trends in transition probabilities across states. However, the DSH estimator provided smoother trajectories, particularly for states such as ``Rec \& AE'' and ``Relapse.'' This indicates that the DSH estimator may better accommodate variations in transition intensities, particularly for less frequent transitions. Both models captured the rapid rise in the probability of ``PLT Recovery'' and the eventual dominance of absorbing states like ``Relapse/Death.''

    \item \textbf{Semi-Markov Model (AJ vs. DSH):}  
    The Semi-Markov models with the Aalen-Johansen (Figure~\ref{fig:semi-markov-aj}) and Dabrowska-Sun-Horowitz (Figure~\ref{fig:semi-markov-dsh}) estimators exhibited distinct differences. The AJ estimator tended to produce sharper changes in probabilities over time, particularly during transitions from temporary states (e.g., TX to PLT Recovery or Adverse Event). Conversely, the DSH estimator yielded smoother and more gradual transitions, capturing the inherent non-Markovian nature of the process more effectively. This smoothness was especially apparent in transitions to ``Rec \& AE'' and ``Relapse.''

    \item \textbf{Markov vs. Semi-Markov Models:}  
    The Markov models assumed memoryless transitions, leading to faster probabilities accumulating in absorbing states like ``Relapse/Death.'' In contrast, the Semi-Markov models accounted for sojourn times, resulting in delayed transitions to absorbing states. For example, the probability of ``Relapse/Death'' increased more gradually in the Semi-Markov models, particularly when using the DSH estimator. This highlights the importance of incorporating sojourn times to more accurately capture patient trajectories.

    \item \textbf{Effect of Estimators (AJ vs. DSH):}  
    Across both Markov and Semi-Markov frameworks, the DSH estimator consistently provided smoother and more interpretable transition probability curves compared to the AJ estimator. This suggests that the DSH estimator may be more suitable for clinical datasets where transitions are influenced by covariates and exhibit non-Markovian behavior.
\end{enumerate}

\begin{figure}[!ht]
    \centering
    \begin{minipage}[t]{0.45\textwidth}
        \centering
        \includegraphics[width=\linewidth]{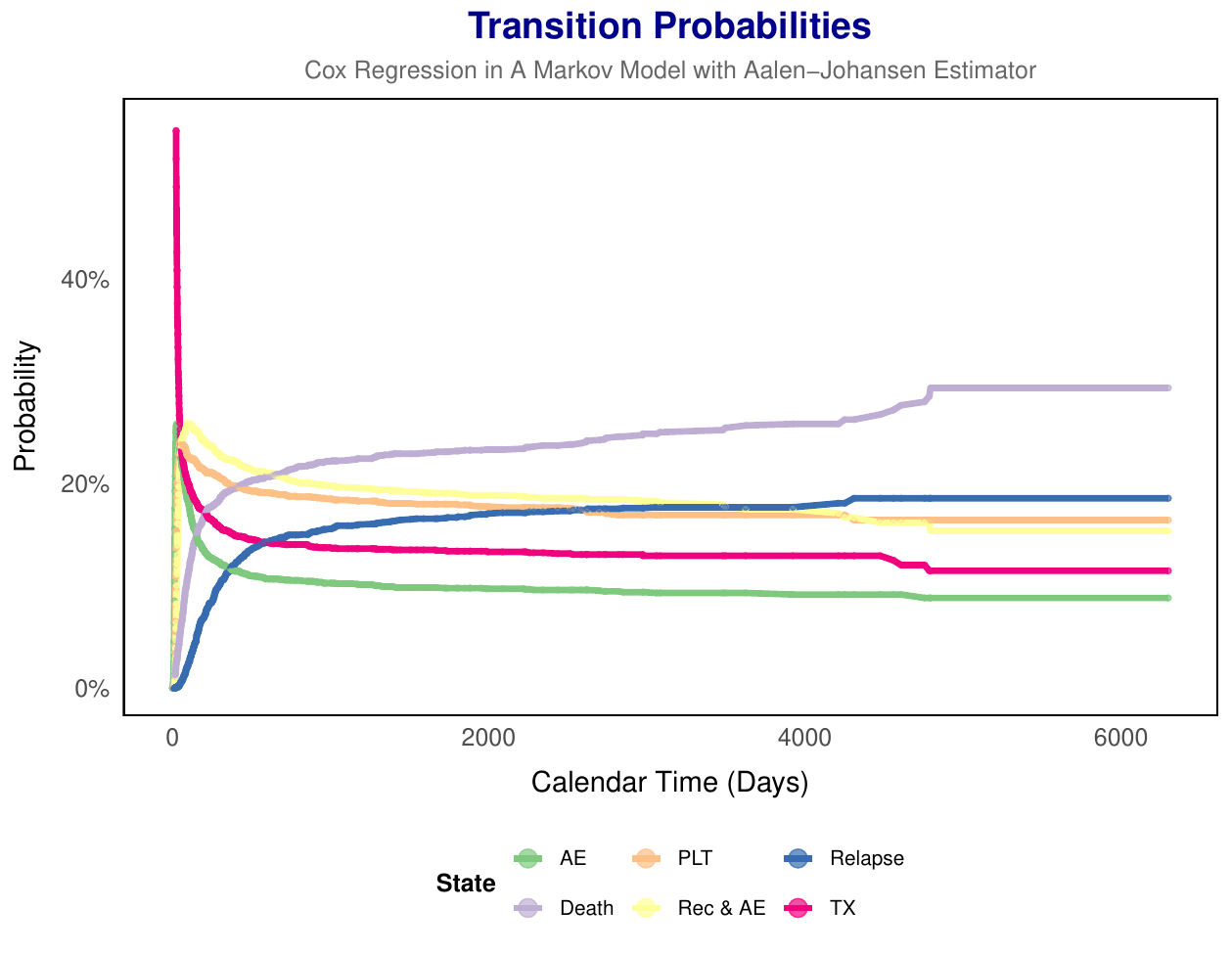}
        \caption{Transition probabilities under a Markov model with the Aalen-Johansen estimator.}
        \label{fig:markov-aj}
    \end{minipage}
    \hfill
    \begin{minipage}[t]{0.45\textwidth}
        \centering
        \includegraphics[width=\linewidth]{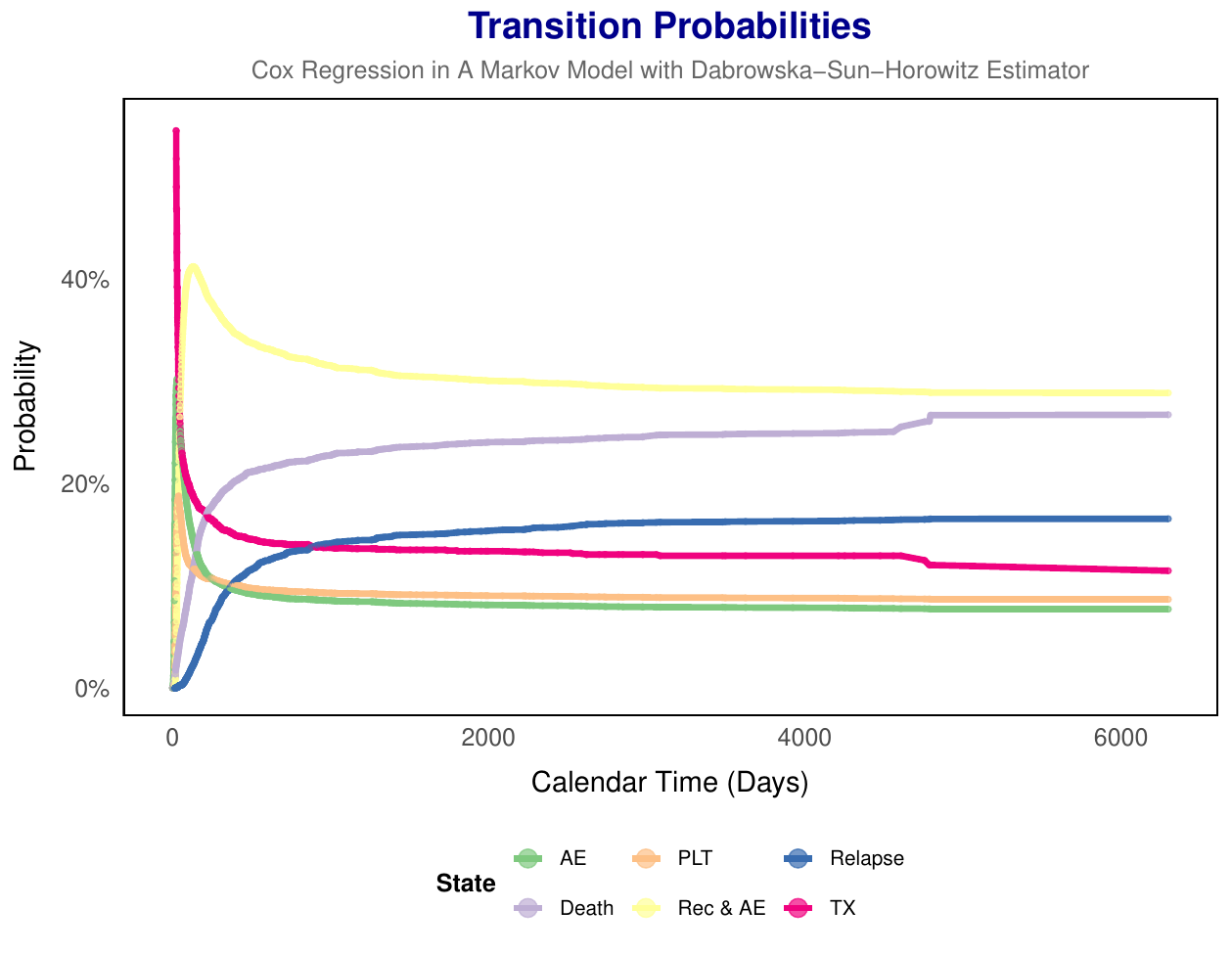}
        \caption{Transition probabilities under a Markov model with the Dabrowska-Sun-Horowitz estimator.}
        \label{fig:markov-dsh}
    \end{minipage}
    \vspace{0.5cm}  
    \begin{minipage}[t]{0.45\textwidth}
        \centering
        \includegraphics[width=\linewidth]{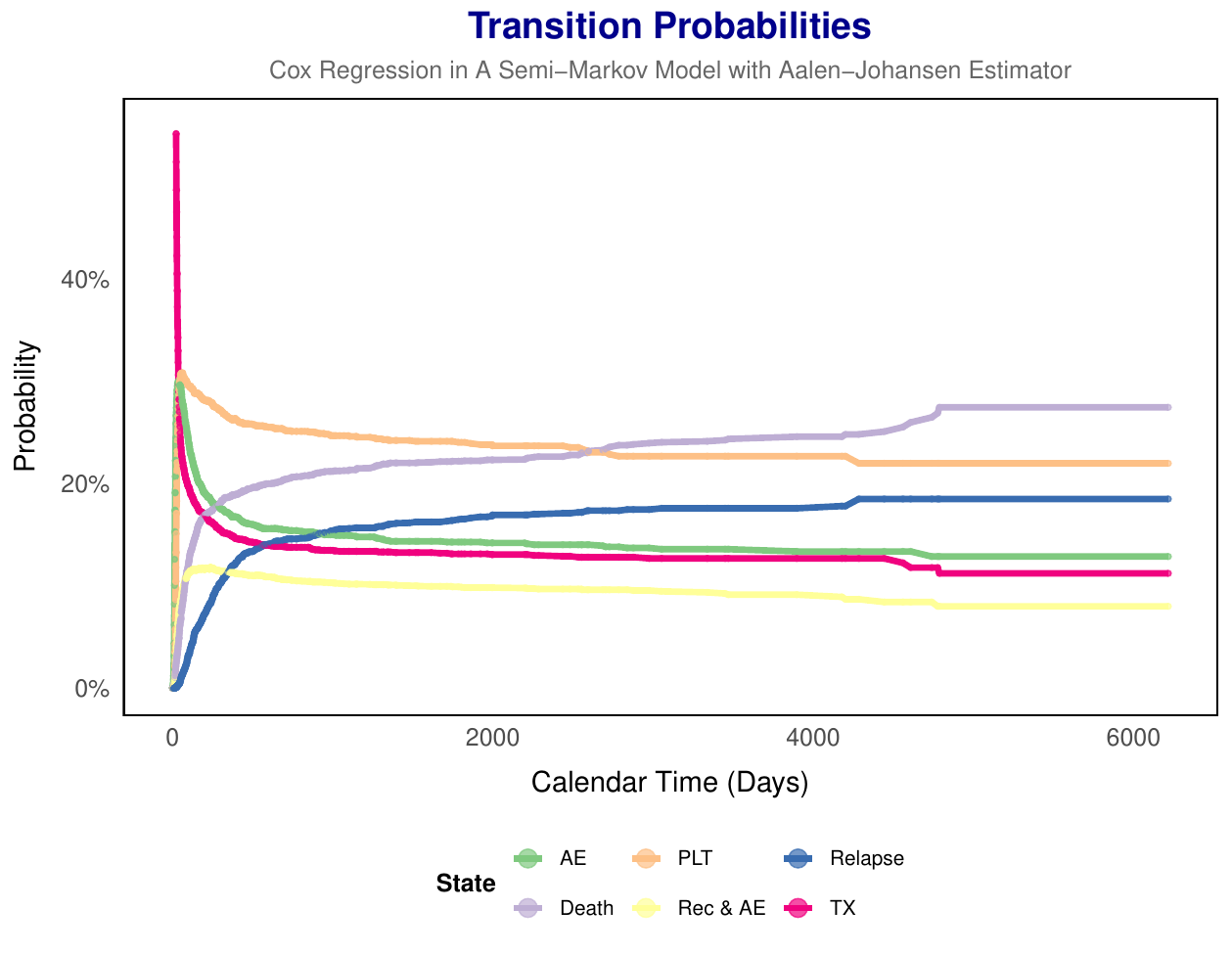}
        \caption{Transition probabilities under a Semi-Markov model with the Aalen-Johansen estimator.}
        \label{fig:semi-markov-aj}
    \end{minipage}
    \hfill
    \begin{minipage}[t]{0.45\textwidth}
        \centering
        \includegraphics[width=\linewidth]{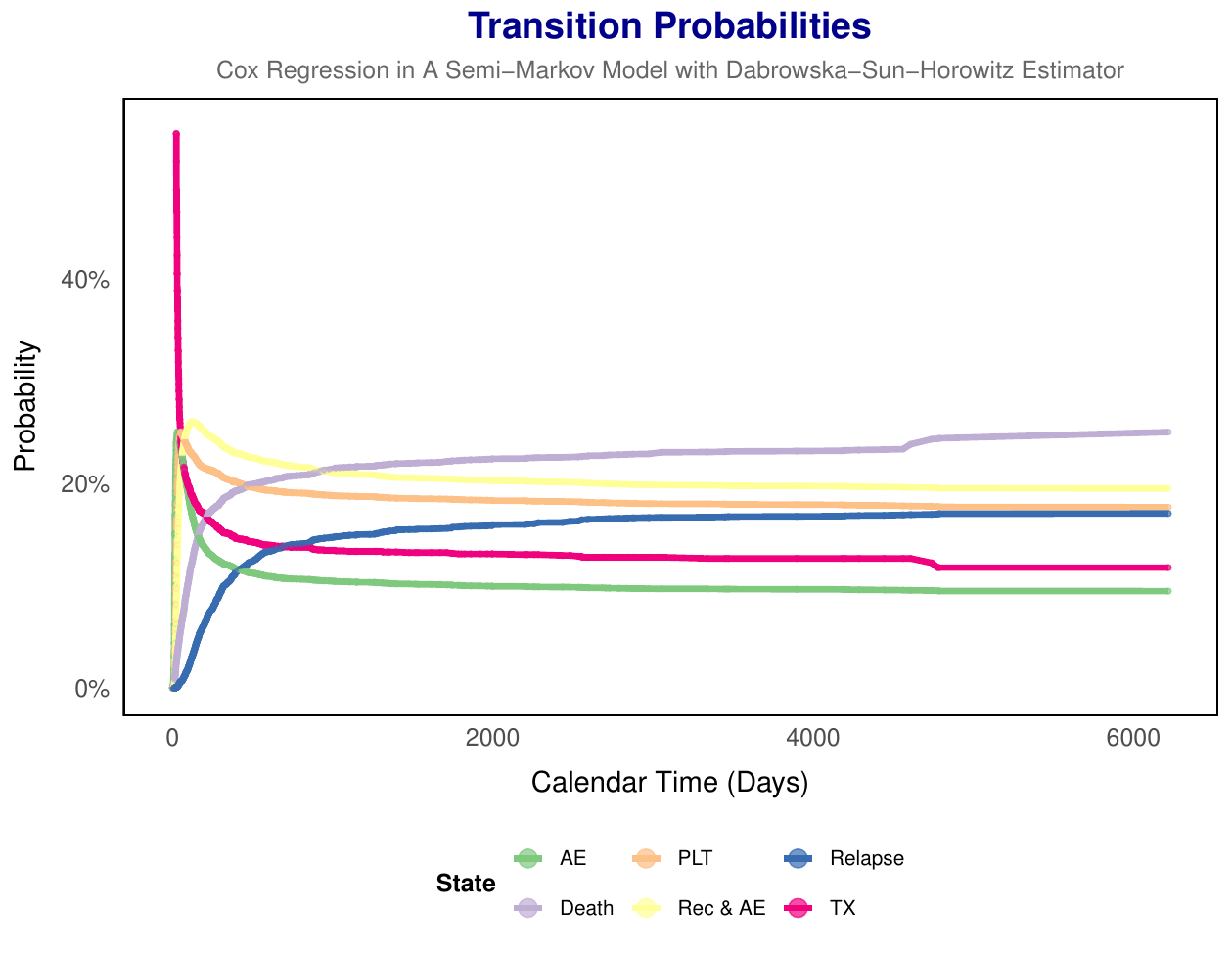}
        \caption{Transition probabilities under a Semi-Markov model with the Dabrowska-Sun-Horowitz estimator.}
        \label{fig:semi-markov-dsh}
    \end{minipage}
\end{figure}

\subsubsection{Clinical Implications}

These findings underscore the importance of selecting an appropriate modeling framework and estimator when analyzing multi-state processes. While the Markov models offer simplicity, the Semi-Markov models provide a more realistic representation of clinical pathways by incorporating sojourn times. Furthermore, the Dabrowska-Sun-Horowitz estimator's ability to produce smoother probability curves makes it a valuable tool for studying patient trajectories and understanding the nuanced effects of covariates.

\section{Conclusion and Discussion}

This paper investigates transition probability estimation under Markov and Semi-Markov frameworks, utilizing the Aalen-Johansen (AJ) and Dabrowska-Sun-Horowitz (DSH) estimators to analyze patient transitions in the EBMT dataset. The results highlight key distinctions between these modeling approaches and their implications for clinical research. The Markov model, which assumes memoryless transitions, demonstrated faster accumulation of probabilities in absorbing states such as ``Relapse'' and ``Death,'' as it does not account for the time spent in transient states. In contrast, the Semi-Markov model incorporates sojourn times, providing a more realistic and temporally sensitive depiction of patient trajectories. This adjustment delayed transitions to absorbing states, offering insights into the prolonged nature of certain clinical events. The choice of estimator further influenced the results. The AJ estimator captured sharper probability changes, reflecting its responsiveness to immediate variations in transition intensities. However, the DSH estimator produced smoother and more interpretable probability curves, effectively accommodating the complexities of non-Markovian behavior. These differences were particularly pronounced in states such as ``Recovered and Adverse Event'' or ``Relapse,'' where prolonged transitions occur.

These findings emphasize the importance of selecting appropriate models and estimators for analyzing multi-state processes. While Markov models are computationally efficient, they may oversimplify patient trajectories by neglecting sojourn times. Semi-Markov models, though computationally more demanding, offer a more accurate representation of real-world clinical scenarios, particularly when combined with the DSH estimator, which better reflects the nuances of patient transitions. Future research should focus on integrating advanced covariate structures, such as time-dependent and interaction effects, to further enhance model accuracy. Additionally, extending the frameworks to accommodate generalized non-Markovian models, such as phase-type distributions for sojourn times, may provide deeper insights. Improved visualization tools, like the enhanced Sankey diagram presented here, can also play a crucial role in communicating complex transition dynamics to a broader audience. By aligning statistical methodologies with the needs of clinical decision-making, these advancements have the potential to improve our understanding of patient trajectories, ultimately contributing to better clinical outcomes.

\section{Appendix}
\subsection{A Table of Notations}

\begin{table}[h!]
\centering
\caption{Summary of Notations}
\begin{tabular}{ll}
\toprule
\textbf{Symbol} & \textbf{Description} \\
\midrule
$X_n$ & State of the process after the $n$-th transition \\
$T_n$ & Time of the $n$-th transition \\
$W_n$ & Sojourn time in state $X_n$, $W_n = T_n - T_{n-1}$ \\
$Q_{ij}(x)$ & Semi-Markov kernel: Probability of transitioning from state $i$ to $j$ within sojourn time $x$ \\
$P_{ij}(s, t)$ & Transition probability matrix from state $i$ to $j$ between calendar times $s$ and $t$ \\
$A_{ij}(x)$ & Cumulative intensity matrix for transitions ($t$ for calendar time and $x$ for sojourn time) \\
$G_i(x)$ & Survival probability in state $i$, $G_i(x) = 1 - \sum_k Q_{ik}(x)$ ($t$ for calendar time and $x$ for sojourn time) \\
$N_{mij}$ & Counting process that counts transitions from $i$ to $j$ for $m$-th subject\\
\bottomrule
\end{tabular}
\label{tab:notations}
\end{table}

\subsection{Some Technical Difficulties}
As pointed out by Professor Dabrowska (personal communication), there is a drawback of semi-Markov Cox regression models that limits its usage to wide applications. We illustrate her point below.

Convolutions are computed as products of the transition probability matrix evaluated at appropriate points, followed by summing the resulting matrices. Importantly, after three or four convolutions, there is no change in the estimate.

It is therefore impractical and unnecessary to allocate resources to computations for the infinite convolution of the transition probability matrix. While theoretically, the error is still to be estimated. Mathematically, we have
$$\widehat{\mathbf{Q}}^{(p)}(x)=\int_0^x\widehat{\mathbf{Q}}(du)\widehat{\mathbf{Q}}^{(p-1)}(x-u)$$
and since $\widehat{\mathbf{Q}}$ is a jump process, we need both $\widehat{\mathbf{Q}}(\Delta u)$ and $\widehat{\mathbf{Q}}^{(p-1)}(\Delta (x-u))$ to be non-zero.

\subsection{Some Illustrative Examples}
\subsubsection{The Markov case}
In the Markov case, we need to compute the product integral of form $\Prodi_{u\in(s,t]}(\mathbf{I}+\widehat{\mathbf{A}}(du))$, which only involves finite terms (the number of distinct event times plus $1$). For example, if the $i$-th event time is $t_i$, then $\mathbf{P}(0, t)$ for $t\in(t_2,t_3]$ can be estimated as
$$\widehat{\mathbf{P}}(0,t)=(\mathbf{I}+\widehat{\mathbf{A}}(\Delta t_1))(\mathbf{I}+\widehat{\mathbf{A}}(\Delta t_2)).$$
If we assume the first transition is $1\rightarrow 2$ and the second is $1\rightarrow 3$, then in the nonparametric case we have
\begin{align*}
    \widehat{\mathbf{A}}(\Delta t_1))&=\left(\begin{matrix}
        -\frac{N_{12}(\Delta t_1)}{Y_i(t_1)}& \frac{N_{12}(\Delta t_1)}{Y_i(t_1)} & 0&\cdots &0\\
        0&0&0&\cdots& 0\\
        \vdots&\vdots&\vdots&\vdots&\vdots\\
        0&\cdots &\cdots &\cdots &0
    \end{matrix}\right),   \\  
    \widehat{\mathbf{A}}(\Delta t_2))&=\left(\begin{matrix}
        -\frac{N_{13}(\Delta t_2)}{Y_i(t_2)}& 0& \frac{N_{13}(\Delta t_1)}{Y_i(t_2)} &\cdots &0\\
        0&0&0&\cdots& 0\\
        \vdots&\vdots&\vdots&\vdots&\vdots\\
        0&\cdots &\cdots &\cdots &0
    \end{matrix}\right)
\end{align*}
and
\[
\widehat{\mathbf{P}}(0,t) =
\begin{bmatrix}
\left(1-\frac{N_{12}(\Delta t_1)}{Y_i(t_1)}\right)\left(1-\frac{N_{13}(\Delta t_2)}{Y_i(t_2)}\right) & \frac{N_{12}(\Delta t_1)}{Y_i(t_1)} & \frac{N_{13}(\Delta t_2)}{Y_i(t_2)}\left(1-\frac{N_{12}(\Delta t_1)}{Y_i(t_1)}\right) &\cdots & 0 \\
0 & 1 & 0 & \cdots & 0 \\
\vdots & \vdots & \vdots & \vdots & \vdots \\
0 & \cdots & \cdots & 0 & 1
\end{bmatrix}
\]
for any $t\in(t_2,t_3]$. The cumulative hazard $A_{ij}$ and its increments can be computed from standard packages in $R$. In the semi-parametric case, we replace $\frac{N_{ij}(\Delta t_k)}{Y_i(t_k)}$ with $\frac{N_{ij}(\Delta t_k)}{mS_{ij}^{(0)}(t_k,\beta)}$. For prediction probabilities, we have
$$\widehat{\mathbf{P}}(s,t)=\frac{\widehat{\mathbf{P}}(0,t)}{\widehat{\mathbf{P}}(0,s)}=\Prodi_{u\in(s,t]}(\mathbf{I}+\widehat{\mathbf{A}}(du)).$$

\subsubsection{The semi-Markov case}
In the semi-Markov case, we first need to compute the one-step transition probability or the semi-Markov kernel $\widehat{\mathbf{Q}}(x)=\int_0^x\widehat{\mathbf{G}}(u-)\widehat{\mathbf{A}}(du)$. If we assume the first transition is $1\rightarrow 2$ and the second is $1\rightarrow 3$ (in terms of sojourn time), then in the nonparametric case we have
\begin{align*}
    \widehat{\mathbf{A}}(\Delta x_1)&=\left(\begin{matrix}
        0& \frac{N_{12}(\Delta x_1)}{Y_i(x_1)} & 0&\cdots &0\\
        0&0&0&\cdots& 0\\
        \vdots&\vdots&\vdots&\vdots&\vdots\\
        0&\cdots &\cdots &\cdots &0
    \end{matrix}\right),\ \widehat{\mathbf{G}}(x_1-)=\left(\begin{matrix}
        1& 0 & 0&\cdots &0\\
        0&1&0&\cdots& 0\\
        \vdots&\vdots&\vdots&\vdots&\vdots\\
        0&\cdots &\cdots &\cdots &1
    \end{matrix}\right),   \\  
    \widehat{\mathbf{A}}(\Delta x_2)&=\left(\begin{matrix}
        0& 0& \frac{N_{13}(\Delta x_1)}{Y_i(x_2)} &\cdots &0\\
        0&0&0&\cdots& 0\\
        \vdots&\vdots&\vdots&\vdots&\vdots\\
        0&\cdots &\cdots &\cdots &0
    \end{matrix}\right),\ \widehat{\mathbf{G}}(x_2-)=\left(\begin{matrix}
        1-\widehat{A}_{12}(x_1)& 0 & 0&\cdots &0\\
        0&1&0&\cdots& 0\\
        \vdots&\vdots&\vdots&\vdots&\vdots\\
        0&\cdots &\cdots &\cdots &1
    \end{matrix}\right).
\end{align*}
Hence, for $t\in(x_2,x_3]$, we have
\begin{align*}
    \widehat{\mathbf{Q}}(t)&=\widehat{\mathbf{A}}(\Delta x_1)\widehat{\mathbf{G}}(x_1-))+ \widehat{\mathbf{A}}(\Delta x_2))\widehat{\mathbf{G}}(x_2-)\\
    &=\widehat{\mathbf{A}}(\Delta x_1)+\widehat{\mathbf{A}}(\Delta x_2).
\end{align*}
The two-step transition matrix is
\begin{align*}
     \widehat{\mathbf{Q}}^{(2)}(t)&=\int_0^t \widehat{\mathbf{Q}}(du)\widehat{\mathbf{Q}}(t-u)\\
     &=\widehat{\mathbf{Q}}(\Delta x_1)\widehat{\mathbf{Q}}(t-x_1) + \widehat{\mathbf{Q}}(\Delta x_2)\widehat{\mathbf{Q}}(t-x_2).
\end{align*}
The renewal matrix is estimated by $\widehat{\mathbf{R}}(t)=\mathbf{I} + \widehat{\mathbf{Q}}(t) + \widehat{\mathbf{Q}}^{(2)}(t)$ and finally we have the transition probability matrix estimated by
\begin{align*}
    \mathbf{\widehat{P}}(t)&=\int_0^t\widehat{\mathbf{R}}(du)\widehat{\mathbf{G}}(t-u)\\
    &=\widehat{\mathbf{R}}(\Delta x_1)\widehat{\mathbf{G}}(t-x_1)+\widehat{\mathbf{R}}(\Delta x_2)\widehat{\mathbf{G}}(t-x_2).
\end{align*}
The semi-parametric case can be derived similarly.

\subsection{Packages in R}
We provide a table that summarizes existing packages for multi-state models in $R$.

\begin{sidewaystable}[!htbp]
\caption{Comparison of R Packages for Survival and Multi-State Models.}
\renewcommand{\arraystretch}{1.2} 
\centering
\begin{tabular}{p{2.2cm}p{3cm}p{4cm}p{4cm}p{3cm}}
\hline\hline
\textbf{R Package} & \textbf{Targeted Dataset} & \textbf{Pros} & \textbf{Cons} & \textbf{References} \\ 
\hline
\texttt{mstate} & Multi-state and competing risks data & Comprehensive analysis of transition probabilities & Complex syntax for beginners & \citet{de2011mstate} \\ 

\texttt{msm} & Panel data in multi-state models & Flexible for general multi-state modeling & Limited support for time-dependent covariates & \citet{jackson2011multi} \\ 
\texttt{TPmsm} & Three-state illness-death models & Techniques for handling heavily censored cases & Limited to three-state models & \citet{araujo2015tpmsm} \\

\texttt{timereg} & Competing risks and survival data & Advanced support for regression in competing risks & Requires significant statistical expertise & \citet{scheike2011analyzing} \\ 

\texttt{msSurv} & Nonparametric estimation for multi-state models & Nonparametric estimation without parametric assumptions & Focused on nonparametric methods & \citet{zhao2012mssurv} \\ 

\texttt{SemiMarkov} & Semi-Markov models & Handles semi-Markov processes & Only deals with parametric case & \citet{kruger2015semimarkov} \\ 

\texttt{ctmcd} & Markov chains for discrete-time data & Simple for discrete-time Markov modeling & Only applicable to discrete-time processes & \citet{wilkinson2017ctmcd} \\ 

\texttt{hhsmm} & Hidden hybrid Markov/semi-Markov models & Supports hybrid models with Markov/semi-Markov features & High computational complexity & \citet{spezia2022hhsmm} \\ 

\texttt{survminer} & Survival analysis visualization & User-friendly for Kaplan-Meier and Cox models & Limited functionality for advanced modeling & \citet{kassambara2023survminer} \\ 

\texttt{survival} & General survival analysis & Widely used, extensive documentation & Limited built-in support for visualization & \citet{therneau2020survival} \\ 

\texttt{cmprisk} & Competing risks data & Widely used, extensive documentation & Not for general multi-state models & \citet{gray2014package} \\ 

\texttt{eha} & Event history analysis & Flexible handling of survival and transition models & Limited support for multi-state models & \citet{brostrom2024eha} \\ 

\texttt{mvna} & Multi-state models for nonparametric Aalen-Johansen estimators & Straightforward estimation of cumulative transition probabilities & Limited to nonparametric estimation & \citet{allignol2008mvna} \\ 

\texttt{flexsurv} & Parametric multi-state models & Interpretable and fast computation & Limited to parametric estimation & \citet{jackson2016flexsurv}\\
\hline\hline
\end{tabular}
\end{sidewaystable}

\newpage
\subsection{Proof of Theorem~\ref{thm:weak_conv}}\label{sec:proof_weak_conv}

\citet{andersen1982cox} utilized Taylor expansion, Rebolledo's martingale CLT, and Lenglart's inequality to establish consistency and asymptotic normality. However, in the Markov renewal setting, the latter two tools fail because the quantity 
\[ 
N_{ij}(x) - \int_0^x Y_i(u) \alpha_{ij}(du) 
\]
does not satisfy the martingale property. To address this issue, \citet{sun1992markov} and \citet{dabrowska1995estimation} employed Billingsley's theorem \citep{billingsley2013convergence} as an alternative to Rebolledo's martingale CLT and used the functional strong law of large numbers (SLLN) to replace Lenglart's inequality. Despite its robustness, their original proof lacks organization and remains challenging for new learners. In this work, we re-organize their approach to make it more accessible and reader-friendly.

The key idea begins with the Taylor expansion of $\widehat{U}(x, \beta)$ around the true parameter value $\beta_0$, expressed as:
\[
\widehat{U}(x, \beta) = \widehat{U}(x, \beta_0) - \mathscr{I}(x, \beta^*)(\beta - \beta_0),
\]
where 
\[
\mathscr{I}(x, \beta) = \sum_m \sum_{i \neq j} \int_0^x V_{ij}(u, \beta) N_{mij}(du),
\]
and $\beta^*$ lies on the line segment between $\beta$ and $\beta_0$. Plugging in $\beta = \widehat{\beta}$ and rearranging terms gives:
\[
\sqrt{m}(\widehat{\beta} - \beta_0) = \left(\frac{\mathscr{I}(\tau, \beta^*)}{m}\right)^{-1} \frac{1}{\sqrt{m}} \widehat{U}(\tau, \beta_0).
\]

The weak convergence of $\widehat{\beta}$ is then derived from three components: the consistency of $\widehat{\beta}$, the uniform convergence of $\mathscr{I}(x, \beta^*)$, and the weak convergence of $\widehat{U}(x, \beta_0)$. Similarly, the weak convergence of $\widehat{A}_{0ij}$ and the asymptotic independence follow from an analogous argument in Chapter VII.2 of \citet{andersen2012statistical}.

\subsubsection{Assumptions}\label{sec:regularity}
The following conditions are specified for $i,j\in\{1,2,\cdots,r\}$.
\begin{itemize}
    \item[(i)] $Z_{mij}$'s are bounded a.s.;
    \item[(ii)] There exist a neighborhood of $\beta_0$ (say $B$), such that for $k=0,1,2$,
    $$\sup_{\beta\in B,x\in[0,\tau]}\lVert\frac{1}{m}S_{ij}^{(k)}(x,\beta)-s_{ij}^{(k)}(x,\beta)\lVert\rightarrow 0$$
    in probability as $m\rightarrow\infty$ (this can be verified using functional LLN);
    \item[(iii)] $s_{ij}^{(0)}(\cdot,\beta_0)$ is bounded away from $0$ for $x\in[0,\tau]$;
    
    \item[(iv)] The matrix $\Sigma(x,\beta)=\sum_{i,j\le r}\int_0^xv_{ij}(u,\beta)s_{ij}^{(0)}(u,\beta)\alpha_{0,ij}(u)du$ is positive definite;
    \item[(v)] $A_{0ij}(\tau)=\int_0^\tau\alpha_{0ij}(x)dx<\infty$.
\end{itemize}

\subsubsection{Consistency of $\hat{\beta}$}
The key to prove consistency of $\widehat{\beta}$ is to show that the profile log-likelihood function $C(\tau,\beta)$ converges pointwise to a concave limit (in $\beta$), say, $f(\beta)$. Then by Theorem II.1 of \citet{andersen1982cox} we have that the maximizer of $C(\tau,\beta)$ converges in probability to the maximizer of $f(\beta)$, which is indeed $\beta_0$.
\begin{lemma}[\citet{sun1992markov}] $\widehat{\beta}$ is weakly consistent.
    
\end{lemma}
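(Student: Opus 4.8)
The plan is to reduce weak consistency to a deterministic optimization problem through a concavity (convex-analysis) argument, substituting a functional law of large numbers for the martingale machinery that is unavailable in the sojourn-time scale. First I would record that $\beta \mapsto C(\tau,\beta)$ is concave: its Hessian is $-\mathscr{I}(\tau,\beta)=-\sum_m\sum_{i\neq j}\int_0^\tau V_{ij}(u,\beta)\,N_{mij}(du)$, and since each $V_{ij}(u,\beta)$ is a conditional-covariance matrix it is positive semidefinite, so the Hessian is negative semidefinite. Concavity is the pivotal structural fact, because it will let us get away with merely pointwise (rather than locally uniform) convergence in $\beta$.

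Second, I would show that the normalized profile log-likelihood difference converges pointwise in probability to a deterministic limit. Writing
$$\frac{1}{m}\bigl[C(\tau,\beta)-C(\tau,\beta_0)\bigr]=\frac{1}{m}\sum_m\sum_{i\neq j}\int_0^\tau\Bigl[(\beta-\beta_0)^\top Z_{mij}-\log\frac{S_{ij}^{(0)}(x,\beta)}{S_{ij}^{(0)}(x,\beta_0)}\Bigr]N_{mij}(dx),$$
I would treat each summand over $m$ as an i.i.d.\ functional of a single subject's trajectory and apply the functional SLLN, together with the uniform convergence $S_{ij}^{(k)}\to s_{ij}^{(k)}$ from Assumption~(ii) and the positivity in (iii). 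Using $\E[N_{ij}(dx)]=m\,s_{ij}^{(0)}(x,\beta_0)\alpha_{0ij}(x)\,dx$ under the true model, the limit is
$$f(\beta)-f(\beta_0)=\sum_{i\neq j}\int_0^\tau\Bigl[(\beta-\beta_0)^\top e_{ij}(x,\beta_0)-\log\frac{s_{ij}^{(0)}(x,\beta)}{s_{ij}^{(0)}(x,\beta_0)}\Bigr]s_{ij}^{(0)}(x,\beta_0)\,\alpha_{0ij}(x)\,dx.$$

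Third, I would verify that $f$ is uniquely maximized at $\beta_0$. Differentiating under the integral gives $\nabla f(\beta_0)=\sum_{i\neq j}\int_0^\tau[e_{ij}(x,\beta_0)-e_{ij}(x,\beta_0)]\,s_{ij}^{(0)}(x,\beta_0)\alpha_{0ij}(x)\,dx=0$, while the Hessian of $f$ equals $-\Sigma(\tau,\beta)$, which is negative definite by Assumption~(iv); hence $f$ is strictly concave with its unique stationary point at $\beta_0$. Finally, pointwise convergence of the concave functions $m^{-1}C(\tau,\cdot)$ to a strictly concave limit with a unique maximizer forces the maximizers to converge, which is precisely the convexity lemma invoked through Theorem~II.1 of \citet{andersen1982cox}; this delivers $\widehat{\beta}\xrightarrow{p}\beta_0$.

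The main obstacle is the second step: establishing the convergence of the stochastic integrals $m^{-1}\sum_m\int_0^\tau g(x,\beta)\,N_{mij}(dx)$ without a compensator. Because $N_{ij}(x)-\int_0^x Y_i(u)\alpha_{ij}(u)\,du$ is not a martingale in the sojourn-time scale, Lenglart's inequality is unavailable, so everything rests instead on the i.i.d.\ structure across the $m$ subjects and a uniform (in $\beta$) law of large numbers, whose validity requires the moment bound $\E\,Y_i(0)^3<\infty$ and the boundedness of the covariates in Assumption~(i). Making the LLN uniform in $\beta$ over a neighborhood of $\beta_0$, so that the $\beta$-dependence residing simultaneously in $g$ and in $S_{ij}^{(0)}$ is controlled at once, is the delicate point; the payoff is that concavity then lets us upgrade pointwise convergence to convergence of the argmax with no further uniformity in $\beta$.
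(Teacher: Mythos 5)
Your proposal is correct and follows essentially the same route as the paper's proof: pointwise convergence of $m^{-1}\bigl[C(\tau,\beta)-C(\tau,\beta_0)\bigr]$ to the deterministic limit $f(\beta)-f(\beta_0)$ via the functional LLN and Conditions (i)--(iii), verification that $\beta_0$ is the unique maximizer through the vanishing gradient and the negative definite Hessian from Condition (iv), and the concavity-based argmax convergence of Theorem~II.1 in \citet{andersen1982cox}. Your explicit observation that concavity of the finite-sample criterion (via positive semidefiniteness of $V_{ij}$) is what lets pointwise convergence suffice is a point the paper leaves implicit, but it is the same argument.
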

\begin{proof}
    We consider the limit of $\frac{1}{m}\left(C(\tau,\beta)-C(\tau,\beta_0)\right)$. As $m\rightarrow\infty$, we have
$$\frac{1}{m}\sum_mN_{mij}(x)\rightarrow\mathbb{E}\left(Y_{1i}(x)e^{\beta^T_0Z_{1ij}}\right)A_{0ij}(x)$$ by condition (iii) and the convergence is uniform in $x\in[0,\tau]$ by Glivenko-Cantelli (see also formula (27) in \citet{gill1980nonparametric}). By condition (i) and (ii), we have 
\begin{align*}
    &\frac{1}{m}\left(C(\tau,\beta)-C(\tau,\beta_0)\right)\\\rightarrow&\sum_{i\not=j}\int_0^\tau \left[(\beta-\beta_0)^Ts_{ij}^{(1)}(x,\beta_0)-\log\left(\frac{s_{ij}^{(0)}(x,\beta)}{s_{ij}^{(0)}(x,\beta_0)}\right)s_{ij}^{(0)}(x,\beta_0)\right]A_{0ij}(dx)
\end{align*}
in probability. Denote the limit as $f(\beta)$, we have by Dominated convergence
$$\frac{\partial}{\partial\beta}f(\beta)=\sum_{i\not=j}\int_0^\tau \left(e_{ij}(x,\beta_0)-e_{ij}(x,\beta)\right)s_{ij}^{(0)}(x,\beta_0)A_{0ij}(dx)$$
where $e_{ij}(x,\beta)=\frac{s^{(1)}_{ij}(x,\beta)}{s^{(0)}_{ij}(x,\beta)}$.
Clearly $\beta_0$ is one of the roots for the above to be 0. Next,
$$\frac{\partial^2}{\partial\beta\partial\beta^T}f(\beta)=-\sum_{i\not=j}\int_0^\tau v_{ij}(x,\beta)s^{(0)}_{ij}(x,\beta_0)A_{0ij}(dx)$$
and by Condition (iv), the above matrix is negative definite when $\beta=\beta_0$. This ensures that $\beta_0$ is the unique global maximizer of $f(\beta)$.  Applying Theorem II.1 in \citet{andersen1982cox}, the weak consistency of $\widehat{\beta}$ follows.
\end{proof}

\subsubsection{Weak Convergence of $\hat{U}$} We need to following representation lemma according to \citet{dabrowska1994cox}.

\begin{lemma}\label{lemma:representation} The processes $\widehat{U}$ and $\widehat{A}$ can be represented as
\begin{align*}
    \frac{1}{\sqrt{m}}\widehat{U}(x,\beta_0)&=\sum_{i\not=j}M_{ij}^{(1)}(x,\beta_0)-\sum_{i\not=j}\int_0^x E_{ij}(u,\beta_0)M_{ij}^{(0)}(du,\beta_0)\\
    \sqrt{m}\left(\widehat{A}_{0ij}(x,\widehat{\beta})-A_{0ij}(x,\beta_0)\right)&=\widehat{\Psi}_{ij}(x,\beta_0)+\widehat{\eta}_{ij}(x,\beta^*)^T\sqrt{m}(\widehat{\beta}-\beta_0)
\end{align*}
where $\beta^*$ lies on the line segment between $\beta$ and $\widehat{\beta}$,
\begin{align*}
    M_{ij}^{(k)}(x,\beta_0)&=\frac{1}{\sqrt{m}}\sum_m\int_0^x Z_{mij}^{\otimes k}\underbrace{\left(N_{mij}(du)-Y_{mi}(u)e^{\beta_0^TZ_{mij}}A_{0ij}(du)\right)}_{M_{mij}(du,\beta_0)}\\
    &=\frac{1}{\sqrt{m}}\sum_mZ_{mij}^{\otimes k}M_{mij}(x,\beta_0),\\
    \widehat{\Psi}_{ij}(x,\beta)&=\int_0^x\frac{\mathbb{I}(S_{ij}^{(0)}(u,\beta)>0)}{S_{ij}^{(0)}(u,\beta)}M_{ij}^{(0)}(du,\beta),\\
    \widehat{\eta}_{ij}(x,\beta)&=-\int_0^x E_{ij}(u,\beta)\widehat{A}_{0ij}(du,\beta_0).
\end{align*}
    
\end{lemma}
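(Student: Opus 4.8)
The plan is to prove both identities by direct algebraic rearrangement of the score and the Breslow estimator together with a single first‑order Taylor expansion in $\beta$; crucially, no martingale property of $M_{mij}(\cdot,\beta_0)$ is used, which is exactly why the representation remains valid in the semi‑Markov setting where $N_{mij}-\int Y_{mi}e^{\beta_0^{T}Z_{mij}}A_{0ij}$ fails to be a martingale. The whole lemma is therefore an \emph{exact} (for $\widehat U$) or \emph{asymptotically exact up to $o_p(1)$} (for $\widehat A_{0ij}$) decomposition, and the weak‑convergence content is deferred to the subsequent empirical‑process argument that consumes it.

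For the first identity I would start from $\widehat U(x,\beta_0)=\sum_m\sum_{i\neq j}\int_0^x(Z_{mij}-E_{ij}(u,\beta_0))\,N_{mij}(du)$ and insert the tautology $N_{mij}(du)=M_{mij}(du,\beta_0)+Y_{mi}(u)e^{\beta_0^{T}Z_{mij}}A_{0ij}(du)$. After dividing by $\sqrt m$, the $M_{mij}$ part splits into $\sum_{i\neq j}M_{ij}^{(1)}(x,\beta_0)-\sum_{i\neq j}\int_0^x E_{ij}(u,\beta_0)M_{ij}^{(0)}(du,\beta_0)$, which is precisely the claimed right‑hand side. For the compensator part it suffices to observe that, for each fixed $(i,j)$ and $u$, $\sum_m(Z_{mij}-E_{ij}(u,\beta_0))Y_{mi}(u)e^{\beta_0^{T}Z_{mij}}=m\bigl[S_{ij}^{(1)}(u,\beta_0)-E_{ij}(u,\beta_0)S_{ij}^{(0)}(u,\beta_0)\bigr]=0$ because $E_{ij}=S_{ij}^{(1)}/S_{ij}^{(0)}$. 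Hence the compensator contribution vanishes identically and the first representation holds as an exact equality.

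For the second identity I would decompose
\[
\sqrt m\bigl(\widehat{A}_{0ij}(x,\widehat{\beta})-A_{0ij}(x,\beta_0)\bigr)=\underbrace{\sqrt m\bigl(\widehat{A}_{0ij}(x,\widehat{\beta})-\widehat{A}_{0ij}(x,\beta_0)\bigr)}_{(\mathrm I)}+\underbrace{\sqrt m\bigl(\widehat{A}_{0ij}(x,\beta_0)-A_{0ij}(x,\beta_0)\bigr)}_{(\mathrm{II})}.
\]
For $(\mathrm{II})$, summing the same tautology over $m$ gives $\sum_m N_{mij}(du)=\sqrt m\,M_{ij}^{(0)}(du,\beta_0)+mS_{ij}^{(0)}(u,\beta_0)A_{0ij}(du)$; dividing by $mS_{ij}^{(0)}$ on $\{S_{ij}^{(0)}>0\}$ isolates the centered piece as $\widehat{\Psi}_{ij}(x,\beta_0)$ and leaves the remainder $\sqrt m\int_0^x(\mathbb{I}(S_{ij}^{(0)}>0)-1)A_{0ij}(du)$, which is $o_p(1)$ uniformly on $[0,\tau]$ since $EY_i(\tau)>0$ keeps $\inf_{u\le\tau}S_{ij}^{(0)}(u,\beta_0)$ bounded away from $0$ with probability tending to one (condition (iii)). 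For $(\mathrm I)$ I would apply the mean‑value theorem in $\beta$, using $\partial_\beta\widehat{A}_{0ij}(x,\beta)=-\int_0^x E_{ij}(u,\beta)\,\widehat{A}_{0ij}(du,\beta)$ — obtained by differentiating $1/S_{ij}^{(0)}$ and recognizing $S_{ij}^{(1)}/(S_{ij}^{(0)})^2=E_{ij}/S_{ij}^{(0)}$ — to produce $\widehat{\eta}_{ij}(x,\beta^{*})^{T}\sqrt m(\widehat{\beta}-\beta_0)$ with $\beta^{*}$ on the segment between $\widehat{\beta}$ and $\beta_0$.

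Since every step is an identity or a one‑term Taylor expansion, the only genuine care lies in $(\mathrm{II})$ and in the evaluation‑point bookkeeping of $(\mathrm I)$: the differentiation in $(\mathrm I)$ naturally produces the measure $\widehat{A}_{0ij}(du,\beta^{*})$ whereas the statement records $\widehat{A}_{0ij}(du,\beta_0)$ in $\widehat{\eta}_{ij}$, and the trimming indicator must be removed in $(\mathrm{II})$. Both discrepancies are negligible after multiplication by the $O_p(1)$ factor $\sqrt m(\widehat{\beta}-\beta_0)$, using consistency of $\widehat{\beta}$ and the uniform convergence $\tfrac1m S_{ij}^{(k)}\to s_{ij}^{(k)}$ from condition (ii). I expect this negligibility bookkeeping to be the main obstacle in the lemma itself; the deeper difficulty — establishing that the leading terms $M_{ij}^{(k)}$, $\widehat{\Psi}_{ij}$ converge weakly — is genuinely hard precisely because these are sums of i.i.d. contributions rather than martingales, and is handled downstream by Billingsley's theorem in place of Rebolledo's martingale central limit theorem.
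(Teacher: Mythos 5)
Your proposal is correct and follows essentially the same route as the paper, which disposes of this lemma with the one-line remark that it ``follows from simple algebra and Taylor expansion of $\widehat{A}_{0ij}(x,\widehat{\beta})$ around $\beta_0$'': you substitute $N_{mij}(du)=M_{mij}(du,\beta_0)+Y_{mi}(u)e^{\beta_0^{T}Z_{mij}}A_{0ij}(du)$ into the score so that the compensator contribution cancels exactly via $E_{ij}=S^{(1)}_{ij}/S^{(0)}_{ij}$, and you obtain the second display from the same substitution plus a mean-value expansion in $\beta$. You in fact supply more care than the paper does, correctly flagging that the exact expansion carries $\widehat{A}_{0ij}(du,\beta^{*})$ and the trimming indicator, both negligible under conditions (ii)--(iii) and the $O_p(1)$ rate of $\sqrt{m}(\widehat{\beta}-\beta_0)$.
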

The proof follows from simple algebra and Taylor expansion of $\widehat{A}_{0ij}(x,\widehat{\beta})$ around $\beta_0$. The lemma entails us that the weak convergence of $\widehat{U}$ and $\widehat{A}$ can be derived from the weak convergence of $M_{ij}^{(0)}$ and $M_{ij}^{(1)}$, and this is the key for the whole theorem. Note that we write $M_{ij}^{(k)}$ the integral form because the covariate vector $Z_{mij}$ can be replaced by a predictable process $Z_{mij}(L(t))$ where $t$ is the calendar time and $L(t)$ is the backwards recurrence time.

\begin{lemma}[\citet{dabrowska1995estimation}] The process $\mathbf{M}(x,\beta_0)=\left( M_{ij}^{(k)}(x,\beta_0):k=0,1; i\not=j \right)$ converges weakly to a mean zero ($r(r-1)(1+d)$-dimensional) Gaussian process $$\mathbf{W}(x,\beta_0)=\left( W_{ij}^{(k)}(x,\beta_0):k=0,1;i\not=j \right).$$
The Gaussian process $\mathbf{W}$ has independent components, each with covariance
$$Cov\left( W_{ij}^{(k_1)}(x,\beta_0), W_{ij}^{(k_2)}(y,\beta_0) \right)=\int_0^{x\vee y}s^{(k_1+k_2)}_{ij}(u,\beta_0)A_{0ij}(du).$$
    
\end{lemma}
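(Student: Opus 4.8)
The statement is a functional central limit theorem for the $m^{-1/2}$-normalized sum of the i.i.d.\ subject-level processes
\[
\xi_{m,ij}^{(k)}(x) \;=\; \int_0^x Z_{mij}^{\otimes k}\,\bigl(N_{mij}(du) - Y_{mi}(u)e^{\beta_0^\top Z_{mij}}A_{0ij}(du)\bigr),
\]
so that $M_{ij}^{(k)} = m^{-1/2}\sum_{m} \xi_{m,ij}^{(k)}$. Because the subjects are independent copies, the plan is to take the empirical-process (Donsker) route, namely finite-dimensional convergence plus tightness, rather than Rebolledo's martingale central limit theorem, which is unavailable since $N_{ij}(x)-\int_0^x Y_i\,dA_{ij}$ carries no martingale structure on the sojourn-time scale. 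First I would verify centering. Conditioning on the covariate $Z_{mij}$ and on the embedded chain, each sojourn entering state $i$ satisfies $\Prob(X_n=j,W_n\le x\mid X_{n-1}=i,Z)=\int_0^x G_i(u-)\alpha_{ij}(u)\,du$ by the kernel identity $Q_{ij}(x)=\int_0^x G_i(u-)\,dA_{ij}(u)$, which exactly matches the conditional mean of the compensator $\int_0^x \mathbb{I}(W_n\ge u)e^{\beta_0^\top Z}\alpha_{0ij}(u)\,du$ because $\Prob(W_n\ge u\mid X_{n-1}=i)=G_i(u-)$; summing over sojourns and using that under independent random censorship the predictable process $K$ preserves this identity gives $\E\,\xi_{1,ij}^{(k)}(x)=0$.

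Next, finite-dimensional convergence follows from the ordinary multivariate Lindeberg--L\'evy CLT applied to the i.i.d.\ vectors $\bigl(\xi_{1,ij}^{(k)}(x_1),\dots,\xi_{1,ij}^{(k)}(x_\ell)\bigr)$; boundedness of $Z_{mij}$ (condition (i)) together with $A_{0ij}(\tau)<\infty$ (condition (v)) and the moment hypothesis on $Y_i(0)$ secures the finite second moments required. The limiting covariance reduces to the single-subject second moment $\E\bigl[\xi_{1,ij}^{(k_1)}(x)\,\xi_{1,ij}^{(k_2)}(y)\bigr]$, which I would evaluate by exploiting two orthogonalities: within a single sojourn the centered increments $dM$ are martingale differences, so increments at distinct times are uncorrelated, and across two sojourns of the same subject a conditioning argument on the history up to the start of the later sojourn makes the cross term vanish. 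Only the diagonal predictable-variation term survives, giving
\begin{align*}
\E\bigl[\xi_{1,ij}^{(k_1)}(x)\,\xi_{1,ij}^{(k_2)}(y)\bigr]
&=\int_0^{x\wedge y}\E\bigl[Y_{1i}(u)Z_{1ij}^{\otimes(k_1+k_2)}e^{\beta_0^\top Z_{1ij}}\bigr]\alpha_{0ij}(u)\,du\\
&=\int_0^{x\wedge y}s^{(k_1+k_2)}_{ij}(u,\beta_0)\,A_{0ij}(du),
\end{align*}
the claimed covariance kernel (with $x\wedge y$, consistent with the $x\wedge y$ appearing in Theorem~\ref{thm:weak_conv}). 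Independence of the components indexed by distinct transitions $(i,j)\neq(i',j')$ follows because two different transition types never jump simultaneously, so the $dN\,dN$ cross term is zero while the compensator cross term is $O(du\,dv)$; uncorrelated jointly Gaussian limits are then independent.

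The hard part will be tightness in $D[0,\tau]$ under the Skorohod topology, precisely the step where the failure of the martingale property forbids Rebolledo and forces a direct argument on the i.i.d.\ summands. Here I would invoke the Burkholder--Davis--Gundy inequality on the per-sojourn martingales to bound $p$-th moments of the centered increments by moments of their quadratic variation, and then verify Billingsley's moment criterion
\[
\E\bigl[\,|M_{ij}^{(k)}(x)-M_{ij}^{(k)}(u)|^2\,|M_{ij}^{(k)}(v)-M_{ij}^{(k)}(x)|^2\,\bigr]\le \bigl(F_{ij}(v)-F_{ij}(u)\bigr)^2,\qquad u\le x\le v,
\]
for a continuous nondecreasing $F_{ij}$ built from $s^{(2k)}_{ij}$ and $A_{0ij}$; the i.i.d.\ structure reduces the left-hand side to a controllable sum of single-subject moments. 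Since $A_{0ij}$ is absolutely continuous the limit is a continuous Gaussian process, so this bound simultaneously yields asymptotic equicontinuity and precludes jumps. Finally I would assemble the pieces through Prohorov's theorem: finite-dimensional convergence to $\mathbf{W}$ together with tightness delivers weak convergence of $\mathbf{M}(\cdot,\beta_0)$ to $\mathbf{W}$, with the mean, covariance, and independence structure identified above.
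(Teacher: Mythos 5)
Your proposal follows essentially the same route as the paper's proof: finite-dimensional convergence by the classical i.i.d.\ CLT with the covariance identified via the uncorrelated-increment formula of \citet{gill1980nonparametric}, independence of components from vanishing cross-covariances plus the Cram\'er--Wold device, and tightness via Billingsley's mixed fourth-moment criterion with the Burkholder--Davis--Gundy inequality controlling the per-subject fourth moments. Your reading of the covariance kernel as $\int_0^{x\wedge y}$ rather than the $x\vee y$ printed in the lemma is the correct one (it is the only choice consistent with uncorrelated increments and with $\Sigma(x\wedge y,\beta_0)$ in Theorem~\ref{thm:weak_conv}), so no substantive gap remains.
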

\begin{proof}
    The proof depends on  Billingsley's theorem on weak convergence of stochastic processes in space $D$ (See Theorem 13.5 in \citet{billingsley2013convergence} or Theorem 1.14.15 in \citet{vaart2023empirical}). Two ingredients ensures the weak convergence of $\mathbf{M}$: finite dimensional weak convergence of $\mathbf{M}$ and a mixed moment bound that only depends on three different time points of $\mathbf{M}$. We summarize the theorem below for reference.
    \begin{theorem}[\citet{billingsley2013convergence}] Let $$M_{ij}(x,\beta_0)=\frac{1}{\sqrt{m}}\sum_m\int_0^xg_{mij}(u)M_{mij}(du,\beta_0),i\not=j,$$
    where $g_{mij}$'s are bounded functions.
    Suppose that $(M_{ij}(x_1),\cdots,M_{ij}(x_p))$ converges weakly to some limiting process $\mathbf{W}$ for any finite $p$ and $x\in[0,\tau]$, and that for $0\le u<x<y\le \tau$,
    $$\mathbb{E}\left(|M_{ij}(x)-M_{ij}(u)|^2|M_{ij}(y)-M_{ij}(x)|^2\right)\le|F(y)-F(u)|^{1+\alpha}$$
    where $\alpha >0$ and $F$ is a continuous, nondecreasing function on $[0,\tau]$. Then $\mathbf{M}=[M_{ij}]\Rightarrow \mathbf{W}$ in space $D^{d}$ where $d$ is the dimension of $\mathbf{M}$.
        
    \end{theorem}
The finite dimensional convergence is simple. For $k=0$ or $1$, $M_{ij}^{(k)}(x,\beta_0)$ is just a normalized sum of i.i.d. zero mean random variables and its limiting distribution is given by the classical CLT,
$$M_{ij}^{(k)}(x,\beta_0)\Rightarrow \mathcal{N}\left(0,\ \mathbb{E} \left(M_{mij}^{(k)}(x,\beta_0)\right)^2\right).$$
The second moment is computed explicitly by formula (28) in \citet{gill1980nonparametric}, which is
$$\int_0^x\mathbb{E}\left(Y_{1i}(u)Z_{1ij}^{\otimes k}e^{\beta_0^TZ_{1ij}}\right)A_{0ij}(du)=\int_0^xs^{(k)}_{ij}(u,\beta_0)\alpha_{0ij}(u)du.$$
\citet{gill1980nonparametric} also computes that the covariance between components of $\mathbf{M}=[M_{ij}]$ is zero. So by Cramér-Wold device, the finite dimensional convergence follows and the limiting distribution is multivariate Gaussian with independent components.

For the mixed moment bound, we denote $M_{mij}(u,x)=M_{mij}(x)-M_{mij}(u)$ for short. Then we have
\begin{align*}
    &\mathbb{E}\left(|M_{ij}(x)-M_{ij}(u)|^2|M_{ij}(y)-M_{ij}(x)|^2\right)\\=&\frac{1}{m^2}\sum_a\sum_b\sum_c\sum_d \mathbb{E}\left(M_{aij}(u,x)M_{bij}(u,x)M_{cij}(x,y)M_{dij}(x,y)\right).
\end{align*}
Clearly, the summation is not zero only when $a=b=c=d$ or there are exactly two unequal pairs, i.e. $a=b\not=c=d$ or $a=c\not=b=d$ or $a=d\not=b=c$. Hence, the summation reduces to
\begin{align*}
    &\frac{1}{m^2}\sum_a\mathbb{E}M_{aij}^2(u,x)M_{aij}^2(x,y)\\+&\frac{1}{m^2}\sum_a\sum_{c\not=a}\mathbb{E}M_{aij}^2(u,x)\mathbb{E}M_{cij}^2(x,y)\\+&\frac{2}{m^2}\sum_a\sum_{b\not=a}\mathbb{E}M_{aij}(u,x)M_{aij}(x,y)\mathbb{E}M_{bij}(u,x)M_{bij}(x,y).
\end{align*}
    Although $M_{aij}(x)$ is not an ``actual" martingale, it behaves like a martingale in the sense that $M_{aij}(u,x)$ has uncorrelated increments and zero mean (again, this is due to formula (28) in \citet{gill1980nonparametric}, see also Chapter X.1 in \citet{andersen2012statistical}). Hence, the third term is $0$. For the first term, we use Cauchy-Schwarz,
    \begin{align*}
        \mathbb{E}M_{aij}^2(u,x)M_{aij}^2(x,y)&\le\sqrt{\mathbb{E}M_{aij}^4(u,x)\mathbb{E}M_{aij}^4(x,y)}.
    \end{align*}
    To handle the fourth moment, we first note that $\mathbb{E}M_{aij}^4(\tau)$ is the same as the expectation of the fourth moment of a counting process martingale associated with the Markov renewal process evaluated at $\tau$. Indeed,
    $$\int_0^\infty \mathbb{I}(u\le x)M^4_{aij}(du)=\int_0^\infty \mathbb{I}(L(t)\le x)\widetilde{M}^4_{aij}(dt)$$
    where $\widetilde{M}_{aij}(dt)=N_{aij}(dt)-Y_{ai}(t)e^{\beta_0^TZ_{aij}}A_{0ij}(dt)$ is defined on the calendar scale and $L(t)$ is the backwards recurrence time. Thus, we have
    \begin{align*}
        \mathbb{E}M_{aij}^4(u,x)&=\mathbb{E}\left(M_{aij}(x)-M_{aij}(u)\right)^4 \\
        &\le  C\mathbb{E}\left( M_{aij}^4(x)+M_{aij}^4(u) \right)\\
        &\le^{(*)} C \mathbb{E}\left( [\widetilde{M}_{aij}]_\tau^2\right)\\
        &=C\mathbb{E}\left( N_{aij}(\tau)\right)\\
        &=C \int_0^\tau s^{(0)}_{ij}(u,\beta_0)A_{0ij}(du)
    \end{align*}
    where $C$ is a universal constant, the second line is from $C_r$-inequality, the third line is due to Burkholder-Davis-Gundy's inequality (this needs a little bit of lengthy argument, see Lemma~\ref{lemma:bound}), and the last line follows from Condition (v). Hence, the first term is negligible when $m$ is large. The second term is bounded by
    \begin{align*}
    \mathbb{E}M_{aij}^2(u,x)\mathbb{E}M_{cij}^2(x,y)&=\int_u^x s^{(0)}_{ij}(u,\beta_0)A_{0ij}(du)\int_x^y s^{(0)}_{ij}(u,\beta_0)A_{0ij}(du)\\
    &=\sup_{v\in[0,\tau]}s^{(0)}_{ij}(v,\beta_0)\left(A_{0ij}(y)-A_{0ij}(u)\right)^2
    \end{align*}
    
    Finally,
    \begin{align*}
        \mathbb{E}\left(|M_{ij}(x)-M_{ij}(u)|^2|M_{ij}(y)-M_{ij}(x)|^2\right)&\le C\left(A_{0ij}(y)-A_{0ij}(u)\right)^2
    \end{align*}
    so that Billingsley's condition is satisfied with $r=1$ and $F(y)=\sqrt{C} A_{0ij}(y)$.
\end{proof}
\begin{lemma}[Bounding the fourth moment]\label{lemma:bound}  The fourth moment of $M_{ij}$ evaluated at $x$ is bounded by
$$M^4_{ij}(x)\le \widetilde{M}^4_{ij}(\tau).$$
As a consequence, the application of Burkholder-Davis-Gundy shows that
$$\mathbb{E}\widetilde{M}_{ij}^4(\tau)\le C\mathbb{E}[\widetilde{M}_{ij}]_\tau^2$$
where $C$ is a universal constant.
\end{lemma}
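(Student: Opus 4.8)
The plan is to exploit Aalen's random time transformation in order to move from the sojourn-time scale, on which $M_{ij}$ fails to be a martingale, to the calendar-time scale, on which $\widetilde{M}_{ij}(t)=N_{ij}(t)-\int_0^t Y_i(u)e^{\beta_0^\top Z_{ij}}A_{0ij}(du)$ is a genuine compensated counting-process (local) martingale with respect to the covariate-augmented self-exciting filtration $(\mathscr{F}_t)$, as in the Andersen--Gill framework. First I would make precise the identity displayed immediately above the lemma: writing $L(t)$ for the backward recurrence time, the deterministic-per-excursion change of clock $t\mapsto L(t)$ carries $\widetilde{M}_{ij}$ into $M_{ij}$, so that for each fixed sojourn level $x$ there is a bounded stopping time $\sigma_x\le\tau$ (the calendar clock at which the relevant accumulated sojourn reaches $x$) with $M_{ij}(x)=\widetilde{M}_{ij}(\sigma_x)$ almost surely. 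Verifying that $\sigma_x$ is a stopping time for $(\mathscr{F}_t)$ and that $\sigma_x\le\tau$ is the bookkeeping step that makes the transfer legitimate; I note that the displayed bound $M_{ij}^4(x)\le\widetilde{M}_{ij}^4(\tau)$ is to be read in expectation, which is the form in which it is consumed in the preceding lemma.

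For the first assertion I would then invoke convexity. Since $\widetilde{M}_{ij}$ is a square-integrable martingale (indeed $L^4$-bounded under the boundedness of $Z$ and Condition (v)) and $y\mapsto y^4$ is convex, the process $t\mapsto\widetilde{M}_{ij}^4(t)$ is a nonnegative submartingale. Applying the optional sampling theorem to the bounded stopping time $\sigma_x\le\tau$ gives $\widetilde{M}_{ij}^4(\sigma_x)\le\mathbb{E}\big[\widetilde{M}_{ij}^4(\tau)\mid\mathscr{F}_{\sigma_x}\big]$, and taking expectations yields $\mathbb{E}\,M_{ij}^4(x)=\mathbb{E}\,\widetilde{M}_{ij}^4(\sigma_x)\le\mathbb{E}\,\widetilde{M}_{ij}^4(\tau)$, which is exactly the first claim.

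The second assertion is the Burkholder--Davis--Gundy inequality applied on the calendar scale. For the martingale $\widetilde{M}_{ij}$ and exponent $p=4$, BDG furnishes a universal constant $C$ with $\mathbb{E}\sup_{t\le\tau}|\widetilde{M}_{ij}(t)|^4\le C\,\mathbb{E}\,[\widetilde{M}_{ij}]_\tau^2$. Since $\widetilde{M}_{ij}^4(\tau)\le\sup_{t\le\tau}|\widetilde{M}_{ij}(t)|^4$, taking expectations gives the claimed $\mathbb{E}\,\widetilde{M}_{ij}^4(\tau)\le C\,\mathbb{E}\,[\widetilde{M}_{ij}]_\tau^2$. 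I would also record that, because $\widetilde{M}_{ij}$ is a compensated counting process with unit jumps, its optional quadratic variation satisfies $[\widetilde{M}_{ij}]_\tau=N_{ij}(\tau)$; this is what links the bound to $\mathbb{E}\,N_{ij}(\tau)$ in the parent proof, and in the progressive single-subject setting $N_{ij}(\tau)\in\{0,1\}$ so that $[\widetilde{M}_{ij}]_\tau^2=N_{ij}(\tau)$, sharpening the final constant.

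The main obstacle is the first step rather than the BDG application: one must justify rigorously that Aalen's time change turns the non-martingale $M_{ij}$ into a stopped value of the calendar-scale martingale, keeping careful track of the successive sojourns and of the backward recurrence clock $L(t)$, and confirming that the resulting $\sigma_x$ is a bounded stopping time of the enlarged self-exciting filtration so that optional sampling applies. Once the representation $M_{ij}(x)=\widetilde{M}_{ij}(\sigma_x)$ with $\sigma_x\le\tau$ is established, both inequalities follow from, respectively, submartingale optional sampling and the Burkholder--Davis--Gundy inequality.
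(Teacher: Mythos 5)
Your second half (the Burkholder--Davis--Gundy step, and the observation that $[\widetilde M_{ij}]_\tau=N_{ij}(\tau)$ with $N_{ij}(\tau)\in\{0,1\}$ per subject in the progressive case) is correct and is exactly how the paper uses the lemma. The gap is in the first half: the representation $M_{ij}(x)=\widetilde M_{ij}(\sigma_x)$ for a single bounded stopping time $\sigma_x\le\tau$ is not available in general, and the whole optional-sampling argument rests on it. The backward recurrence time $L(t)$ is not monotone --- it resets to zero at every renewal --- so the set $\{t: L(t)\le x\}$ is a union of disjoint calendar-time windows, one for each visit to state $i$, not an initial segment $[0,\sigma_x]$. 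Consequently $M_{ij}(x)=\int_0^\infty \mathbb{I}(L(t)\le x)\,\widetilde M_{ij}(dt)$ is a \emph{stochastic integral} against $\widetilde M_{ij}$, not a stopped value of $\widetilde M_{ij}$; the two coincide only when the integrand is the indicator of a stochastic interval anchored at $0$. Even in the progressive single-visit case, where the relevant window is $[T_n,(T_n+x)\wedge T_{n+1}]$ and one could take $\sigma_x=(T_n+x)\wedge T_{n+1}$, the bound $\sigma_x\le\tau$ fails: $\sigma_x$ lives on the calendar scale while $\tau$ is a sojourn-time horizon, so $\sigma_x\approx T_n+x$ can far exceed $\tau$ and optional sampling does not deliver $\mathbb{E}\,\widetilde M_{ij}^4(\sigma_x)\le\mathbb{E}\,\widetilde M_{ij}^4(\tau)$.

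The paper takes a different, pathwise route: it compares the jump measures of the two fourth-power processes increment by increment, writing $\sum_{u\le x}M_{ij}^4(\Delta u)\le\sum_{L(t)\le x}\widetilde M_{ij}^4(\Delta t)$ and verifying the per-jump domination by explicit expansion of $(N-\Lambda)^4$ (this is where the polynomial $f(n,a)$ appears); no stopping time or optional sampling is invoked, and the inequality is asserted almost surely rather than in expectation. If you want to stay closer to your martingale-theoretic instinct, the clean repair is not a time change but the observation that, for fixed $x$, the process $t\mapsto\int_0^t\mathbb{I}(L(s)\le x)\,\widetilde M_{ij}(ds)$ is itself a calendar-time martingale (the integrand $\mathbb{I}(L(s)\le x)$ is bounded and predictable since $L$ is left-continuous and adapted), whose terminal value is $M_{ij}(x)$; applying BDG directly to this stochastic integral gives $\mathbb{E}\,M_{ij}^4(x)\le C\,\mathbb{E}\bigl(\int_0^\infty\mathbb{I}(L(s)\le x)\,N_{ij}(ds)\bigr)^2\le C\,\mathbb{E}\,N_{ij}(\tau)^2$, which is the bound the parent lemma actually needs. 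As written, however, your proof of the first inequality does not go through.
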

\begin{proof}
    Inspired by formula (26) in \citep{gill1980nonparametric}, we note that both $M^{4}_{ij}(dx)$ and $\widetilde{M}^4_{ij}(dt)$ are non-negative discrete measures with the same total number of jumps. So we have
\begin{align*}
    \int_0^\infty \mathbb{I}(u\le x)M_{ij}^4(du)&=\sum_{u\le x}M_{ij}^{4}(\Delta u)\\
    &\le\sum_{L(t)\le x}\widetilde{M}_{ij}^{4}(\Delta t)\\
    &=\int_0^\infty\mathbb{I}(L(t)\le x)\widetilde{M}_{ij}^4(dt)\\
    &\le\int_0^\infty \mathbb{I}(L(t)\le \tau)\widetilde{M}^4_{ij}(dt)\\
    &=\widetilde{M}^4_{ij}(\tau).
\end{align*}
The second inequality needs further explanation. First let us consider the second moment $M_{ij}^2(\Delta u)$ (drop the subscript $ij$) and denote $\Lambda(u)=N(u)-M(u)$:
\begin{align*}
   \sum_{u\le x}M^2(\Delta u)&=\sum_{u\le x}(N(u)-\Lambda(u))^2-(N(u-)-\Lambda(u-))^2\\
   &=\sum_{u\le x}N^2(\Delta u)-2N(\Delta u)\Lambda(u)\\
   &=\sum_{u\le x}N^2(\Delta u)-2\Lambda(u)\\
    &\le \sum_{L(t)\le x}\left[\widetilde{N}^2(\Delta t)-2\widetilde{\Lambda}(t)\right]\\
    &=\sum_{L(t)\le x} \widetilde{M}^2(\Delta t).
\end{align*}
The inequality in the fourth line is because for each $\Delta u$, there is a corresponding $\Delta t$ such that $N^2(\Delta u)$ is bounded by $\widetilde{N}^2(\Delta t)$. We use the following expansions to handle fourth moment:
\begin{align*}
    (a-b)^4&=a^4-4a^3b+6a^2b^2-4ab^3+b^4,\\
    a^2-(a-1)^2&=2a-1,\\
    a^3-(a-1)^3&=3a^2-3a+1,\\
    a^4-(a-1)^4&=4a^3-6a^2+4a-1.
\end{align*}
The first equality is for expanding $M^4(u)$, and the other three are used for expanding $N^4(\Delta u),N^3(\Delta u)$ and $N^2(\Delta u)$. After some algebra, we have

\begin{align*}
    {M}^4(\Delta u)&={N}^4(\Delta u)-4{N}^3(\Delta u)\Lambda(u)+6{N}^2(\Delta u)\Lambda^2(u)-4{N}(\Delta u)\Lambda^3(u).
\end{align*}
If ${N}(u-)=n-1$ and ${N}(u)=n$, then by some algebra, we have
$${M}^4(\Delta u)=4n^3-(6+12a)n^2+(4-12a+12a^2)n+4a-6a^2-4a^3$$
where $a=\Lambda(u)$. Now $\sum_{u\le x}M^4(\Delta u)\le \sum_{L(t)\le x}\widetilde{M}^4(\Delta t)$ is proved if we can show that
$$f(n,a)=4n^3-(6+12a)n^2+(4-12a+12a^2)n\ge 0$$
since if so, then each term $M^4(\Delta u)$ is bounded by a corresponding $\widetilde{M}^4(\Delta t)$. The above function is minimized at $a=\frac{n+1}{2}$ and plug-in we find
$$f(n,a)=4n^3-12n^2+n\ge 0$$
when $n\ge 12$.
\end{proof}

We are ready to show the weak convergence of $\widehat{U}$ and $\widehat{\Psi}$.
\begin{lemma}
    The processes
    $m^{-1/2}\widehat{U}(x,\beta_0)$ and $\widehat{\Psi}_{ij}(x,\beta_0)$ converge weakly to 
    $$\sum_{i\not=j}W^{(1)}_{ij}(x,\beta_0)-\sum_{i\not=j}\int_0^x e_{ij}(u,\beta_0)W^{(0)}_{ij}(du,\beta_0)\text{ and }\int_0^x\frac{W_{ij}^{(0)}(du,\beta_0)}{s_{ij}^{(0)}(u,\beta_0)},$$
    respectively. Further, components of the limiting distribution are independent of each other.
\end{lemma}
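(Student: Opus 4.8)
The plan is to obtain both statements as consequences of the weak convergence $\mathbf{M}(\cdot,\beta_0)\Rightarrow\mathbf{W}(\cdot,\beta_0)$ established above, by writing $m^{-1/2}\widehat U$ and $\widehat\Psi_{ij}$ as asymptotically continuous functionals of $\mathbf{M}$ and invoking the continuous mapping theorem. Starting from the representations in Lemma~\ref{lemma:representation},
$$m^{-1/2}\widehat U(x,\beta_0)=\sum_{i\neq j}M^{(1)}_{ij}(x,\beta_0)-\sum_{i\neq j}\int_0^x E_{ij}(u,\beta_0)\,M^{(0)}_{ij}(du,\beta_0),$$
and $\widehat\Psi_{ij}(x,\beta_0)=\int_0^x \mathbb{I}(S^{(0)}_{ij}(u,\beta_0)>0)\,[S^{(0)}_{ij}(u,\beta_0)]^{-1}\,M^{(0)}_{ij}(du,\beta_0)$, the problem reduces to analysing stochastic Stieltjes integrals whose integrators $M^{(0)}_{ij}$ converge weakly and whose integrands are random but converge in probability to the deterministic limits $e_{ij}$ and $[s^{(0)}_{ij}]^{-1}$.

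First I would replace the random integrands by their limits. Conditions (i)--(iii) give $\sup_{x\le\tau}\lVert E_{ij}(x,\beta_0)-e_{ij}(x,\beta_0)\rVert\to_p 0$ and $\sup_{x\le\tau}|S^{(0)}_{ij}(x,\beta_0)-s^{(0)}_{ij}(x,\beta_0)|\to_p 0$ with $s^{(0)}_{ij}$ bounded away from zero, so that $\mathbb{I}(S^{(0)}_{ij}>0)\to_p 1$ and $[S^{(0)}_{ij}]^{-1}\to_p [s^{(0)}_{ij}]^{-1}$ uniformly. Decomposing $\int_0^x E_{ij}\,dM^{(0)}_{ij}=\int_0^x e_{ij}\,dM^{(0)}_{ij}+\int_0^x (E_{ij}-e_{ij})\,dM^{(0)}_{ij}$ (and analogously for $\widehat\Psi_{ij}$), the leading term is a continuous functional of $M^{(0)}_{ij}$: integration by parts rewrites $g\mapsto\int_0^\cdot e_{ij}\,dg$ as $e_{ij}(\cdot)g(\cdot)-\int_0^\cdot g(u-)\,de_{ij}(u)$, which is continuous on $D[0,\tau]$ since $e_{ij}$ is continuous and of bounded variation under the regularity conditions. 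Hence the continuous mapping theorem yields $\int_0^x e_{ij}\,dM^{(0)}_{ij}\Rightarrow\int_0^x e_{ij}(u)\,W^{(0)}_{ij}(du)$ and $\int_0^x [s^{(0)}_{ij}]^{-1}\,dM^{(0)}_{ij}\Rightarrow\int_0^x [s^{(0)}_{ij}(u)]^{-1}\,W^{(0)}_{ij}(du)$; together with $\sum_{i\neq j}M^{(1)}_{ij}\Rightarrow\sum_{i\neq j}W^{(1)}_{ij}$ this produces the two displayed limits, provided the correction terms are negligible.

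The main obstacle is exactly the negligibility of $\int_0^x (E_{ij}-e_{ij})\,dM^{(0)}_{ij}$. In classical Cox regression this is immediate from Lenglart's inequality applied to the predictable variation $\int (E_{ij}-e_{ij})^2\,d\langle M^{(0)}_{ij}\rangle$, but in the Markov renewal setting $M^{(0)}_{ij}$ is not a martingale, so that route is closed; moreover the crude bound $\sup_{u\le\tau}|M^{(0)}_{ij}(u)|\cdot\mathrm{TV}(E_{ij}-e_{ij})$ from integration by parts is only $O_p(1)$, because $E_{ij}$ is a step function whose total variation does not vanish. My plan is to recover the martingale structure through Aalen's random time change: transporting the integral to the calendar scale via the backwards recurrence time $L(t)$, the individual processes $\widetilde M_{aij}$ become genuine martingales and the integrand $E_{ij}(L(t-),\beta_0)-e_{ij}(L(t-),\beta_0)$ becomes predictable, so the transported integral is a true stochastic integral whose predictable variation is bounded by $\sup_{u\le\tau}\lVert E_{ij}-e_{ij}\rVert^2\int_0^\tau s^{(0)}_{ij}(u,\beta_0)\,A_{0ij}(du)=o_p(1)$; Lenglart's inequality then applies on the calendar scale and the conclusion is transported back. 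An equivalent route is a direct empirical-process maximal inequality exploiting the i.i.d. structure and the uncorrelated increments of $M_{aij}$ established in the preceding lemma.

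Finally, joint weak convergence of $(m^{-1/2}\widehat U,\{\widehat\Psi_{ij}\}_{i\neq j})$ follows by applying the continuous mapping theorem to the joint limit $\mathbf{M}\Rightarrow\mathbf{W}$, and the asserted mutual independence of the limiting components is inherited directly from the independence of the components of $\mathbf{W}$ established above, since each limiting $\Psi_{ij}$ is built solely from $W^{(0)}_{ij}$.
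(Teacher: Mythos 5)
Your overall route coincides with the paper's: the paper's entire proof of this lemma is the one-line remark that both limits follow from the weak convergence $\mathbf{M}\Rightarrow\mathbf{W}$ by the functional continuous mapping theorem, with independence of the limiting components inherited from the uncorrelated components of the $W^{(k)}_{ij}$. You go further than the paper by isolating the step it silently absorbs: the integrands $E_{ij}(\cdot,\beta_0)$ and $\mathbb{I}(S^{(0)}_{ij}>0)/S^{(0)}_{ij}$ are random, so after the integration-by-parts argument that makes $g\mapsto\int_0^{\cdot}e_{ij}\,dg$ a continuous functional on $D[0,\tau]$ one is still left with a remainder $\int_0^x(E_{ij}-e_{ij})\,dM^{(0)}_{ij}$ whose uniform negligibility is not a consequence of continuous mapping alone. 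Identifying and attacking that remainder is a genuine improvement over the paper's write-up.

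One caution on your first proposed mechanism for that remainder. Transporting the integral to the calendar scale does make each individual $\widetilde M_{aij}$ a martingale, but the integrand $E_{ij}(L(t-),\beta_0)$ is built from the sojourn-scale risk sets $Y_{mi}(x)=\sum_n\mathbb{I}(X_{n-1}=i,\ W_n\ge x,\ K(x+T_{n-1})=1)$, which count sojourns that may begin arbitrarily late in calendar time; hence $E_{ij}(L(t-),\beta_0)$ is not adapted, let alone predictable, with respect to the calendar-time filtration, and Lenglart's inequality does not apply to the transported integral. This is exactly the obstruction the paper emphasizes at the start of Appendix~\ref{sec:proof_weak_conv} and in Table~\ref{tab:markov_vs_semimarkov} (no filtration makes the aggregated sojourn-scale processes martingales). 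Your second, hedged route --- a direct maximal or moment bound using the i.i.d.\ structure across subjects, the mean-zero uncorrelated increments of the $M_{aij}$ coming from Gill's formula (28), and the uniform convergence $\sup_{x\le\tau}\lVert E_{ij}(x,\beta_0)-e_{ij}(x,\beta_0)\rVert\to_p 0$ guaranteed by conditions (i)--(iii) --- is the one that actually works and should be promoted to the primary argument, with the Lenglart route dropped.
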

\begin{proof}
    This is by the functional version of continuous mapping theorem \citep{billingsley2013convergence}. The asymptotically independent components can be shown from the fact that components of $W_{ij}^{(k)}$ are uncorrelated.
\end{proof}

\subsubsection{Weak Convergence of $\hat{\beta}$ and $\hat{A}$}
\begin{proof}[Proof of Theorem~\ref{thm:weak_conv}]
    By algebra, we have
    \begin{align*}
        Cov(U(x,\beta_0),U(y,\beta_0))&=\mathbb{E}[U(x,\beta_0)U(y,\beta_0)]\\
        &=\sum_{i\not=j}\int_0^{x\vee y}s^{(2)}_{ij}(u,\beta_0)A_{0ij}(du)\\&-2\sum_{i\not=j}\int_0^{x\vee y}s^{(1)}_{ij}(u,\beta_0)e_{ij}^T(u,\beta_0)A_{0ij}(du)\\
        &+\sum_{i\not=j}\int_0^xe_{ij}(u,\beta_0)s^{(0)}_{ij}(u,\beta_0)e_{ij}^T(u,\beta_0)A_{0ij}(du)\\
        &=\sum_{i\not=j}\int_0^{x\vee y}\left(s^{(2)}_{ij}(u,\beta_0)-\frac{s^{(1)}_{ij}(u,\beta_0)^{\otimes 2}}{s^{(0)}_{ij}(u,\beta_0)}\right)A_{0ij}(du)\\
        &=\Sigma(x,\beta_0).
    \end{align*}
    The convergence of $\mathscr{I}(x,\beta^*)$ to $\Sigma(x,\beta)$ follows from Condition (ii). Hence, by condition (iii), we have
    $$\sqrt{m}\left(\widehat{\beta}-\beta_0\right)\Rightarrow \mathcal{N}\left(0,\Sigma(\tau,\beta_0)^{-1}\right).$$
    Next, consistency of $\widehat{A}_{0ij}$ is simple and follows directly from functional LLN. Hence,
    $$\widehat{\eta}_{ij}(x,\beta^*)\rightarrow -\int_0^x e_{ij}(u,\beta_0)A_{0ij}(du):=\eta_{ij}(x,\beta_0)$$
    uniformly in probability. By Lemma~\ref{lemma:representation}, we have
    $$\sqrt{m}\left(\widehat{A}_{0ij}(x,\widehat{\beta})-A_{0ij}(x,\beta_0))\right)\Rightarrow\Psi_{ij}(x,\beta_0)+\eta_{ij}(x,\beta_0)^T\mathcal{N}\left(0,\Sigma(\tau,\beta_0)^{-1}\right).$$
    
\end{proof}

\section{Acknowledgments}

We sincerely thank Dr. Dabrowska for her invaluable comments on semi-Markov and Markov renewal models and their implementations through personal communication.

\bibliographystyle{imsart-nameyear}  

\bibliography{references, pkg, historical}

\begin{thebibliography}{74}

\bibitem[\protect\citeauthoryear{Aalen}{1975}]{aalen1975statistical}
\begin{bbook}[author]
\bauthor{\bsnm{Aalen},~\bfnm{Odd~Olai}\binits{O.~O.}}
(\byear{1975}).
\btitle{Statistical inference for a family of counting processes.}
\bpublisher{University of California, Berkeley}.
\end{bbook}
\endbibitem

\bibitem[\protect\citeauthoryear{Aalen}{1978}]{aalen1978nonparametric}
\begin{barticle}[author]
\bauthor{\bsnm{Aalen},~\bfnm{Odd}\binits{O.}}
(\byear{1978}).
\btitle{Nonparametric inference for a family of counting processes}.
\bjournal{The Annals of Statistics}
\bpages{701--726}.
\end{barticle}
\endbibitem

\bibitem[\protect\citeauthoryear{Allignol, Beyersmann and Schumacher}{2008}]{allignol2008mvna}
\begin{barticle}[author]
\bauthor{\bsnm{Allignol},~\bfnm{A}\binits{A.}}, \bauthor{\bsnm{Beyersmann},~\bfnm{Ja}\binits{J.}} \AND \bauthor{\bsnm{Schumacher},~\bfnm{M}\binits{M.}}
(\byear{2008}).
\btitle{Mvna: an R package for the Nelson--Aalen estimator in multistate models}.
\bjournal{The Newsletter of the R Project Volume 8/2, October 2008}
\bvolume{8}
\bpages{48}.
\end{barticle}
\endbibitem

\bibitem[\protect\citeauthoryear{Andersen, Esbjerg and S{\o}rensen}{2000}]{andersen2000multi}
\begin{barticle}[author]
\bauthor{\bsnm{Andersen},~\bfnm{Per~Kragh}\binits{P.~K.}}, \bauthor{\bsnm{Esbjerg},~\bfnm{Sille}\binits{S.}} \AND \bauthor{\bsnm{S{\o}rensen},~\bfnm{Thorkild~IA}\binits{T.~I.}}
(\byear{2000}).
\btitle{Multi-state models for bleeding episodes and mortality in liver cirrhosis}.
\bjournal{Statistics in medicine}
\bvolume{19}
\bpages{587--599}.
\end{barticle}
\endbibitem

\bibitem[\protect\citeauthoryear{Andersen and Gill}{1982}]{andersen1982cox}
\begin{barticle}[author]
\bauthor{\bsnm{Andersen},~\bfnm{Per~Kragh}\binits{P.~K.}} \AND \bauthor{\bsnm{Gill},~\bfnm{Richard~D}\binits{R.~D.}}
(\byear{1982}).
\btitle{Cox's regression model for counting processes: a large sample study}.
\bjournal{The annals of statistics}
\bpages{1100--1120}.
\end{barticle}
\endbibitem

\bibitem[\protect\citeauthoryear{Andersen and Keiding}{2002}]{andersen2002multi}
\begin{barticle}[author]
\bauthor{\bsnm{Andersen},~\bfnm{Per~Kragh}\binits{P.~K.}} \AND \bauthor{\bsnm{Keiding},~\bfnm{Niels}\binits{N.}}
(\byear{2002}).
\btitle{Multi-state models for event history analysis}.
\bjournal{Statistical methods in medical research}
\bvolume{11}
\bpages{91--115}.
\end{barticle}
\endbibitem

\bibitem[\protect\citeauthoryear{Andersen and Ravn}{2023}]{andersen2023models}
\begin{bbook}[author]
\bauthor{\bsnm{Andersen},~\bfnm{Per~Kragh}\binits{P.~K.}} \AND \bauthor{\bsnm{Ravn},~\bfnm{Henrik}\binits{H.}}
(\byear{2023}).
\btitle{Models for multi-state survival data: rates, risks, and pseudo-values}.
\bpublisher{CRC Press}.
\end{bbook}
\endbibitem

\bibitem[\protect\citeauthoryear{Andersen et~al.}{2012}]{andersen2012statistical}
\begin{bbook}[author]
\bauthor{\bsnm{Andersen},~\bfnm{Per~K}\binits{P.~K.}}, \bauthor{\bsnm{Borgan},~\bfnm{Ornulf}\binits{O.}}, \bauthor{\bsnm{Gill},~\bfnm{Richard~D}\binits{R.~D.}} \AND \bauthor{\bsnm{Keiding},~\bfnm{Niels}\binits{N.}}
(\byear{2012}).
\btitle{Statistical models based on counting processes}.
\bpublisher{Springer Science \& Business Media}.
\end{bbook}
\endbibitem

\bibitem[\protect\citeauthoryear{Aralis}{2016}]{aralis2016modeling}
\begin{bbook}[author]
\bauthor{\bsnm{Aralis},~\bfnm{Hilary~Jeanne}\binits{H.~J.}}
(\byear{2016}).
\btitle{Modeling multistate processes with back transitions: Statistical challenges and applications}.
\bpublisher{University of California, Los Angeles}.
\end{bbook}
\endbibitem

\bibitem[\protect\citeauthoryear{Aralis and Brookmeyer}{2019}]{aralis2019stochastic}
\begin{barticle}[author]
\bauthor{\bsnm{Aralis},~\bfnm{Hilary}\binits{H.}} \AND \bauthor{\bsnm{Brookmeyer},~\bfnm{Ron}\binits{R.}}
(\byear{2019}).
\btitle{A stochastic estimation procedure for intermittently-observed semi-Markov multistate models with back transitions}.
\bjournal{Statistical methods in medical research}
\bvolume{28}
\bpages{770--787}.
\end{barticle}
\endbibitem

\bibitem[\protect\citeauthoryear{Ara{\'u}jo, Meira-Machado and Roca-Pardi{\~n}as}{2015}]{araujo2015tpmsm}
\begin{barticle}[author]
\bauthor{\bsnm{Ara{\'u}jo},~\bfnm{Artur}\binits{A.}}, \bauthor{\bsnm{Meira-Machado},~\bfnm{Lu{\'\i}s}\binits{L.}} \AND \bauthor{\bsnm{Roca-Pardi{\~n}as},~\bfnm{Javier}\binits{J.}}
(\byear{2015}).
\btitle{TPmsm: Estimation of the transition probabilities in 3-state models}.
\bjournal{Journal of Statistical Software}
\bvolume{62}
\bpages{1--29}.
\end{barticle}
\endbibitem

\bibitem[\protect\citeauthoryear{Asanjarani, Liquet and Nazarathy}{2022}]{asanjarani2022estimation}
\begin{barticle}[author]
\bauthor{\bsnm{Asanjarani},~\bfnm{Azam}\binits{A.}}, \bauthor{\bsnm{Liquet},~\bfnm{Benoit}\binits{B.}} \AND \bauthor{\bsnm{Nazarathy},~\bfnm{Yoni}\binits{Y.}}
(\byear{2022}).
\btitle{Estimation of semi-Markov multi-state models: a comparison of the sojourn times and transition intensities approaches}.
\bjournal{The international journal of biostatistics}
\bvolume{18}
\bpages{243--262}.
\end{barticle}
\endbibitem

\bibitem[\protect\citeauthoryear{Billingsley}{2013}]{billingsley2013convergence}
\begin{bbook}[author]
\bauthor{\bsnm{Billingsley},~\bfnm{Patrick}\binits{P.}}
(\byear{2013}).
\btitle{Convergence of probability measures}.
\bpublisher{John Wiley \& Sons}.
\end{bbook}
\endbibitem

\bibitem[\protect\citeauthoryear{Bladt and Furrer}{2023}]{bladt2023conditional}
\begin{barticle}[author]
\bauthor{\bsnm{Bladt},~\bfnm{Martin}\binits{M.}} \AND \bauthor{\bsnm{Furrer},~\bfnm{Christian}\binits{C.}}
(\byear{2023}).
\btitle{Conditional Aalen--Johansen estimation}.
\bjournal{arXiv preprint arXiv:2303.02119}.
\end{barticle}
\endbibitem

\bibitem[\protect\citeauthoryear{Brookmeyer and Abdalla}{2019}]{brookmeyer2019multistate}
\begin{barticle}[author]
\bauthor{\bsnm{Brookmeyer},~\bfnm{Ron}\binits{R.}} \AND \bauthor{\bsnm{Abdalla},~\bfnm{Nada}\binits{N.}}
(\byear{2019}).
\btitle{Multistate models and lifetime risk estimation: Application to Alzheimer's disease}.
\bjournal{Statistics in medicine}
\bvolume{38}
\bpages{1558--1565}.
\end{barticle}
\endbibitem

\bibitem[\protect\citeauthoryear{Broström}{2024}]{brostrom2024eha}
\begin{bmanual}[author]
\bauthor{\bsnm{Broström},~\bfnm{Göran}\binits{G.}}
(\byear{2024}).
\btitle{eha: Event History Analysis}
\bnote{R package version 2.10.2}.
\end{bmanual}
\endbibitem

\bibitem[\protect\citeauthoryear{B{\"u}hler, Cook and Lawless}{2023}]{buhler2023multistate}
\begin{barticle}[author]
\bauthor{\bsnm{B{\"u}hler},~\bfnm{Alexandra}\binits{A.}}, \bauthor{\bsnm{Cook},~\bfnm{Richard~J}\binits{R.~J.}} \AND \bauthor{\bsnm{Lawless},~\bfnm{Jerald~F}\binits{J.~F.}}
(\byear{2023}).
\btitle{Multistate models as a framework for estimand specification in clinical trials of complex processes}.
\bjournal{Statistics in Medicine}
\bvolume{42}
\bpages{1368--1397}.
\end{barticle}
\endbibitem

\bibitem[\protect\citeauthoryear{{\c{C}}inlar}{1969}]{ccinlar1969semi}
\begin{binproceedings}[author]
\bauthor{\bsnm{{\c{C}}inlar},~\bfnm{Erhan}\binits{E.}}
(\byear{1969}).
\btitle{On semi-Markov processes on arbitrary spaces}.
In \bbooktitle{Mathematical Proceedings of the Cambridge Philosophical Society}
\bvolume{66}
\bpages{381--392}.
\bpublisher{Cambridge University Press}.
\end{binproceedings}
\endbibitem

\bibitem[\protect\citeauthoryear{Cinlar}{1969}]{cinlar1969markov}
\begin{barticle}[author]
\bauthor{\bsnm{Cinlar},~\bfnm{Erhan}\binits{E.}}
(\byear{1969}).
\btitle{Markov renewal theory}.
\bjournal{Advances in Applied Probability}
\bvolume{1}
\bpages{123--187}.
\end{barticle}
\endbibitem

\bibitem[\protect\citeauthoryear{Cinlar}{2013}]{cinlar2013introduction}
\begin{bbook}[author]
\bauthor{\bsnm{Cinlar},~\bfnm{Erhan}\binits{E.}}
(\byear{2013}).
\btitle{Introduction to stochastic processes}.
\bpublisher{Courier Corporation}.
\end{bbook}
\endbibitem

\bibitem[\protect\citeauthoryear{Cook and Lawless}{2018}]{cook2018multistate}
\begin{bbook}[author]
\bauthor{\bsnm{Cook},~\bfnm{Richard~J}\binits{R.~J.}} \AND \bauthor{\bsnm{Lawless},~\bfnm{Jerald~F}\binits{J.~F.}}
(\byear{2018}).
\btitle{Multistate models for the analysis of life history data}.
\bpublisher{Chapman and Hall/CRC}.
\end{bbook}
\endbibitem

\bibitem[\protect\citeauthoryear{Cui}{2022}]{cui2022tutorial}
\begin{barticle}[author]
\bauthor{\bsnm{Cui},~\bfnm{Elvis~Han}\binits{E.~H.}}
(\byear{2022}).
\btitle{A Tutorial on Statistical Models Based on Counting Processes}.
\bjournal{arXiv preprint arXiv:2210.07114}.
\end{barticle}
\endbibitem

\bibitem[\protect\citeauthoryear{Cui}{2024}]{cui2024tutorial}
\begin{barticle}[author]
\bauthor{\bsnm{Cui},~\bfnm{Elvis~Han}\binits{E.~H.}}
(\byear{2024}).
\btitle{A Tutorial on Brownian Motion for Biostatisticians}.
\bjournal{arXiv preprint arXiv:2408.16011}.
\end{barticle}
\endbibitem

\bibitem[\protect\citeauthoryear{Dabrowska}{1995}]{dabrowska1995estimation}
\begin{barticle}[author]
\bauthor{\bsnm{Dabrowska},~\bfnm{Dorota}\binits{D.}}
(\byear{1995}).
\btitle{Estimation of transition probabilities and bootstrap in a semiparametric Markov renewal model}.
\bjournal{Journal of Nonparametric Statistics}
\bvolume{5}
\bpages{237--259}.
\end{barticle}
\endbibitem

\bibitem[\protect\citeauthoryear{Dabrowska}{2006}]{dabrowska2006estimation2}
\begin{barticle}[author]
\bauthor{\bsnm{Dabrowska},~\bfnm{Dorota~M}\binits{D.~M.}}
(\byear{2006}).
\btitle{Estimation in a class of semiparametric transformation models}.
\bjournal{Lecture Notes-Monograph Series}
\bpages{131--169}.
\end{barticle}
\endbibitem

\bibitem[\protect\citeauthoryear{Dabrowska}{2009}]{dabrowska2009estimation}
\begin{barticle}[author]
\bauthor{\bsnm{Dabrowska},~\bfnm{Dorota~M}\binits{D.~M.}}
(\byear{2009}).
\btitle{Estimation in a semi-parametric two-stage renewal regression model}.
\bjournal{Statistica Sinica}
\bvolume{19}
\bpages{981}.
\end{barticle}
\endbibitem

\bibitem[\protect\citeauthoryear{Dabrowska}{2012}]{dabrowska2012estimation}
\begin{barticle}[author]
\bauthor{\bsnm{Dabrowska},~\bfnm{Dorota~M}\binits{D.~M.}}
(\byear{2012}).
\btitle{Estimation in a semi-Markov transformation model}.
\bjournal{The International Journal of Biostatistics}
\bvolume{8}.
\end{barticle}
\endbibitem

\bibitem[\protect\citeauthoryear{Dabrowska}{2014}]{dabrowska2014multivariate}
\begin{barticle}[author]
\bauthor{\bsnm{Dabrowska},~\bfnm{Dorota~M}\binits{D.~M.}}
(\byear{2014}).
\btitle{Multivariate Survival Analysis}.
\bjournal{Wiley StatsRef: Statistics Reference Online}
\bpages{1--12}.
\end{barticle}
\endbibitem

\bibitem[\protect\citeauthoryear{Dabrowska and Ho}{2006}]{dabrowska2006estimation}
\begin{barticle}[author]
\bauthor{\bsnm{Dabrowska},~\bfnm{Dorota~M}\binits{D.~M.}} \AND \bauthor{\bsnm{Ho},~\bfnm{Wai~Tung}\binits{W.~T.}}
(\byear{2006}).
\btitle{Estimation in a semiparametric modulated renewal process}.
\bjournal{Statistica Sinica}
\bpages{93--119}.
\end{barticle}
\endbibitem

\bibitem[\protect\citeauthoryear{Dabrowska and Lee}{1996}]{dabrowska1996nonparametric}
\begin{barticle}[author]
\bauthor{\bsnm{Dabrowska},~\bfnm{Dorota~M}\binits{D.~M.}} \AND \bauthor{\bsnm{Lee},~\bfnm{Waidong}\binits{W.}}
(\byear{1996}).
\btitle{Nonparametric estimation of transition probabilities in a two-stage duration model}.
\bjournal{Journal of Nonparametric Statistics}
\bvolume{7}
\bpages{75--103}.
\end{barticle}
\endbibitem

\bibitem[\protect\citeauthoryear{Dabrowska, Sun and Horowitz}{1994}]{dabrowska1994cox}
\begin{barticle}[author]
\bauthor{\bsnm{Dabrowska},~\bfnm{Dorota~M}\binits{D.~M.}}, \bauthor{\bsnm{Sun},~\bfnm{Guo-wen}\binits{G.-w.}} \AND \bauthor{\bsnm{Horowitz},~\bfnm{Mary~M}\binits{M.~M.}}
(\byear{1994}).
\btitle{Cox regression in a Markov renewal model: an application to the analysis of bone marrow transplant data}.
\bjournal{Journal of the American Statistical Association}
\bvolume{89}
\bpages{867--877}.
\end{barticle}
\endbibitem

\bibitem[\protect\citeauthoryear{Datta and Satten}{2002}]{datta2002estimation}
\begin{barticle}[author]
\bauthor{\bsnm{Datta},~\bfnm{Somnath}\binits{S.}} \AND \bauthor{\bsnm{Satten},~\bfnm{Glen~A}\binits{G.~A.}}
(\byear{2002}).
\btitle{Estimation of integrated transition hazards and stage occupation probabilities for non-Markov systems under dependent censoring}.
\bjournal{Biometrics}
\bvolume{58}
\bpages{792--802}.
\end{barticle}
\endbibitem

\bibitem[\protect\citeauthoryear{de~Wreede, Fiocco and Putter}{2011}]{de2011mstate}
\begin{barticle}[author]
\bauthor{\bparticle{de} \bsnm{Wreede},~\bfnm{Liesbeth~C}\binits{L.~C.}}, \bauthor{\bsnm{Fiocco},~\bfnm{Marta}\binits{M.}} \AND \bauthor{\bsnm{Putter},~\bfnm{Hein}\binits{H.}}
(\byear{2011}).
\btitle{mstate: an R package for the analysis of competing risks and multi-state models}.
\bjournal{Journal of statistical software}
\bvolume{38}
\bpages{1--30}.
\end{barticle}
\endbibitem

\bibitem[\protect\citeauthoryear{Gill}{1980}]{gill1980nonparametric}
\begin{barticle}[author]
\bauthor{\bsnm{Gill},~\bfnm{Richard~D}\binits{R.~D.}}
(\byear{1980}).
\btitle{Nonparametric estimation based on censored observations of a Markov renewal process}.
\bjournal{Zeitschrift f{\"u}r Wahrscheinlichkeitstheorie und verwandte Gebiete}
\bvolume{53}
\bpages{97--116}.
\end{barticle}
\endbibitem

\bibitem[\protect\citeauthoryear{Gill}{1981}]{gill1981testing}
\begin{barticle}[author]
\bauthor{\bsnm{Gill},~\bfnm{RD}\binits{R.}}
(\byear{1981}).
\btitle{Testing with replacement and the product limit estimator}.
\bjournal{The Annals of Statistics}
\bvolume{9}
\bpages{853--860}.
\end{barticle}
\endbibitem

\bibitem[\protect\citeauthoryear{Gill and Johansen}{1990}]{gill1990survey}
\begin{barticle}[author]
\bauthor{\bsnm{Gill},~\bfnm{Richard~D}\binits{R.~D.}} \AND \bauthor{\bsnm{Johansen},~\bfnm{S{\o}ren}\binits{S.}}
(\byear{1990}).
\btitle{A survey of product-integration with a view toward application in survival analysis}.
\bjournal{The annals of statistics}
\bpages{1501--1555}.
\end{barticle}
\endbibitem

\bibitem[\protect\citeauthoryear{Gill, Wellner and Pr{\ae}stgaard}{1989}]{gill1989non}
\begin{barticle}[author]
\bauthor{\bsnm{Gill},~\bfnm{Richard~D}\binits{R.~D.}}, \bauthor{\bsnm{Wellner},~\bfnm{Jon~A}\binits{J.~A.}} \AND \bauthor{\bsnm{Pr{\ae}stgaard},~\bfnm{Jens}\binits{J.}}
(\byear{1989}).
\btitle{Non-and semi-parametric maximum likelihood estimators and the von mises method (part 1)[with discussion and reply]}.
\bjournal{Scandinavian Journal of Statistics}
\bpages{97--128}.
\end{barticle}
\endbibitem

\bibitem[\protect\citeauthoryear{Gray and Gray}{2014}]{gray2014package}
\begin{barticle}[author]
\bauthor{\bsnm{Gray},~\bfnm{Bob}\binits{B.}} \AND \bauthor{\bsnm{Gray},~\bfnm{Maintainer~Bob}\binits{M.~B.}}
(\byear{2014}).
\btitle{Package ‘cmprsk’}.
\bjournal{Subdistribution analysis of competing risks. R package version}
\bvolume{2}
\bpages{2--7}.
\end{barticle}
\endbibitem

\bibitem[\protect\citeauthoryear{Hoff et~al.}{2019}]{hoff2019landmark}
\begin{barticle}[author]
\bauthor{\bsnm{Hoff},~\bfnm{Rune}\binits{R.}}, \bauthor{\bsnm{Putter},~\bfnm{Hein}\binits{H.}}, \bauthor{\bsnm{Mehlum},~\bfnm{Ingrid~Sivesind}\binits{I.~S.}} \AND \bauthor{\bsnm{Gran},~\bfnm{Jon~Michael}\binits{J.~M.}}
(\byear{2019}).
\btitle{Landmark estimation of transition probabilities in non-Markov multi-state models with covariates}.
\bjournal{Lifetime data analysis}
\bvolume{25}
\bpages{660--680}.
\end{barticle}
\endbibitem

\bibitem[\protect\citeauthoryear{Hougaard}{1999}]{hougaard1999multi}
\begin{barticle}[author]
\bauthor{\bsnm{Hougaard},~\bfnm{Philip}\binits{P.}}
(\byear{1999}).
\btitle{Multi-state models: a review}.
\bjournal{Lifetime data analysis}
\bvolume{5}
\bpages{239--264}.
\end{barticle}
\endbibitem

\bibitem[\protect\citeauthoryear{Jackson}{2011}]{jackson2011multi}
\begin{barticle}[author]
\bauthor{\bsnm{Jackson},~\bfnm{Christopher}\binits{C.}}
(\byear{2011}).
\btitle{Multi-state models for panel data: the msm package for R}.
\bjournal{Journal of statistical software}
\bvolume{38}
\bpages{1--28}.
\end{barticle}
\endbibitem

\bibitem[\protect\citeauthoryear{Jackson}{2016}]{jackson2016flexsurv}
\begin{barticle}[author]
\bauthor{\bsnm{Jackson},~\bfnm{Christopher~H}\binits{C.~H.}}
(\byear{2016}).
\btitle{flexsurv: a platform for parametric survival modeling in R}.
\bjournal{Journal of statistical software}
\bvolume{70}.
\end{barticle}
\endbibitem

\bibitem[\protect\citeauthoryear{Johansen}{1983}]{johansen1983extension}
\begin{barticle}[author]
\bauthor{\bsnm{Johansen},~\bfnm{S{\o}ren}\binits{S.}}
(\byear{1983}).
\btitle{An extension of Cox's regression model}.
\bjournal{International Statistical Review/Revue Internationale de Statistique}
\bpages{165--174}.
\end{barticle}
\endbibitem

\bibitem[\protect\citeauthoryear{Kassambara et~al.}{2023}]{kassambara2023survminer}
\begin{barticle}[author]
\bauthor{\bsnm{Kassambara},~\bfnm{Alboukadel}\binits{A.}} \betal{et~al.}
(\byear{2023}).
\btitle{\texttt{survminer}: An R package for survival analysis and visualization}.
\bjournal{The Comprehensive R Archive Network (CRAN)}.
\end{barticle}
\endbibitem

\bibitem[\protect\citeauthoryear{Kruger et~al.}{2015}]{kruger2015semimarkov}
\begin{barticle}[author]
\bauthor{\bsnm{Kruger},~\bfnm{Florent}\binits{F.}} \betal{et~al.}
(\byear{2015}).
\btitle{\texttt{SemiMarkov}: An R package for parametric estimation in multi-state semi-Markov models}.
\bjournal{The R Journal}
\bvolume{7}
\bpages{1--10}.
\end{barticle}
\endbibitem

\bibitem[\protect\citeauthoryear{Lucas, Monteiro and Smirnov}{2006}]{lucas2006nonparametric}
\begin{btechreport}[author]
\bauthor{\bsnm{Lucas},~\bfnm{Andre}\binits{A.}}, \bauthor{\bsnm{Monteiro},~\bfnm{Andre~A.}\binits{A.~A.}} \AND \bauthor{\bsnm{Smirnov},~\bfnm{Georgi~V.}\binits{G.~V.}}
(\byear{2006}).
\btitle{Nonparametric Estimation for Non-Homogeneous Semi-Markov Processes: An Application to Credit Risk}
\btype{Discussion Paper} No. \bnumber{06-024/2},
\bpublisher{Tinbergen Institute}.
\bnote{43 pages}.
\end{btechreport}
\endbibitem

\bibitem[\protect\citeauthoryear{Maltzahn et~al.}{2020}]{maltzahn2020hybrid}
\begin{barticle}[author]
\bauthor{\bsnm{Maltzahn},~\bfnm{N}\binits{N.}}, \bauthor{\bsnm{Hoff},~\bfnm{R}\binits{R.}}, \bauthor{\bsnm{Aalen},~\bfnm{OO}\binits{O.}}, \bauthor{\bsnm{Mehlum},~\bfnm{IS}\binits{I.}}, \bauthor{\bsnm{Putter},~\bfnm{H}\binits{H.}} \AND \bauthor{\bsnm{Gran},~\bfnm{JM}\binits{J.}}
(\byear{2020}).
\btitle{A hybrid landmark Aalen-Johansen estimator for transition probabilities in partially non-Markov multi-state models}.
\bjournal{arXiv preprint arXiv:2007.00974}.
\end{barticle}
\endbibitem

\bibitem[\protect\citeauthoryear{Meira-Machado et~al.}{2009}]{meira2009multi}
\begin{barticle}[author]
\bauthor{\bsnm{Meira-Machado},~\bfnm{Lu{\'\i}s}\binits{L.}}, \bauthor{\bparticle{de} \bsnm{U{\~n}a-{\'A}lvarez},~\bfnm{Jacobo}\binits{J.}}, \bauthor{\bsnm{Cadarso-Su{\'a}rez},~\bfnm{Carmen}\binits{C.}} \AND \bauthor{\bsnm{Andersen},~\bfnm{Per~K}\binits{P.~K.}}
(\byear{2009}).
\btitle{Multi-state models for the analysis of time-to-event data}.
\bjournal{Statistical methods in medical research}
\bvolume{18}
\bpages{195--222}.
\end{barticle}
\endbibitem

\bibitem[\protect\citeauthoryear{Moore and Pyke}{1968}]{moore1968estimation}
\begin{barticle}[author]
\bauthor{\bsnm{Moore},~\bfnm{Erin~H}\binits{E.~H.}} \AND \bauthor{\bsnm{Pyke},~\bfnm{Ronald}\binits{R.}}
(\byear{1968}).
\btitle{Estimation of the transition distributions of a Markov renewal process}.
\bjournal{Annals of the Institute of Statistical Mathematics}
\bvolume{20}
\bpages{411--424}.
\end{barticle}
\endbibitem

\bibitem[\protect\citeauthoryear{Morsomme et~al.}{2025}]{morsomme2025assessing}
\begin{barticle}[author]
\bauthor{\bsnm{Morsomme},~\bfnm{Raphael}\binits{R.}}, \bauthor{\bsnm{Liang},~\bfnm{C~Jason}\binits{C.~J.}}, \bauthor{\bsnm{Mateja},~\bfnm{Allyson}\binits{A.}}, \bauthor{\bsnm{Follmann},~\bfnm{Dean~A}\binits{D.~A.}}, \bauthor{\bsnm{O'Brien},~\bfnm{Meagan~P}\binits{M.~P.}}, \bauthor{\bsnm{Wang},~\bfnm{Chenguang}\binits{C.}} \AND \bauthor{\bsnm{Fintzi},~\bfnm{Jonathan}\binits{J.}}
(\byear{2025}).
\btitle{Assessing treatment efficacy for interval-censored endpoints using multistate semi-Markov models fit to multiple data streams}.
\bjournal{arXiv preprint arXiv:2501.14097}.
\end{barticle}
\endbibitem

\bibitem[\protect\citeauthoryear{Putter}{2011}]{putter2011tutorial}
\begin{barticle}[author]
\bauthor{\bsnm{Putter},~\bfnm{Hein}\binits{H.}}
(\byear{2011}).
\btitle{Tutorial in biostatistics: Competing risks and multi-state models Analyses using the mstate package}.
\bjournal{Companion file for the mstate package}
\bpages{145}.
\end{barticle}
\endbibitem

\bibitem[\protect\citeauthoryear{Putter, Fiocco and Geskus}{2007}]{putter2007tutorial}
\begin{barticle}[author]
\bauthor{\bsnm{Putter},~\bfnm{Hein}\binits{H.}}, \bauthor{\bsnm{Fiocco},~\bfnm{Marta}\binits{M.}} \AND \bauthor{\bsnm{Geskus},~\bfnm{Ronald~B}\binits{R.~B.}}
(\byear{2007}).
\btitle{Tutorial in biostatistics: competing risks and multi-state models}.
\bjournal{Statistics in medicine}
\bvolume{26}
\bpages{2389--2430}.
\end{barticle}
\endbibitem

\bibitem[\protect\citeauthoryear{Pyke}{1961a}]{pyke1961markov}
\begin{barticle}[author]
\bauthor{\bsnm{Pyke},~\bfnm{Ronald}\binits{R.}}
(\byear{1961}a).
\btitle{Markov renewal processes with finitely many states}.
\bjournal{The Annals of Mathematical Statistics}
\bpages{1243--1259}.
\end{barticle}
\endbibitem

\bibitem[\protect\citeauthoryear{Pyke}{1961b}]{pyke1961markov2}
\begin{barticle}[author]
\bauthor{\bsnm{Pyke},~\bfnm{Ronald}\binits{R.}}
(\byear{1961}b).
\btitle{Markov renewal processes: definitions and preliminary properties}.
\bjournal{The Annals of Mathematical Statistics}
\bpages{1231--1242}.
\end{barticle}
\endbibitem

\bibitem[\protect\citeauthoryear{Saccardi et~al.}{2023}]{saccardi2023benchmarking}
\begin{barticle}[author]
\bauthor{\bsnm{Saccardi},~\bfnm{Riccardo}\binits{R.}}, \bauthor{\bsnm{Putter},~\bfnm{Hein}\binits{H.}}, \bauthor{\bsnm{Eikema},~\bfnm{Dirk-Jan}\binits{D.-J.}}, \bauthor{\bsnm{Busto},~\bfnm{Mar{\'\i}a~Paula}\binits{M.~P.}}, \bauthor{\bsnm{McGrath},~\bfnm{Eoin}\binits{E.}}, \bauthor{\bsnm{Middelkoop},~\bfnm{Bas}\binits{B.}}, \bauthor{\bsnm{Adams},~\bfnm{Gillian}\binits{G.}}, \bauthor{\bsnm{Atlija},~\bfnm{Marina}\binits{M.}}, \bauthor{\bsnm{Ayuk},~\bfnm{Francis~Ayuketang}\binits{F.~A.}}, \bauthor{\bsnm{Baldomero},~\bfnm{Helen}\binits{H.}} \betal{et~al.}
(\byear{2023}).
\btitle{Benchmarking of survival outcomes following Haematopoietic Stem Cell Transplantation (HSCT): an update of the ongoing project of the European Society for Blood and Marrow Transplantation (EBMT) and Joint Accreditation Committee of ISCT and EBMT (JACIE)}.
\bjournal{Bone Marrow Transplantation}
\bvolume{58}
\bpages{659--666}.
\end{barticle}
\endbibitem

\bibitem[\protect\citeauthoryear{Satten and Sternberg}{1999}]{satten1999fitting}
\begin{barticle}[author]
\bauthor{\bsnm{Satten},~\bfnm{Glen~A}\binits{G.~A.}} \AND \bauthor{\bsnm{Sternberg},~\bfnm{Maya~R}\binits{M.~R.}}
(\byear{1999}).
\btitle{Fitting semi-Markov models to interval-censored data with unknown initiation times}.
\bjournal{Biometrics}
\bvolume{55}
\bpages{507--513}.
\end{barticle}
\endbibitem

\bibitem[\protect\citeauthoryear{Scheike and Zhang}{2011}]{scheike2011analyzing}
\begin{barticle}[author]
\bauthor{\bsnm{Scheike},~\bfnm{Thomas~H}\binits{T.~H.}} \AND \bauthor{\bsnm{Zhang},~\bfnm{Mei-Jie}\binits{M.-J.}}
(\byear{2011}).
\btitle{Analyzing competing risk data using the R timereg package}.
\bjournal{Journal of statistical software}
\bvolume{38}.
\end{barticle}
\endbibitem

\bibitem[\protect\citeauthoryear{Shorack and Wellner}{2009}]{shorack2009empirical}
\begin{bbook}[author]
\bauthor{\bsnm{Shorack},~\bfnm{Galen~R}\binits{G.~R.}} \AND \bauthor{\bsnm{Wellner},~\bfnm{Jon~A}\binits{J.~A.}}
(\byear{2009}).
\btitle{Empirical processes with applications to statistics}.
\bpublisher{SIAM}.
\end{bbook}
\endbibitem

\bibitem[\protect\citeauthoryear{Shu, Klein and Zhang}{2007}]{shu2007asymptotic}
\begin{barticle}[author]
\bauthor{\bsnm{Shu},~\bfnm{Youyi}\binits{Y.}}, \bauthor{\bsnm{Klein},~\bfnm{John~P}\binits{J.~P.}} \AND \bauthor{\bsnm{Zhang},~\bfnm{Mei-Jie}\binits{M.-J.}}
(\byear{2007}).
\btitle{Asymptotic theory for the Cox semi-Markov illness-death model}.
\bjournal{Lifetime Data Analysis}
\bvolume{13}
\bpages{91--117}.
\end{barticle}
\endbibitem

\bibitem[\protect\citeauthoryear{Spezia et~al.}{2022}]{spezia2022hhsmm}
\begin{barticle}[author]
\bauthor{\bsnm{Spezia},~\bfnm{Luigi}\binits{L.}} \betal{et~al.}
(\byear{2022}).
\btitle{An R package for hidden hybrid Markov/semi-Markov models}.
\bjournal{The R Journal}
\bvolume{14}
\bpages{1--20}.
\end{barticle}
\endbibitem

\bibitem[\protect\citeauthoryear{Spitoni, Verduijn and Putter}{2012}]{spitoni2012estimation}
\begin{barticle}[author]
\bauthor{\bsnm{Spitoni},~\bfnm{Cristian}\binits{C.}}, \bauthor{\bsnm{Verduijn},~\bfnm{Marion}\binits{M.}} \AND \bauthor{\bsnm{Putter},~\bfnm{Hein}\binits{H.}}
(\byear{2012}).
\btitle{Estimation and asymptotic theory for transition probabilities in Markov renewal multi-state models}.
\bjournal{The International Journal of Biostatistics}
\bvolume{8}.
\end{barticle}
\endbibitem

\bibitem[\protect\citeauthoryear{Sun}{1992}]{sun1992markov}
\begin{bbook}[author]
\bauthor{\bsnm{Sun},~\bfnm{Guo-Wen}\binits{G.-W.}}
(\byear{1992}).
\btitle{Markov renewal proportional hazards model for longitudinal survival data}.
\bpublisher{University of California, Los Angeles}.
\end{bbook}
\endbibitem

\bibitem[\protect\citeauthoryear{Taga}{1963}]{taga1963limiting}
\begin{barticle}[author]
\bauthor{\bsnm{Taga},~\bfnm{Yasushi}\binits{Y.}}
(\byear{1963}).
\btitle{On the limiting distributions in Markov renewal processes with finitely many states}.
\bjournal{Annals of the Institute of Statistical Mathematics}
\bvolume{15}
\bpages{1--10}.
\end{barticle}
\endbibitem

\bibitem[\protect\citeauthoryear{Therneau}{2015}]{therneau2015package}
\begin{barticle}[author]
\bauthor{\bsnm{Therneau},~\bfnm{Terry}\binits{T.}}
(\byear{2015}).
\btitle{A package for survival analysis in S}.
\bjournal{R package version}
\bvolume{2}
\bpages{2014}.
\end{barticle}
\endbibitem

\bibitem[\protect\citeauthoryear{Therneau}{2020}]{therneau2020survival}
\begin{bbook}[author]
\bauthor{\bsnm{Therneau},~\bfnm{Terry~M}\binits{T.~M.}}
(\byear{2020}).
\btitle{\texttt{survival}: A package for survival analysis}.
\bpublisher{R Foundation}.
\end{bbook}
\endbibitem

\bibitem[\protect\citeauthoryear{Therneau and Ou}{2024}]{therneau2024using}
\begin{barticle}[author]
\bauthor{\bsnm{Therneau},~\bfnm{Terry~M}\binits{T.~M.}} \AND \bauthor{\bsnm{Ou},~\bfnm{Fang-Shu}\binits{F.-S.}}
(\byear{2024}).
\btitle{Using multistate models with clinical trial data for a deeper understanding of complex disease processes}.
\bjournal{Clinical Trials}
\bvolume{21}
\bpages{531--540}.
\end{barticle}
\endbibitem

\bibitem[\protect\citeauthoryear{Titman and Sharples}{2010}]{titman2010semi}
\begin{barticle}[author]
\bauthor{\bsnm{Titman},~\bfnm{Andrew~C}\binits{A.~C.}} \AND \bauthor{\bsnm{Sharples},~\bfnm{Linda~D}\binits{L.~D.}}
(\byear{2010}).
\btitle{Semi-Markov models with phase-type sojourn distributions}.
\bjournal{Biometrics}
\bvolume{66}
\bpages{742--752}.
\end{barticle}
\endbibitem

\bibitem[\protect\citeauthoryear{Vaart and Wellner}{2023}]{vaart2023empirical}
\begin{bincollection}[author]
\bauthor{\bsnm{Vaart},~\bfnm{AW~van~der}\binits{A.~v.~d.}} \AND \bauthor{\bsnm{Wellner},~\bfnm{Jon~A}\binits{J.~A.}}
(\byear{2023}).
\btitle{Empirical processes}.
In \bbooktitle{Weak Convergence and Empirical Processes: With Applications to Statistics}
\bpages{127--384}.
\bpublisher{Springer}.
\end{bincollection}
\endbibitem

\bibitem[\protect\citeauthoryear{Vasquez et~al.}{2024}]{vasquez2024multistate}
\begin{barticle}[author]
\bauthor{\bsnm{Vasquez},~\bfnm{Jonathan~KJ}\binits{J.~K.}}, \bauthor{\bsnm{Molina},~\bfnm{Katy~C}\binits{K.~C.}}, \bauthor{\bsnm{Tomazella},~\bfnm{Vera}\binits{V.}}, \bauthor{\bsnm{Diniz},~\bfnm{Carlos~A}\binits{C.~A.}} \AND \bauthor{\bsnm{Suzuki},~\bfnm{Adriano~K}\binits{A.~K.}}
(\byear{2024}).
\btitle{Multistate models with nested frailty for lifetime analysis: Application to bone marrow transplantation recovery patients}.
\bjournal{Communications in Statistics-Theory and Methods}
\bpages{1--19}.
\end{barticle}
\endbibitem

\bibitem[\protect\citeauthoryear{Voelkel and Crowley}{1984}]{voelkel1984nonparametric}
\begin{barticle}[author]
\bauthor{\bsnm{Voelkel},~\bfnm{Joseph~G}\binits{J.~G.}} \AND \bauthor{\bsnm{Crowley},~\bfnm{John}\binits{J.}}
(\byear{1984}).
\btitle{Nonparametric inference for a class of semi-Markov processes with censored observations}.
\bjournal{The Annals of Statistics}
\bpages{142--160}.
\end{barticle}
\endbibitem

\bibitem[\protect\citeauthoryear{Weiss and Zelen}{1965}]{weiss1965semi}
\begin{barticle}[author]
\bauthor{\bsnm{Weiss},~\bfnm{George~H}\binits{G.~H.}} \AND \bauthor{\bsnm{Zelen},~\bfnm{Marvin}\binits{M.}}
(\byear{1965}).
\btitle{A semi-Markov model for clinical trials}.
\bjournal{Journal of Applied Probability}
\bvolume{2}
\bpages{269--285}.
\end{barticle}
\endbibitem

\bibitem[\protect\citeauthoryear{Wilkinson and Engelhardt}{2017}]{wilkinson2017ctmcd}
\begin{barticle}[author]
\bauthor{\bsnm{Wilkinson},~\bfnm{Daryl}\binits{D.}} \AND \bauthor{\bsnm{Engelhardt},~\bfnm{Barbara~E}\binits{B.~E.}}
(\byear{2017}).
\btitle{An R package for estimating the Markov chain from discrete-time data}.
\bjournal{Journal of Statistical Software}
\bvolume{81}
\bpages{1--17}.
\end{barticle}
\endbibitem

\bibitem[\protect\citeauthoryear{Yang and Nair}{2011}]{yang2011parametric}
\begin{barticle}[author]
\bauthor{\bsnm{Yang},~\bfnm{Yang}\binits{Y.}} \AND \bauthor{\bsnm{Nair},~\bfnm{Vijayan~N}\binits{V.~N.}}
(\byear{2011}).
\btitle{Parametric inference for time-to-failure in multi-state semi-Markov models: A comparison of marginal and process approaches}.
\bjournal{Canadian Journal of Statistics}
\bvolume{39}
\bpages{537--555}.
\end{barticle}
\endbibitem

\bibitem[\protect\citeauthoryear{Zhao, Zhang and Iannacchione}{2012}]{zhao2012mssurv}
\begin{barticle}[author]
\bauthor{\bsnm{Zhao},~\bfnm{Lingling}\binits{L.}}, \bauthor{\bsnm{Zhang},~\bfnm{Ya}\binits{Y.}} \AND \bauthor{\bsnm{Iannacchione},~\bfnm{Vincent~G}\binits{V.~G.}}
(\byear{2012}).
\btitle{\texttt{msSurv}: An R package for nonparametric estimation of multistate models}.
\bjournal{Journal of Statistical Software}
\bvolume{50}
\bpages{1--28}.
\end{barticle}
\endbibitem

\end{thebibliography}

\end{document}